\documentclass[a4paper,fleqn]{cas-sc}

\usepackage[numbers]{natbib}

\usepackage{amsmath,amssymb,amsfonts,mathtools}
\usepackage{amsthm}
\usepackage{bm}

\usepackage{graphicx}
\usepackage{subcaption}
\usepackage{float}

\usepackage{booktabs}
\usepackage{tabularx}
\usepackage{array}
\usepackage{multirow}
\usepackage{adjustbox}

\usepackage{xspace}
\usepackage{enumitem}
\usepackage{pifont}
\usepackage[normalem]{ulem}
\usepackage{soul}
\usepackage{stackengine}

\usepackage{tikz}
\usetikzlibrary{arrows.meta,positioning,fit,calc,backgrounds}

\newtheorem{theorem}{Theorem}[section]
\newtheorem{proposition}[theorem]{Proposition}
\newtheorem{lemma}[theorem]{Lemma}
\newtheorem{corollary}[theorem]{Corollary}

\theoremstyle{definition}
\newtheorem{definition}[theorem]{Definition}
\newtheorem{example}[theorem]{Example}

\theoremstyle{remark}
\newtheorem{remark}[theorem]{Remark}

\newcommand{\NN}{\ensuremath{\mathbb{N}}\xspace}
\newcommand{\ZZ}{\ensuremath{\mathbb{Z}}\xspace}

\newcommand{\RR}{\ensuremath{\mathbb{R}}\xspace}

\DeclareMathOperator{\Dom}{Dom}

\newcommand{\TP}{\ensuremath{T_{\bm{P}}}\xspace}

\newcommand{\TM}{\ensuremath{T_{\bm{M}}}\xspace}
\newcommand{\TLK}{\ensuremath{T_{\bm{LK}}}\xspace}

\newcommand{\TA}{\ensuremath{T_A}\xspace}

\newcommand{\TB}{\ensuremath{T_B}\xspace}

\definecolor{asparagus}{rgb}{0.53, 0.66, 0.42}

\definecolor{darkamber}{HTML}{8C4A16}

\begin{document}

\let\WriteBookmarks\relax

\def\floatpagepagefraction{1}
\def\textpagefraction{.001}

\shorttitle{Reconstruction of continuous Archimedean t-norms}

\shortauthors{Bauza-Llompart et al.}

\title[mode=title]{
Reconstruction of continuous Archimedean t-norms from their section along an increasing curve
}


\author[1,3,4]{Gabriel Bauza-Llompart}
\ead{[gabriel.bauza@uib.cat](mailto:gabriel.bauza@uib.cat)}
\cormark[1]

\author[2]{Raquel Fernandez-Peralta}
\ead{[raquel.fernandez@mat.savba.sk](mailto:raquel.fernandez@mat.savba.sk)}

\author[1,3,4]{Marc Munar}
\ead{[marc.munar@uib.es](mailto:marc.munar@uib.es)}


\affiliation[1]{
organization={
Soft Computing, Image Processing and Aggregation (SCOPIA)
Research Group, Department of Mathematics and Computer Science,
University of the Balearic Islands
},
addressline={Ctra. de Valldemossa, km 7.5},
city={Palma},
postcode={07122},
country={Spain}
}

\affiliation[2]{
organization={
Mathematical Institute, Slovak Academy of Sciences
},
city={Bratislava},
country={Slovakia}
}

\affiliation[3]{
organization={
Balearic Islands Health Research Institute (IdISBa)
},
city={Palma},
postcode={07010},
country={Spain}
}

\affiliation[4]{
organization={
Artificial Intelligence Research Institute
of the Balearic Islands (IAIB)
},
city={Palma},
postcode={07122},
country={Spain}
}

\cortext[1]{Corresponding author.}

\begin{abstract}
The completion problem for continuous and Archimedean triangular norms (t-norms
for short) consists in describing all the operators that take predefined values
on a given subset of the unit square, and it reveals how much of a t-norm is
determined by partial information. In this paper, we focus on the case where this
subset is a parametric curve. We first formulate the problem in terms of the
functional equation satisfied by an additive generator, and we then restrict our
attention to curves of the form $(x,p(x))$ with $p$ continuous and increasing,
which include the classical diagonal and horizontal section cases. We determine
when the known values are compatible with a continuous and Archimedean t-norm,
and we show that every completion is obtained by choosing the additive generator
on a subinterval and reconstructing it elsewhere from the values on the curve.
This description is recursive in general and becomes a closed formula under
additional conditions, recovering the two classical results as particular cases.
The completions are not unique, and we give examples with infinitely many
t-norms sharing the same section along a curve. Finally, we combine our results
with the reconstruction of a t-norm from its values above a level curve, which
determines exactly the additive generator on that subinterval, so that the two
pieces of information are complementary and together determine a unique
completion.
\end{abstract}

\begin{keywords}
Archimedean t-norm
\sep Completion problem
\sep Additive generator
\sep Parametric curve
\sep Functional equation
\end{keywords}

\maketitle

\section{Introduction}
In many areas of mathematics, algebraic or analytic objects are not given globally, but through their behavior on restricted subsets of their domain. This scheme appears across diverse contexts: in problems where a function given only on a subset admits a global extension compatible with certain regularity or structural constraints (for instance, in classical results in topology, measure theory, or probability such as Tietze, Carathéodory or Kolmogorov extension theorems) \cite{Dudley2002}; in functional equations, where identities may be assumed only on subsets of the usual domain \cite{Kuczma1978,Sander2005}; in algebraic structures, where partially defined operations are extended to total operations satisfying certain axioms  \cite{Goralcik2006,Gowers2020,Ljapin1988}; or in inverse and reconstruction problems, where incomplete yet structured data suffice to recover complete models \cite{Isakov2017}. A unifying theme across these problems is that limited initial information, when constrained by strong structural requirements, can be propagated to yield full characterizations, sometimes even enabling explicit reconstruction.

In particular, consider the completion problem for associative operations: given a binary operation $F$ defined only on a subset $D \subseteq S \times S$ that satisfies associativity wherever defined, can $F$ be extended to all of $S \times S$ while preserving associativity? For continuous operations on real intervals, the classical theory of the associativity equation $F(F(x,y),z) = F(x,F(y,z))$
provides key tools. Aczél showed that continuous, strictly monotone solutions admit the representation $F(x,y) = f^{-1}(f(x) + f(y))$ via a generator function $f$ \cite{Aczel1966}. Such a representation reduces the inherently two-variable associativity constraint to a one-dimensional additive structure, greatly simplifying both analysis and construction. From the perspective of the completion problem, these generator representations are particularly useful: if a partially defined associative operation can be expressed via a generator on its domain, then extending the generator in a compatible way yields an associative extension of the original operation \cite{Alsina2006}.

An important setting in which these ideas appear is the theory of triangular norms (t-norms for short). A t-norm is a binary operation $T:[0,1]^2 \to [0,1]$ that is commutative, associative, monotone and has $1$ as a neutral element. They were introduced by Menger in the context of probabilistic metric spaces \cite{Menger1942} and were later axiomatized by Schweizer and Sklar \cite{Schweizer1961}. Today, t-norms play a central role in fuzzy logic, aggregation theory, and a wide range of applications, including information theory, risk analysis, and decision making \cite{eiter2025tnorm}.

The completion problem for t-norms is a specific case of the associativity equation with further conditions, which, similarly to Aczél's result, also admits a representation in terms of an additive (or multiplicative) generator when the operator is Archimedean \cite{Klement2000}. Further, continuous t-norms are completely characterized in terms of ordinal sums of Archimedean t-norms. Thanks to these characterizations, the completion problem can be reduced to the study of functional equations (which may be defined in partial domains) in the Archimedean case \cite{Alsina2006}, which can then be leveraged for the continuous case via the ordinal sum construction. The completion problem has been extensively studied for various subsets of $[0,1]^2$, producing two main types of results: (i) completion results that provide explicit reconstructions of the operator, which may or may not be unique; and (ii) uniqueness results that determine whether a completion, when it exists, is unique or not (see \cite{Fernandez-Peralta2019,Fernandez-PeraltaSub} for more information). Two canonical examples are when the t-norm is known on the diagonal $\{(x,x): x \in [0,1]\}$ or on a horizontal section $\{(x,\alpha): x \in [0,1]\}$ for some fixed $\alpha \in (0,1)$. In these cases, the completion problem reduces to solving Schröder's and Abel's functional equations, respectively, yielding explicit but non-unique solutions that depend on the arbitrary choice of a certain part of the corresponding additive generator. Beyond these classical results, the completion problem for t-norms known along arbitrary parametric curves has received little attention in the literature.

In this paper, we revisit the completion problem and study the case where a continuous and Archimedean t-norm is known only along a parametric curve $\gamma\colon I \subseteq [0,1] \to [0,1]^2$. After a general discussion of the problem for arbitrary $\gamma$, we focus on the case $\gamma(\theta)=(\theta,p(\theta))$ with $p$ a continuous increasing function, which naturally encompasses the classical diagonal and horizontal section cases as special instances. For this setting, we derive descriptions of the additive generators of all possible completions, which are recursive in general and become explicit under additional conditions on $p$, recovering in particular the classical results in both cases. As in those cases, these completions are not unique, since they depend on the choice of the restriction of the generator to an interval $[a,1]$. Nonetheless, our results can be combined with existing results for other regions to yield explicit, unique reconstructions. In particular, by integrating our approach with the reconstruction of a pre-t-norm known above a level curve \cite{Fernandez-PeraltaSub}, we prove that choosing this restriction of the generator is equivalent to knowing the operator above a level curve of value $a$, so that both pieces of information together determine a unique completion, thereby establishing a new framework that encompasses both parametric curve restrictions and set constraints.

The rest of this paper is structured as follows. Section~\ref{sec:background} presents the necessary background on t-norms, additive generators, and reviews related work on the completion problem. Section~\ref{sec:parametric} discusses the completion problem in its general form when a t-norm is partially known on an arbitrary parametric curve $\gamma$, and then focuses on the particular case $\gamma(\theta)=(\theta,p(\theta))$. Section~\ref{sec:constant} addresses the case where $p$ is bounded away from zero, in which the reconstruction is obtained directly and in closed form. Section~\ref{sec:general} extends the analysis to the case where $p$ is a suitable path, a condition that allows the recursive scheme to be carried out and that covers the previous section as a particular case. Section~\ref{sec:unique} combines our results with the reconstruction above a level curve to obtain unique completions. Finally, Section~\ref{sec:conclusions} summarizes our main contributions and discusses directions for future research.

\section{Background and related work}\label{sec:background}

In this section, we present all the relevant results and definitions to make this work self-contained.

\subsection{Triangular norms and pre-t-norms}

Triangular norms, also called t-norms, are a kind of associative, binary operation originated and widely studied in the framework of probabilistic metric spaces. Also, in fuzzy logic, a t-norm usually plays the role of the generalization of the classical conjunction.

\begin{definition}[\bf{\cite{Klement2000,Menger1942}}]\label{def:tnorm}
	A binary operator $T:[0,1]^2 \to [0,1]$ is said to be a \emph{triangular norm} (or \emph{t-norm} for short) if it satisfies:
	\begin{description}
		\item[(T1)] $T(x,y)=T(y,x)$ for all $x,y \in [0,1]$. \hfill (Commutativity)
		\item[(T2)] $T(x,T(y,z))=T(T(x,y),z)$ for all $x,y,z \in [0,1]$. \hfill (Associativity)
		\item[(T3)] $T(x,y) \leq T(x,z)$ when $y \leq z$, for all $x,y,z \in [0,1]$. \hfill (Monotonicity)
		\item[(T4)] $T(x,1)=x$ for all $x \in [0,1]$. \hfill (Boundary Condition)
	\end{description}
\end{definition}

Several structural properties are commonly used to describe and classify t-norms, as they capture different aspects of their algebraic behavior and are linked to important representation results. We therefore recall the following standard subclasses.

\begin{definition}[\bf{\cite[Definition 2.1.2]{Baczynski2008}}]\label{def:properties_tnorm}
A t-norm $T:[0,1]^2 \to [0,1]$ is called
	\begin{itemize}
		\item \emph{idempotent} if $T(x,x)=x$ for all $x \in [0,1]$;
		\item \emph{positive} if $T(x,y)=0$ implies either $x=0$ or $y=0$;
		\item \emph{cancellative} if $T(x,y)=T(x,z)$ implies $y=z$ for all $x,y,z \in [0,1]$, $x>0$;
		\item \emph{conditionally cancellative} if $T(x,y)=T(x,z)>0$ implies $y=z$ for all $x,y,z \in [0,1]$;
		\item \emph{Archimedean} if for all $x,y \in (0,1)$ there exists an $n \in \NN$ such that $x_T^{(n)}<y$, where
        $$x_T^{(n)} = \begin{cases}
            x, & \text{if $n=1$,} \\
            T\left( x, x_T^{{(n-1)}} \right), & \text{otherwise.}
        \end{cases}$$
	\end{itemize}
\end{definition}

\begin{definition}[\textbf{\cite[Definition 2.13]{Klement2000}}]  A continuous and Archimedean t-norm $T:[0,1]^2 \to [0,1]$ is called	
	\begin{itemize}
		\item \emph{strict} if it is strictly increasing on $(0,1]^2$.
		\item \emph{nilpotent} if there exists $(x,y) \in (0,1]^2$ such that $T(x,y)=0$.
	\end{itemize}
	A strict t-norm is cancellative and a nilpotent t-norm is conditionally cancellative.
\end{definition}

An important field of study concerns the characterization of the structure of different subclasses of triangular norms. One of the main results in this area is the ordinal-sum representation in terms of Archimedean components (see \cite[Theorem 5.1]{Klement2000}), as it allows the study of continuous t-norms to be locally reduced to the Archimedean case in numerous problems. In turn, continuous Archimedean t-norms admit a representation in terms of an additive (or multiplicative) generator, which reduces their expression to an additive (respectively, multiplicative) form that is usually easier to manipulate. In this work, we use the additive generator representation to derive the completion results.
\begin{theorem}[\textbf{\cite[Theorem 5.1]{Klement2000}}]\label{th:representation_archimedeantnorms}
	For a function $T:[0,1]^2 \to [0,1]$ the following statements are equivalent:
	\begin{enumerate}[label=(\roman*)]
		\item $T$ is a continuous and Archimedean t-norm.
		\item $T$ has a continuous additive generator, i.e., there exists a continuous, strictly decreasing function $t:[0,1]\to [0,+\infty]$ with $t(1)=0$, which is uniquely determined up to a positive multiplicative constant, such that
        \begin{equation*}
		T(x,y)=t^{(-1)}(t(x)+t(y)), \quad \text{for all } x,y \in [0,1],
        \end{equation*}
		where $t^{(-1)}$ is the pseudo-inverse of $t$ given by $t^{(-1)}(x)=t^{-1}(\min\{t(0),x\})$ for all $x \in [0,+\infty]$.
	\end{enumerate}
\end{theorem}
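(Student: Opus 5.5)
The proof has two directions, and I take (ii)$\Rightarrow$(i) first because it is routine. Let $t$ be a continuous, strictly decreasing additive generator with $t(1)=0$. Set $T(x,y)=t^{(-1)}(t(x)+t(y))$. Then:
\begin{itemize}
\item \emph{Commutativity} is immediate from the symmetry of the sum.
\item \emph{The boundary condition} follows from $t(1)=0$ and $t^{(-1)}(t(x))=x$.
\item \emph{Monotonicity} holds because $t$ and $t^{(-1)}$ are both nonincreasing.
\item \emph{Continuity} holds because $t^{(-1)}=t^{-1}\circ\min\{t(0),\cdot\}$ is a composition of continuous maps on $[0,+\infty]$.
\end{itemize}
For associativity, the key identity is $t(t^{(-1)}(u))=\min\{t(0),u\}$. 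It gives
\[
T(T(x,y),z)=t^{(-1)}\bigl(\min\{t(0),t(x)+t(y)\}+t(z)\bigr).
\]
This equals $t^{(-1)}(t(x)+t(y)+t(z))$, since truncating at $t(0)$ before adding a nonnegative term does not change the value after the final truncation. The resulting expression is symmetric in $x,y,z$. For the Archimedean property, the same identity gives $t(x_T^{(n)})=\min\{t(0),n\,t(x)\}$. For $x\in(0,1)$ we have $t(x)>0$, so this eventually exceeds $t(y)$ for any $y\in(0,1)$, that is, $x_T^{(n)}<y$.

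The substantial direction is (i)$\Rightarrow$(ii), and I would follow the classical Ling/Aczél route. Fix a continuous Archimedean $T$. First I show that $T(x,x)<x$ for $x\in(0,1)$; otherwise $x$ is idempotent, all powers $x_T^{(n)}$ equal $x$, and the Archimedean property fails. From this and continuity, the powers of any $x\in(0,1)$ decrease strictly until they reach $0$, if they ever do, and they tend to $0$. Next I fix $x_0\in(0,1)$ and build dyadic roots. Continuity of $u\mapsto T(u,u)$ together with the intermediate value theorem gives the existence of $x_0^{(1/2^k)}$. Uniqueness of these roots comes from conditional cancellativity: $T(x,y)=T(x,z)>0$ implies $y=z$.

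Conditional cancellativity is the step I expect to be the main obstacle. I would argue it as follows. Suppose $y<z$ and $T(x,y)=T(x,z)>0$. By continuity and the intermediate value theorem, there is $w$ with $y=T(z,w)$; then $w<1$ since $y<z$. This gives $T(x,z)=T(T(x,z),w)$, and iterating yields $T(x,z)=T(T(x,z),w_T^{(n)})\to 0$, a contradiction.

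With cancellativity in hand, I define $h$ on dyadic rationals by $h(m/2^k)=(x_0^{(1/2^k)})_T^{(m)}$, truncated at $0$. This map is a homomorphism from $([0,\infty),+)$ to $([0,1],T)$ on dyadics, and it is strictly decreasing while positive. Density of its image follows because the dyadic roots tend to $1$: if they converged to some $r<1$, then $r$ would be idempotent, contradicting the first step. Hence $h$ extends continuously to $[0,\infty]$, and on the part where it is positive it is a bijection onto its range. I then set $t=h^{-1}$ on $(0,1]$ and $t(0)=\sup$ of the domain where $h>0$ (possibly $+\infty$), and the representation formula follows.

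For uniqueness up to a constant, suppose $t$ and $s$ both generate $T$. Then $s\circ t^{-1}$ is additive on an interval of $[0,\infty)$, so it is Cauchy additive there. Being monotone and continuous, it must be linear, $s=c\,t$ with $c>0$.
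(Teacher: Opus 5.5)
The paper gives no proof of this statement; it quotes it from Klement, Mesiar and Pap. Your argument follows the same classical scheme as that reference (rational, in your case dyadic, powers of a fixed $x_0\in(0,1)$, a homomorphism from the additive half-line into $([0,1],T)$ that is extended by continuity and inverted), and your sketch of both directions, including uniqueness through the restricted Cauchy equation for $s\circ t^{-1}$, is correct.

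The one step you compress is the density of the image of $h$. Knowing $x_0^{(1/2^k)}\to 1$ is not enough on its own: you also need the uniform continuity of $T$, so that the gaps $v-T(v,x_0^{(1/2^k)})$ between consecutive values of $h$ become small uniformly in $v$. As a side remark, conditional cancellativity is not actually needed for existence, since $T(u,z)=z>0$ already forces $z\le u_T^{(n)}\to 0$, which gives the strict decrease of powers and of $h$ directly.
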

Notice that from Theorem~\ref{th:representation_archimedeantnorms} we deduce that the difference between a strict and a nilpotent Archimedean t-norm in terms of an additive generator is that the value $t(0)$ is infinite or finite, respectively. Accordingly, strict t-norms are positive, whereas nilpotent t-norms may vanish for non-zero points. For nilpotent t-norms, this zero region can be described through their induced negation.
Let us recall that a fuzzy negation is a decreasing function $N:[0,1]\to[0,1]$ satisfying $N(0)=1$ and $N(1)=0$, thus generalizing the classical negation to truth degrees in $[0,1]$. It is called \emph{strong} if it is an involution, i.e.,  $N(N(x))=x$ for all $x\in[0,1]$.
\begin{definition}[{\textbf{\cite{Jenei1998}}}]\label{def:induced_negation}
Let $T$ be a nilpotent t-norm. The \emph{induced negation} of $T$ is the function $N_T:[0,1]\to[0,1]$ defined by
\begin{equation*}
N_T(x)=\max\{y\in[0,1]\mid T(x,y)=0\},
\quad \text{for all }x\in[0,1].
\end{equation*}
\end{definition}
The function $N_T$ is a strong fuzzy negation and determines the zero region of $T$ through
\begin{equation}
T(x,y)=0
\quad\Longleftrightarrow\quad
y\leq N_T(x).
\end{equation}
Moreover, the induced negation can be expressed directly in terms of an additive generator.
\begin{remark}[\textbf{\cite[Remark 8]{Fernandez-PeraltaSub}}]\label{remark:induced_negation_generator}
Let $T$ be a nilpotent t-norm, $N_T$ its induced negation and $t$ an additive generator of $T$. Then,
\begin{equation}
N_T(x)=t^{-1}(t(0)-t(x)),
\quad \text{for all }x\in[0,1].
\end{equation}
\end{remark}

To illustrate the previous definitions and results, we now recall some
well-known t-norms that will be used as running examples throughout the paper.
All the examples have been retrieved from the book \cite{Klement2000}. Since
additive generators are unique up to a positive multiplicative constant, for convenience we display in each case the additive generator $t_a$ satisfying
$t_a(a)=1$ for a fixed but arbitrary $a\in(0,1)$.
\begin{example}\label{ex:tnorms}~
\begin{itemize}
\item The \emph{minimum} t-norm \TM, given by $\TM(x,y)=\min\{x,y\}$ for all
$x,y\in[0,1]$, is continuous and idempotent, hence it is not Archimedean and, by
Theorem~\ref{th:representation_archimedeantnorms}, it admits no continuous
additive generator.
\item The \emph{product} t-norm \TP, given by $\TP(x,y)=xy$ for all
$x,y\in[0,1]$, is continuous, Archimedean and strict, with additive generator
$$
t_a(x)=\log_a x=\frac{\ln x}{\ln a} \quad \text{for all } x\in(0,1],
\qquad t_a(0)=+\infty.
$$
\item The \emph{\L ukasiewicz} t-norm \TLK, given by
$\TLK(x,y)=\max\{x+y-1,0\}$ for all $x,y\in[0,1]$, is continuous, Archimedean
and nilpotent, with additive generator
$$
t_a(x)=\frac{1-x}{1-a} \quad \text{for all } x\in[0,1],
\qquad t_a(0)=\frac{1}{1-a}<+\infty,
$$
and induced negation the standard negation $N_{\TLK}(x)=1-x$ for all
$x\in[0,1]$.
\item The \emph{Acz\'el--Alsina} family $\{T^{AA}_r\}_{r>0}$, given by
$$
T^{AA}_r(x,y)=\exp\left(-\left((-\ln x)^r+(-\ln y)^r\right)^{1/r}\right),
\quad \text{for all } x,y\in[0,1],
$$
with the convention $-\ln 0=+\infty$, consists of continuous, Archimedean and
strict t-norms with additive generator $t_a(x)=(\log_a x)^r$ for all
$x\in(0,1]$ and $t_a(0)=+\infty$, so that
$$
T^{AA}_r(x,y)=a^{\left((\log_a x)^r+(\log_a y)^r\right)^{1/r}},
\quad \text{for all } x,y\in(0,1].
$$
For $r=1$ we recover the product t-norm \TP.
\item The \emph{Yager} family $\{T^{Y}_{\lambda}\}_{\lambda>0}$, given by
$$
T^{Y}_{\lambda}(x,y)=\max\left\{1-\left((1-x)^{\lambda}+(1-y)^{\lambda}\right)^{1/\lambda},\,0\right\},
\quad \text{for all } x,y\in[0,1],
$$
consists of continuous, Archimedean and nilpotent t-norms with additive
generator
$$
t_a(x)=\left(\frac{1-x}{1-a}\right)^{\lambda} \quad \text{for all } x\in[0,1],
\qquad t_a(0)=\frac{1}{(1-a)^{\lambda}}<+\infty.
$$
By Remark~\ref{remark:induced_negation_generator}, its induced negation is
$$
N_{T^{Y}_{\lambda}}(x)=1-\left(1-(1-x)^{\lambda}\right)^{1/\lambda},
\quad \text{for all } x\in[0,1].
$$
For $\lambda=1$ we recover the \L ukasiewicz t-norm \TLK.
\end{itemize}
\end{example}

The definition of t-norm takes the entire region $[0,1]^2$ as its domain. In the context of the completion problem, this domain is replaced by a set $A\subseteq[0,1]^2$. Consequently, we find the concept of pre-t-norm, as that binary operator defined on $A$ that satisfies properties compatible with those of t-norms.

\begin{definition}[{\textbf{\cite[Definition 34]{Fernandez-Peralta2022}}}]\label{def:TA}
A function $T_A:A \to [0,1]$ is called a \emph{pre-t-norm} in $A \subseteq [0,1]^2$ if it satisfies the following conditions:
	$$T_A(x,y)=T_A(y,x), \quad \text{for all } (x,y), (y,x) \in A,$$
	$$T_A(x,T_A(y,z))=T_A(T_A(x,y),z), \quad \text{for all } (y,z), (x,T_A(y,z)), (x,y), (T_A(x,y),z)\in A,$$
	$$T_A(x,y) \leq T_A(x,z), \quad \text{for all } (x,y),(x,z) \in A \text{ such that } y \leq z,$$
	$$T_A(x,z) \leq T_A(y,z), \quad \text{for all } (x,z),(y,z) \in A \text{ such that } x \leq y,$$
	$$T_A(1,x)=x \quad \text{ for all } (1,x) \in A \text{ and } T_A(x,1)=x \quad \text{ for all } (x,1) \in A.$$
\end{definition}

The completion problem consists in deciding whether such a partially defined
operator is the restriction of a t-norm and, if so, in describing all the
operators that extend it.

\begin{definition}[{\textbf{\cite[Definition 33]{Fernandez-Peralta2022}}}]\label{def:completion}
Let $T_A:A \to [0,1]$ be a function defined on $A \subseteq [0,1]^2$. We say that
$T_A$ \emph{can be completed to a t-norm} if there exists a t-norm
$T:[0,1]^2 \to [0,1]$ such that $T_A(x,y)=T(x,y)$ for all $(x,y) \in A$. In this
case, $T$ is said to be a \emph{completion} of $T_A$, and a \emph{continuous
completion} whenever $T$ is continuous.
\end{definition}

The conditions in Definition~\ref{def:TA} are exactly the properties in
Definition~\ref{def:tnorm} restricted to those points of $A$ for which all the
involved values are defined. Consequently, they are necessary for the completion
problem to admit a solution.

\begin{proposition}[{\textbf{\cite[Proposition 35]{Fernandez-Peralta2022}}}]\label{prop:completable_implies_pretnorm}
Let $T_A:A \to [0,1]$ be a function defined on $A \subseteq [0,1]^2$. If $T_A$
can be completed to a t-norm $T$, then $T_A$ is a pre-t-norm.
\end{proposition}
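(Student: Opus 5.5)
The argument is a direct restriction argument: take a completion $T$ of $T_A$ and check that each condition of Definition~\ref{def:TA} is an instance of the corresponding axiom of Definition~\ref{def:tnorm}. Since $T_A = T|_A$, each of the five conditions becomes a statement about $T$ at points of $A$, and the t-norm axioms hold everywhere on $[0,1]^2$.

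The verification goes condition by condition. \emph{Commutativity:} if $(x,y),(y,x)\in A$, then by (T1)
\[
T_A(x,y)=T(x,y)=T(y,x)=T_A(y,x).
\]
\emph{Monotonicity:} for $(x,y),(x,z)\in A$ with $y\le z$, (T3) gives $T_A(x,y)=T(x,y)\le T(x,z)=T_A(x,z)$. The second-variable version, $T_A(x,z)\le T_A(y,z)$ for $x\le y$, follows from (T3) combined with (T1). \emph{Boundary condition:} $T_A(x,1)=T(x,1)=x$ by (T4), and $T_A(1,x)=T(1,x)=T(x,1)=x$ by (T1) and (T4).

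Associativity is the only step that needs some care. Suppose $(y,z),(x,T_A(y,z)),(x,y),(T_A(x,y),z)\in A$. Because $(y,z)\in A$, we have $T_A(y,z)=T(y,z)$, so the point $(x,T_A(y,z))$ equals $(x,T(y,z))$. It lies in $A$, hence
\[
T_A(x,T_A(y,z))=T(x,T(y,z)).
\]
Symmetrically, $T_A(T_A(x,y),z)=T(T(x,y),z)$. These two values agree by (T2). The hypotheses on membership in $A$ are exactly what is needed to replace the inner $T_A$ values by $T$ values before invoking (T2).

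There is no real obstacle; the only subtle point is this bookkeeping in the associativity condition. It is also worth noting that continuity of $T$ is never used, so the argument applies to any completion, continuous or not.
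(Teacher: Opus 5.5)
Your proof is correct, and it is the same argument the paper relies on. The paper cites the result instead of reproving it, but it observes that the pre-t-norm conditions are exactly the t-norm axioms (T1)--(T4) restricted to the points of $A$ where every involved value is defined, so they are necessary for any completion. That is precisely the restriction argument you carry out, and you handle the inner values in the associativity condition correctly.
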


However, these conditions are not sufficient in general, since there exist pre-t-norms
that cannot be completed to any continuous t-norm \cite{Fernandez-PeraltaSub}.
Moreover, as only the properties in Definition~\ref{def:tnorm} restricted to $A$
are required, a pre-t-norm need not be bounded by the minimum, although this
holds whenever $A$ contains the boundary of the unit square.

\begin{proposition}[{\textbf{\cite[Proposition 37]{Fernandez-Peralta2022}}}]\label{prop:bounded_by_min}
Let $T_A:A \to [0,1]$ be a pre-t-norm defined on $A \subseteq [0,1]^2$ such that
$[0,1]^2 \setminus [0,1)^2 \subseteq A$. Then $T_A(x,y) \leq \min\{x,y\}$ for all
$(x,y) \in A$.
\end{proposition}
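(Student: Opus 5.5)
The plan is to combine the monotonicity axioms of a pre-t-norm with the boundary condition, applied at points on the upper and right edges of the square, which by hypothesis lie in $A$. Fix $(x,y)\in A$. The points $(x,1)$ and $(1,y)$ both belong to $[0,1]^2\setminus[0,1)^2$, hence to $A$. By the last condition in Definition~\ref{def:TA} we have $T_A(x,1)=x$ and $T_A(1,y)=y$. The goal is then to compare $T_A(x,y)$ with these two values using monotonicity in each variable separately.

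\textbf{Step 1 (bound by $x$).} The points $(x,y)$ and $(x,1)$ both lie in $A$, share the first coordinate, and satisfy $y\le 1$. The monotonicity condition in the second variable of Definition~\ref{def:TA} gives
\[
T_A(x,y)\le T_A(x,1)=x.
\]

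\textbf{Step 2 (bound by $y$).} Similarly, $(x,y)$ and $(1,y)$ lie in $A$, share the second coordinate, and satisfy $x\le 1$. The monotonicity condition in the first variable gives
\[
T_A(x,y)\le T_A(1,y)=y.
\]

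\textbf{Step 3.} Combining the two bounds yields $T_A(x,y)\le\min\{x,y\}$, as claimed.

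No real obstacle is expected. The only point to check is that the definition of a pre-t-norm states monotonicity in both variables separately, since commutativity is only available when both $(x,y)$ and $(y,x)$ lie in $A$. Definition~\ref{def:TA} does list both monotonicity conditions, so no symmetry argument is needed. The hypothesis that $A$ contains the full top and right edges is exactly what guarantees that the comparison points $(x,1)$ and $(1,y)$ are available. The edge points with a coordinate equal to $0$ are not needed for this argument.
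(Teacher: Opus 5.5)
Your proof is correct and is the natural argument for this result. The paper only cites it from \cite{Fernandez-Peralta2022} without giving a proof, and your comparison of $T_A(x,y)$ with the boundary values $T_A(x,1)=x$ and $T_A(1,y)=y$, via the two separate monotonicity conditions of Definition~\ref{def:TA}, gives $T_A(x,y)\le\min\{x,y\}$ directly. One small remark on your last sentence: $[0,1]^2\setminus[0,1)^2$ consists exactly of the points with some coordinate equal to $1$, so edges with a zero coordinate are not part of the hypothesis at all, rather than merely unused.
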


We also adapt the cancellation properties in
Definition~\ref{def:properties_tnorm} to this setting. These two properties play
a central role in the completion problem, since they are inherited by any
completion and, for suitable regions $A$, they allow us to restrict the class in
which such a completion can be found, distinguishing between strict and nilpotent
t-norms.

\begin{definition}[{\textbf{\cite[Definition 36]{Fernandez-Peralta2022}}}]\label{def:cancellative} Let $T_A:A \to [0,1]$  be a pre-t-norm defined on $A\subseteq [0,1]^2$, then $T_A$ is called:
	\begin{enumerate}[label=(\roman*)]
		\item \emph{cancellative} if $T_A(x,y)=T_A(x,z)$  implies $y=z$ for all $(x,y),(x,z) \in A$, $x>0$ and  $T_A(y,x)=T_A(z,x)$ implies $y=z$ for all $(y,x),(z,x) \in A$, $x>0$.
		\item \emph{conditionally cancellative} if $T_A(x,y)=T_A(x,z)>0$ implies $y=z$ for all $(x,y),(x,z) \in A$ and $T_A(y,x)=T_A(z,x)>0$ implies $y=z$ for all $(y,x),(z,x) \in A$.
	\end{enumerate}
\end{definition}

Clearly, enlarging the region on which the operator is known can only
restrict the family of admissible completions. The next result makes this
monotone behaviour precise and will allow us to combine information coming from
different regions.

\begin{proposition}[{\textbf{\cite[Proposition 5.32]{Fernandez-Peralta2023Thesis}}}]\label{prop:EquivalenceCompletionsTATB}
	Let $T_A: A \to [0,1]$, $T_B: B \to [0,1]$ be two continuous pre-t-norms defined on $A \subsetneq B \subsetneq [0,1]^2$ such that $\TA(x,y)=\TB(x,y)$ for all $(x,y) \in A$.
	\begin{enumerate}[label=(\roman*)]
		\item If $T$ is a continuous completion of $\TB$ then $T$ is also a continuous completion of $\TA$.
		\item If \TA has a unique continuous completion $T$ and \TB can be completed, then $T$ is the unique continuous completion of \TB.
	\end{enumerate} 
\end{proposition}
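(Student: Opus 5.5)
The two claims follow from the definition of completion (Definition~\ref{def:completion}) and the inclusion $A \subsetneq B$. No structural property of t-norms beyond that definition is needed.

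For (i), I would take a continuous completion $T$ of $\TB$. By definition, $T$ is a continuous t-norm with $T(x,y)=\TB(x,y)$ for all $(x,y)\in B$. Now fix $(x,y)\in A$. Since $A\subseteq B$, the point $(x,y)$ also lies in $B$, so $T(x,y)=\TB(x,y)$. By hypothesis $\TB(x,y)=\TA(x,y)$, hence $T(x,y)=\TA(x,y)$. So $T$ agrees with $\TA$ on all of $A$, which means $T$ is a continuous completion of $\TA$.

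For (ii), I read ``\TB can be completed'' as ``\TB admits a continuous completion''. The uniqueness statement is about continuous completions, so that is the natural class here. Let $T'$ be any continuous completion of $\TB$; one exists by hypothesis. By part (i), $T'$ is a continuous completion of $\TA$. Since $\TA$ has a unique continuous completion $T$, we get $T'=T$. Two things follow:
\begin{itemize}
\item $T$ is itself a continuous completion of $\TB$, because $T=T'$ and $T'$ is one.
\item Every continuous completion of $\TB$ equals $T$, because $T'$ was arbitrary.
\end{itemize}
Together these show that $T$ is the unique continuous completion of $\TB$.

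The only delicate point is the interpretation in (ii). If ``can be completed'' only gave some possibly discontinuous completion, existence of a continuous one would not follow from the hypotheses. So the argument relies on reading that hypothesis as existence of a continuous completion, and I would state this reading explicitly at the start of the proof. Everything else is a direct chase through the definitions.
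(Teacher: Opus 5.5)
Your proof is correct: (i) is restriction along $A \subseteq B$, and (ii) follows by applying (i) to an arbitrary continuous completion of $T_B$ and invoking uniqueness for $T_A$, which yields both existence and uniqueness for $T_B$. The paper gives no proof of its own, only a citation to the thesis, and your reading of ``$T_B$ can be completed'' as ``$T_B$ admits a continuous completion'' is the one the conclusion requires, so this direct definition chase is the natural argument.
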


Many regions of the unit square have been considered for this problem
(see \cite{Fernandez-Peralta2023Thesis} for a comprehensive overview). Among
them, we recall the result concerning the region lying above a level curve of the
operator, in which the pre-t-norm determines a piece of an additive generator on
the interval $[a,1]$, unique once its value at $a$ is fixed.

\begin{theorem}[{\textbf{\cite[Theorem 18]{Fernandez-PeraltaSub}}}]\label{thm:generatorabovelevelcurve}
	Let $a \in (0,1)$ and $l:[a,1] \to [a,1]$ be a continuous, strictly decreasing function such that $l^{-1}=l$, $l(a)=1$, $l(1)=a$. Let $\TA: A \to [0,1]$ be a continuous, cancellative pre-t-norm defined on $A=\{(x,y) \in [a,1]^2 \mid y \geq l(x)\}$ such that $\TA(x,l(x))=a $ for all $x \in [a,1]$. There exists a unique continuous, strictly decreasing function $t_a:[a,1] \to [0,1]$ with $t_a(a)=1$, $t_a(1)=0$,
	\begin{equation*}
	    \TA(x,y)=t_a^{-1} (t_a(x)+t_a(y)), \quad \text{for all } (x,y) \in A.
	\end{equation*}
	In this case, we say that $t_a$ is the \emph{additive generator} of $T_A$ with $t_a(a)=1$.
\end{theorem}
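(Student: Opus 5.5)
The idea is to build $t_a$ directly from the level structure of $\TA$ on $A$, in the spirit of the Aczél construction for cancellative associative operations. On the region $A$, $\TA$ behaves like a cancellative, continuous, associative operation whose values lie in $[a,1]$. The extremal level curve $y=l(x)$ is mapped to $a$, and $1$ is a neutral element. I will first show that the restrictions $\TA(x,\cdot)$ are strictly increasing. Together with continuity and $\TA(x,1)=x$, this shows that $\TA(x,\cdot)$ maps $[l(x),1]$ homeomorphically onto $[a,x]$. This gives a partially defined "division": for $a\le z\le x\le 1$ there is a unique $y$ with $\TA(x,y)=z$. By continuity and cancellativity, $\TA(x,y)\le\min\{x,y\}$ holds with strict inequality on $(a,1)^2$. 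Hence $A$ is closed under the iterations needed below: whenever $(x,y)\in A$ and $y\le w$, we have $(x,w)\in A$, since $w\ge y\ge l(x)$.

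Existence. Fix $b\in(a,1)$ and set $t_a(b)$ provisionally. I then define $t_a$ on the dyadic-type set generated from $b$ and $a$ by powers and roots. More cleanly, I would follow the standard route: define $t_a$ on $\{x_T^{(n)}\}$ and on roots $x^{(1/m)}$, where roots exist by the division property and the intermediate value theorem applied to $u\mapsto \TA(u,u)$ on $[l^{-1}\text{-fixed point},1]$. Associativity gives consistency $t_a(x^{(p/q)})=\tfrac{p}{q}t_a(x)$, and monotonicity plus strictness gives density and extension by continuity. A cleaner alternative, which I would actually try first, is to avoid roots entirely. Let $e\in(a,1)$ be the fixed point of $l$, so that $\TA(e,e)=a$. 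Define $t_a(e)=1/2$, and then define $t_a$ on $[e,1]$ by an arbitrary choice? No: uniqueness forbids arbitrariness, so the value on $[e,1]$ must be forced. The forcing is exactly the root construction: $\TA(u,u)=e$ determines $e^{(1/2)}$, and iterating yields $e^{(k/2^n)}$, which must receive $t_a=k/2^{n+1}$. The key steps are as follows.
\begin{enumerate}[label=(\alph*)]
\item Show that dyadic powers are well defined and strictly ordered, by associativity and cancellation.
\item Show that they are dense in $[a,1]$, using the Archimedean-type consequence of cancellativity and continuity: $u^{(1/2^n)}\to1$. If the limit were $c<1$, then $\TA(c,c)=c$, contradicting strictness.
\item Extend $t_a$ by monotone continuity and verify $\TA(x,y)=t_a^{-1}(t_a(x)+t_a(y))$ on $A$ by continuity from dyadic points. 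The condition $y\ge l(x)$ corresponds precisely to $t_a(x)+t_a(y)\le1$, using $\TA(x,l(x))=a$ and strict monotonicity.
\end{enumerate}

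Uniqueness. Any generator $s$ with $s(a)=1$ representing $\TA$ satisfies $\TA(e,e)=a$, which forces $s(e)=1/2$. Halving then forces $s(e^{(k/2^n)})=k/2^{n+1}$, so $s=t_a$ on a dense set, and by continuity $s=t_a$ everywhere.

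The main obstacle is step (c), the consistency of the dyadic assignment: showing $\TA(e^{(p)},e^{(q)})=e^{(p+q)}$ for dyadic $p,q$ with $p+q\le1$. This requires that every intermediate product stays inside $A$, so that the restricted associativity of Definition~\ref{def:TA} applies. I would handle it by induction on the dyadic level, using the monotone closure property of $A$ noted in the first paragraph.
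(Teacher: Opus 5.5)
The paper does not prove this statement; it quotes it as Theorem 18 of its reference, so your proposal has to stand on its own. The overall architecture is reasonable. You normalize $t_a(a)=1$ and take the fixed point $e$ of $l$ as the point that must receive $1/2$; indeed $T_A(u,u)=a$ with $(u,u)\in A$ forces $u=l(u)$. You obtain iterated square roots from the intermediate value theorem on $u\mapsto T_A(u,u)$, $u\in[e,1]$, and extend by monotonicity and density. Your uniqueness argument is fine; for a representing $s$ the points $s^{-1}(q)$ are automatically dense, so it does not even need the existence part.

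The genuine gap is exactly the step you flag as the main obstacle, and upward closure of $A$ cannot close it. Write $r_q$ for the point that should receive the value $q$. Products move downward, towards the curve $y=l(x)$. When $p+q=1$ the pair $(r_p,r_{1-p})$ lies exactly on that curve, so showing it is in $A$ is the same as proving the equality $r_{1-p}=l(r_p)$. No comparison with previously known points of $A$ can give an equality. This already bites at the second level. With $s=r_{1/4}$ (so $T_A(s,s)=e$), upward closure gives $(s,e)\in A$. But to know that $T_A(r_{1/4},r_{3/4})$ is defined and equals $a$, with $r_{3/4}=T_A(s,e)$, you need $T_A(s,e)\geq l(s)$. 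The restricted associativity of Definition~\ref{def:TA} cannot supply this, because it presupposes exactly that membership. Note also that the hypothesis $T_A(x,l(x))=a$, which is what makes the theorem true, enters your plan only in step (c), whereas it is indispensable here.

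The missing idea is a rebracketing lemma. If $(x,y)\in A$ and $(T_A(x,y),w)\in A$, then $(y,w)\in A$, $(x,T_A(y,w))\in A$, and $T_A(T_A(x,y),w)=T_A(x,T_A(y,w))$.

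\begin{itemize}
\item The first membership holds because $T_A(x,y)\leq y$ and $l$ is decreasing.
\item For the second, suppose $T_A(y,w)<l(x)\leq y$. Your division property gives $w'>w$ with $T_A(y,w')=l(x)$.
\item Then $(T_A(x,y),w')\in A$ by upward closure, so associativity applies to the triple $(x,y,w')$ and yields $T_A(T_A(x,y),w')=T_A(x,l(x))=a$.
\item But strict monotonicity gives $T_A(T_A(x,y),w')>T_A(T_A(x,y),w)\geq a$, a contradiction.
\end{itemize}

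With this lemma and its mirror image (by commutativity and the symmetry of $A$), the following all come from routine inductions:
\begin{itemize}
\item the power rule $T_A(s^{(j)},s^{(k)})=s^{(j+k)}$;
\item the compatibility $s^{(2m)}=u^{(m)}$ between consecutive dyadic levels, where $u=T_A(s,s)$;
\item $T_A(r_p,r_{1-p})=a$, hence $l(r_p)=r_{1-p}$, which is what identifies $A$ with $\{t_a(x)+t_a(y)\leq 1\}$.
\end{itemize}

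Three smaller points:
\begin{itemize}
\item Density of the $r_q$ does not follow from $r_{1/2^n}\to 1$ alone. You also need $r_{q'}=T_A(r_q,r_{q'-q})$ and uniform continuity of $T_A$ to see that the gaps shrink.
\item The ``hence'' in your first paragraph is a non sequitur. Upward closure follows from the definition of $A$ and the monotonicity of $l$, not from $T_A\leq\min$.
\item Commit to one construction; the text starts from $b$, switches to $e$, and then corrects itself.
\end{itemize}
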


As an example, consider $a \in (0,1)$, $l(x)=1+a-x$ for all $x \in [a,1]$ and
the pre-t-norm $\TA(x,y)=x+y-1$ defined on
$A=\{(x,y) \in [a,1]^2 \mid x+y \geq 1+a\}$, which is the restriction of \TLK{}
to the region above its level curve of value $a$. In this case
Theorem~\ref{thm:generatorabovelevelcurve} yields $t_a(x)=\frac{1-x}{1-a}$ for
all $x \in [a,1]$, that is, the additive generator of \TLK{} with $t_a(a)=1$
restricted to $[a,1]$. We refer the reader to \cite{Fernandez-PeraltaSub} for more examples.

Note that this region does not determine a t-norm by itself, since the
operator remains undetermined below the level curve. Nonetheless, since an
additive generator is unique up to a positive multiplicative constant, the piece it provides is the restriction to $[a,1]$ of the generator of every
compatible completion, once the normalization $t_a(a)=1$ is fixed. This makes it a useful and powerful starting point for the study of other regions containing it, and for the completion problem in general.

\section{The completion problem for pre-t-norms known on a parametric curve}\label{sec:parametric}

In this section, we formulate the completion problem for pre-t-norms whose domain is the image of a parametric curve, restricting our attention to continuous and Archimedean completions. Let $\gamma : I \subseteq [0,1] \to [0,1]^2$ be a parametric curve given by $\gamma(\theta)= (x(\theta),y(\theta))$ for all $\theta \in I$, and let
$T_{\gamma} : \gamma(I) \to [0,1]$ be a continuous pre-t-norm defined on $\gamma(I)$. By Theorem~\ref{th:representation_archimedeantnorms}, finding a continuous and Archimedean completion of $T_{\gamma}$ amounts to determining an additive generator $t$ satisfying the functional equation
\begin{equation}\label{eq:feq:1}
t^{(-1)}(t(x(\theta)) + t(y(\theta))) = T_{\gamma}(x(\theta),y(\theta)),
\quad \text{for all } \theta \in I.
\end{equation}
To solve the problem, one may focus on two key points:
\begin{enumerate}[label=(\roman*)]
	\item The curve $\gamma$ has to provide enough information to solve the equation.
	\item We have to further study the additional conditions that the function $T_{\gamma}$ has to satisfy so that it is compatible with the structure of continuous and Archimedean t-norms.
\end{enumerate}
Further, if we consider different parametric curves, then for each we have to consider Eq.~(\ref{eq:feq:1}), yielding a system of functional equations \cite{Darsow1983}. Although different regions of this type have been considered (see \cite{Fernandez-Peralta2019,Fernandez-PeraltaSub}), the only regions that are solved by providing an explicit reconstruction of an additive generator of every possible completion are when $T_{\gamma}$ corresponds either to the diagonal or to a horizontal (resp.\ vertical) section, i.e., they solve Eq.~(\ref{eq:feq:1}) under the following conditions:
\begin{itemize}
    \item If $I=[0,1]$, $\gamma(\theta)=(\theta,\theta)$ for all $\theta \in [0,1]$ and $T_{\gamma}(\theta,\theta) = \delta(\theta)$ is a function such that there exists $d\in[0,1)$ with $\delta(x)=0$ for all $x\in[0,d]$, $\delta$ strictly increasing on $[d,1]$, $\delta(1)=1$ and $\delta(x)<x$ for all $x\in(0,1)$.
    \item If $I=[0,1]$, $\alpha \in (0,1)$, $\gamma(\theta) = (\theta,\alpha)$ for all $\theta \in [0,1]$ and $T_{\gamma}(\theta,\alpha)=h_{\alpha}(\theta)$ is a function such that there exists $c\in[0,1)$ with $h_{\alpha}(x)=0$ for all $x\in[0,c]$, $h_{\alpha}$ strictly increasing on $[c,1]$, $h_{\alpha}(1)=\alpha$ and $h_{\alpha}(x)<x$ for all $x\in(0,1]$.
\end{itemize}
In particular, the solutions correspond to the following theorems.
\begin{theorem}[\textbf{\cite[Theorem~3.8.2]{Alsina2006}}] \label{theorem:alsina:diagonal}
Let $\delta\colon [0,1]\to[0,1]$ be a function such that there exists $d\in[0,1)$ with $\delta(x)=0$ for all $x\in[0,d]$, $\delta$ strictly increasing on $[d,1]$, $\delta(1)=1$ and $\delta(x)<x$ for all $x\in(0,1)$. We call $\delta$ a \emph{diagonal}. Then there are infinitely many distinct continuous and Archimedean t-norms $T$ for which $T(\cdot,\cdot)=\delta$. The additive generators $t$ of all such $T$ satisfying $t(x_0)=1$ depend on an arbitrary strictly decreasing continuous function $\overline{t}\colon[x_1,x_0]\to[1,2]$ with $\overline{t}(x_0)=1$ and $\overline{t}(x_1)=2$, and are given by
\begin{equation}\label{eq:schroder}
    t(x) = 2^n\, \overline{t}\!\left(\delta^{-n}(x)\right), \qquad x_{n+1} < x \leq x_n,
\end{equation}
where $x_0$ is chosen arbitrarily in $(d,1)$ if $d=0$, or as the unique point in $(d,1)$ such that $\delta(x_0)=d$ if $d>0$, and $x_n=\delta^n(x_0)$ for all $n\in\mathbb{Z}$, together with $t(1)=0$ and $t(0)=+\infty$ if $d=0$.
\end{theorem}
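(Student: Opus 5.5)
The plan is to reduce the statement to Schröder's functional equation for the additive generator. We then solve that equation by propagating an arbitrary piece of $t$ along the orbits of $\delta$.

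\textbf{Reduction.} By Theorem~\ref{th:representation_archimedeantnorms}, $T$ is a continuous Archimedean t-norm with $T(x,x)=\delta(x)$ if and only if it has a continuous strictly decreasing generator $t$ with $t(1)=0$ and $t^{(-1)}(2t(x))=\delta(x)$. Where $\delta(x)>0$, i.e. for $x>d$, this reads
\[
t(\delta(x))=2t(x).
\]
If $d>0$ (nilpotent case) we additionally need $2t(x)\ge t(0)$ exactly when $x\le d$. With continuity, this is equivalent to $t(0)=2t(d)$, i.e. $t(0)=2t(x_0)$ if we set $x_0=\delta^{-1}(d)$ and apply the equation once more. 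If $d=0$ we need $t(0)=+\infty$.

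\textbf{Necessity.} Normalize $t(x_0)=1$ and put $\overline t=t|_{[x_1,x_0]}$, where $x_1=\delta(x_0)$. This function is continuous, strictly decreasing, and satisfies $\overline t(x_0)=1$ and $\overline t(x_1)=2t(x_0)=2$. Since $\delta$ is a strictly increasing continuous bijection of $[d,1]$ with $\delta(x)<x$ on $(d,1)$, the sequence $x_n=\delta^n(x_0)$, $n\in\ZZ$, is strictly decreasing. It tends to $1$ as $n\to-\infty$ and to $d$ as $n\to+\infty$, because the limits are fixed points of $\delta$ and the only fixed points in $[d,1]$ are $d$ and $1$. Here $x_n$ for $n>0$ is only defined while it stays in $(d,1]$. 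In the case $d>0$, the choice $\delta(x_0)=d$ means $x_1=d$, so the orbit stops at $n=1$.

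For $x\in(x_{n+1},x_n]$ we have $\delta^{-n}(x)\in(x_1,x_0]$. Iterating the equation $n$ times (or $-n$ times backwards for negative $n$) forces $t(x)=2^n\overline t(\delta^{-n}(x))$. Hence $t$ is determined by $\overline t$ via \eqref{eq:schroder}. On $[0,d]$ the value is irrelevant for $T$ except that $t(0)=2$ must hold. So on $[0,d]$ one extends $t$ by any continuous strictly decreasing function from $2$ to $2t(d)$... On closer inspection, $t(d)=t(x_1)=2=t(0)$ would break strict monotonicity. The bookkeeping therefore has to be $x_1=d$ with $[x_1,x_0]=[d,x_0]$ and $t(0)=t(d)$ impossible. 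This signals that in the $d>0$ case the intended statement is that $t$ on $[0,d]$ is forced by $N_T$, or that the interval for $\overline t$ should be read with $x_1=0$, i.e. $\delta^{-1}$ applied differently.

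I would resolve this by checking against \TLK. There $\delta(x)=\max\{2x-1,0\}$ and $d=1/2$, which forces $x_0=3/4$ and $t(x_0)=1$, $t(0)=4$, $t(d)=2$. So the orbit condition is $t(0)=2t(d)$, and $\overline t$ on $[d,x_0]$ determines $t$ on $[d,1]$. On $[0,d]$, $t$ must be recovered through the nilpotent symmetry $t(x)=t(0)-t(N_T(x))$, using Remark~\ref{remark:induced_negation_generator}. That is the piece I expect the formula \eqref{eq:schroder} to encode when read with $n=1$ on $(x_2,x_1]$ appropriately. I would need to reconcile this with the stated indexing carefully.

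\textbf{Sufficiency and infinitude.} Conversely, take any admissible $\overline t$ and define $t$ by \eqref{eq:schroder}. The pieces glue continuously at each $x_n$, since $2^{n}\overline t(x_1)=2^{n+1}\overline t(x_0)$. Each piece is strictly decreasing, so $t$ is strictly decreasing, and $t\to0$ as $x\to1$ because $2^{n}\to 0$ for $n\to-\infty$. If $d=0$, then $t\to\infty$ as $x\to0$. Schröder's equation holds by construction, so Theorem~\ref{th:representation_archimedeantnorms} yields a t-norm with diagonal $\delta$. Different $\overline t$ that are not scalar multiples give different generators, and hence distinct t-norms; here the normalization $t(x_0)=1$ fixes the scalar. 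Since there are infinitely many such $\overline t$, there are infinitely many such t-norms.

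\textbf{Main obstacle.} The hard part is the nilpotent case $d>0$: matching the zero region, i.e. $2t(x)\ge t(0)$ if and only if $x\le d$, and fixing exactly how $t$ is pinned down on $[0,d]$. Continuity at $0$ and at $1$ are routine by comparison.
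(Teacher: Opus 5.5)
Your argument for the strict case $d=0$ is sound. Iterating $t(\delta(x))=2t(x)$ along the orbit $x_n$ forces \eqref{eq:schroder}, and the converse follows from gluing at the $x_n$, the limits at $0$ and $1$, and Schr\"oder's equation holding by construction.

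\textbf{The gap is the nilpotent case $d>0$.} You leave it unresolved, and the obstruction you run into there is not real. Its source is that you treat $\delta$ as a bijection of $[d,1]$ and let the orbit stop at $x_1=d$. In fact $\delta$ maps $[d,1]$ onto $[0,1]$, so $\delta^{-1}$ is defined on all of $[0,1]$ and the orbit has one more step, $x_2=\delta(d)=0$. The piece $n=1$ of \eqref{eq:schroder} therefore lives on $(x_2,x_1]=(0,d]$, and it is forced by $\delta$. For every $y\in(d,x_0]$ we have $\delta(y)>0$, so $t(\delta(y))=2t(y)$ is valid there. Since $\delta$ maps $(d,x_0]$ onto $(0,d]$, this gives $t(x)=2\,\overline{t}(\delta^{-1}(x))$ for all $x\in(0,d]$. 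So $t$ on $(0,d]$ is not left undetermined. Your proposed recovery through $t(x)=t(0)-t(N_T(x))$ is unnecessary, and it is also circular, since $N_T$ is not part of the data and is itself defined through $t$.

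\textbf{The contradiction with strict monotonicity is an arithmetic slip.} From $t(0)=2t(d)$ and $t(d)=t(\delta(x_0))=2t(x_0)$ you get $t(0)=4t(x_0)=4$, not $2$. This is also the limit $2\,\overline{t}(\delta^{-1}(0))=2\,\overline{t}(d)=4$ of the $n=1$ piece, and the value your \L ukasiewicz check produced. With $t(0)=4$, the step you call the main obstacle is immediate in the sufficiency direction:
\begin{itemize}
\item for $x\le d$ we have $2t(x)\ge 2t(d)=t(0)$, so $T(x,x)=0=\delta(x)$;
\item for $x>d$ we have $2t(x)<2t(d)=t(0)$, so $t^{(-1)}(2t(x))=t^{-1}(2t(x))=\delta(x)$ by the equation.
\end{itemize}

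This is exactly how the paper's route absorbs the nilpotent case, via Theorem~\ref{thm:completacio-general} with $p=\mathrm{id}$, that is, Corollary~\ref{cor:diagonal}. There the last interval of the partition, which contains $0$, is reached through $f_c^{-1}$. The identity $f_c^{-1}(0)=c$ gives $t(0)=2t(c)$, and the points of $[0,c]$ are then handled by monotonicity. The free piece is $t_a$ on $[a,f_c^{-1}(a)]$, obtained after solving $t_a(f(x))=2t_a(x)$ upward, with $x_1=a$, $x_0=f_c^{-1}(a)$ and $\overline{t}=2t_a$.

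Finally, you use continuity of $\delta$ throughout. The statement omits it, but it is needed for $\delta^{-1}$ to exist on $[0,1]$.
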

\begin{theorem}[\textbf{\cite[Theorem~3.8.4]{Alsina2006}}]\label{theorem:alsina:horizontal}
Let $\alpha \in (0,1)$ and let $h_{\alpha}\colon [0,1]\to[0,\alpha]$ be a function such that there exists $c\in[0,1)$ with $h_{\alpha}(x)=0$ for all $x\in[0,c]$, $h_{\alpha}$ strictly increasing on $[c,1]$, $h_{\alpha}(1)=\alpha$ and $h_{\alpha}(x)<x$ for all $x\in(0,1]$. We call $h_\alpha$ an \emph{$\alpha$-section}. Then there are infinitely many distinct continuous and Archimedean t-norms $T$ for which $T(\cdot,\alpha)=h_{\alpha}$. The additive generators $t$ of all such $T$ satisfying $t(\alpha)=1$ depend on an arbitrary strictly decreasing continuous function $\overline{t}\colon[\alpha,1]\to[0,1]$ with $\overline{t}(1)=0$ and $\overline{t}(\alpha)=1$, and are given by
\begin{equation}\label{eq:abel}
    t(x) = \begin{cases}
        n + \overline{t}\!\left(h_{\alpha}^{-n}(x)\right) & \text{if } y_n \leq x < y_{n-1}, \quad n \geq 1, \\
        \overline{t}(x) & \text{if } \alpha \leq x \leq 1,
    \end{cases}
\end{equation}
where $y_0=\alpha$ and $y_n=h_{\alpha}^n(\alpha)$ for all $n\geq 0$, together with $t(0)=+\infty$ if $c=0$, or $t(0)=N+1$ where $N = \inf\{n \geq 0 : y_n = 0\}$ if $c>0$.
\end{theorem}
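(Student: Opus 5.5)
The statement is Theorem~\ref{theorem:alsina:horizontal}, so I would reduce the section equation $T(x,\alpha)=h_\alpha(x)$ to Abel's functional equation for the additive generator. Normalize $t(\alpha)=1$. By Theorem~\ref{th:representation_archimedeantnorms}, $T(x,\alpha)=h_\alpha(x)$ becomes $t^{(-1)}(t(x)+1)=h_\alpha(x)$. On the set where $h_\alpha(x)>0$, that is $x\in(c,1]$, the pseudo-inverse is a genuine inverse, so the equation reads
\[
t(h_\alpha(x))=t(x)+1 .
\]
Where $h_\alpha(x)=0$, that is $x\in[0,c]$, the condition is $t(x)+1\ge t(0)$.

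\textbf{Necessity.} Let $t$ be the generator of any completion. Since $h_\alpha$ is strictly increasing on $[c,1]$, it has an inverse $h_\alpha^{-1}\colon[0,\alpha]\to[c,1]$ on its range. Define $y_n=h_\alpha^n(\alpha)$.
\begin{itemize}
\item Because $h_\alpha(x)<x$, the sequence $(y_n)$ is strictly decreasing while positive.
\item Its limit $L$ satisfies $h_\alpha(L)=L$ by continuity, which forces $L=0$. (Continuity of $h_\alpha$ is implicit: $T$ is continuous, and $h_\alpha$ is increasing with $h_\alpha(1)=\alpha$.)
\end{itemize}
Iterating the Abel equation gives $t(h_\alpha^n(x))=t(x)+n$. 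For $x\in[y_n,y_{n-1})$ we have $h_\alpha^{-n}(x)\in[\alpha,1)$. Setting $\overline t=t|_{[\alpha,1]}$ then yields \eqref{eq:abel}; $\overline t$ is continuous, strictly decreasing, and satisfies $\overline t(\alpha)=1$, $\overline t(1)=0$. For the boundary value:
\begin{itemize}
\item If $c=0$, all $y_n>0$, so $t(0)=\lim t(y_n)=+\infty$.
\item If $c>0$, some $y_N=0$ is reached after finitely many steps, and continuity at $0$ gives $t(0)=N+1$ in the paper's indexing. I would check carefully the case where $y_N$ first hits $0$ versus lands in $(0,c]$, since this determines the exact constant.
\end{itemize}

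\textbf{Sufficiency.} Conversely, given an arbitrary $\overline t$ as in the statement, define $t$ by \eqref{eq:abel}. The steps are:
\begin{enumerate}
\item Check that $t$ is well defined and strictly decreasing. On each block $[y_n,y_{n-1})$ the function $h_\alpha^{-n}$ is increasing and $\overline t$ is decreasing, and the values run from $n+1$ down to just above $n$.
\item Check continuity at the block endpoints $y_n$. As $x\uparrow y_{n-1}$ we have $h_\alpha^{-n}(x)\to1$, so $t\to n+\overline t(1)=n$. At $x=y_{n-1}$ the formula gives $n-1+\overline t(\alpha)=n$, so the two values match. This is where the conditions $\overline t(1)=0$ and $\overline t(\alpha)=1$ are used.
\item Verify the Abel equation directly from the definition: $h_\alpha$ maps block $n$ onto block $n+1$.
\item Handle the region $[0,c]$ in the nilpotent case. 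One needs $t(x)\ge N$ there, so that $t(x)+1\ge t(0)$. I expect this bookkeeping with $N$ and $c$ to be the main obstacle, together with confirming that the finite value $t(0)$ makes $t$ continuous at $0$.
\end{enumerate}
Theorem~\ref{th:representation_archimedeantnorms} then produces a continuous Archimedean t-norm with section $h_\alpha$.

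\textbf{Infinitely many completions.} Distinct choices of $\overline t$ give non-proportional generators, since all are normalized by $t(\alpha)=1$. Generators that are not proportional define distinct t-norms by the uniqueness part of Theorem~\ref{th:representation_archimedeantnorms}, and there are infinitely many admissible $\overline t$.
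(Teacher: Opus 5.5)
Your route is essentially the paper's. The paper obtains this statement as Corollary~\ref{cor:horizontal}, by specializing Theorem~\ref{thm:completion:1} to $p\equiv\alpha$ and $a=\alpha$, and your block-by-block treatment of Abel's equation is that proof with the specialization done first. The reduction, the reconstruction on $[y_n,y_{n-1})$, the continuity at the junctions and the argument for infinitely many completions are all fine.

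The genuine gap is the value at $0$ when $c>0$. Continuity at $0$ does not give $t(0)=N+1$. Since $0\in[y_N,y_{N-1})$, it gives
\[
t(0)=N+\overline t\bigl(h_\alpha^{-N}(0)\bigr)=N+\overline t\bigl(h_\alpha^{-(N-1)}(c)\bigr)=t(c)+1,
\]
where $h_\alpha^{-1}(0)=c$ is the inverse of $h_\alpha|_{[c,1]}$. This value lies in $(N,N+1]$ and equals $N+1$ only when $c=y_{N-1}$.

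For instance, take \TLK{} with $\alpha=0.6$. Then $h_\alpha(x)=\max\{x-0.4,0\}$, $c=0.4$, $y_1=0.2$ and $y_2=0$, so $N=2$. But $t(x)=\frac{1-x}{0.4}$ gives $t(0)=2.5$. So the clause $t(0)=N+1$ is inaccurate as printed in the statement and cannot be proved. What decides the constant is the position of $c$ inside $[y_{N-1},y_{N-2})$, not the distinction you flag. The paper's Corollary~\ref{cor:horizontal} makes no such claim: there $t(0)$ is simply the $n=n_f$ branch of the formula.

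The same error breaks your sufficiency step 4. Suppose $c\neq y_{N-1}$ and you set $t(0)=N+1$. Then $t$ is discontinuous at $0$, because the formula gives $\lim_{x\to0^+}t(x)=t(c)+1<N+1$, so $t$ is not a generator. Moreover, the inequality you aim for, $t(x)\ge N$ on $[0,c]$, fails at $x=c$ in that case, since $t(c)\in(N-1,N)$.

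The correct argument is the one in the first case of the last part of the proof of Theorem~\ref{thm:completion:1}. Let the $n=N$ branch define $t(0)$, so that $t(0)=t(c)+t(\alpha)$. Then $t(x)+t(\alpha)\ge t(c)+t(\alpha)=t(0)$ on $[0,c]$ follows from monotonicity alone.

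A smaller point: in the sufficiency direction, continuity of $h_\alpha$ cannot be inherited from $T$ and must be assumed. The paper builds it into the definition of a $p$-curve. It is what makes $h_\alpha^{-1}\colon[0,\alpha]\to[c,1]$ well defined and continuous.
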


Going through the corresponding proofs, one realizes that the two problems are solved in a very similar way, since they can be treated iteratively: once an initial piece of the generator is fixed, two of the three terms in Eq.~(\ref{eq:feq:1}) are determined, and the generator can be extended step by step to the whole interval. Further, notice that in both cases the curve $\gamma$ has to fulfill additional conditions in order for $T_{\gamma}$ to be the diagonal (resp.\ horizontal section) of a continuous and Archimedean t-norm.

In this paper, our aim is to further study this problem and determine under which conditions Eq.~(\ref{eq:feq:1}) can be solved beyond the particular cases of the diagonal and the horizontal section. Although it is clear that the problem of solving Eq.~(\ref{eq:feq:1}) can be stated in general, it is hard to know beforehand which curves $\gamma$ will have a nice enough structure to ensure an explicit reconstruction of the t-norm. As a first step, in this paper we focus on studying Eq.~(\ref{eq:feq:1}) for curves $\gamma$ that fix one of the components, i.e., $\gamma(\theta) = (\theta, p(\theta))$ where $p\colon[0,1] \to [0,1]$ is a continuous increasing function. We then assume that the known part of the t-norm corresponds to $f(x)=T(x,p(x))$ for all $x \in [0,1]$. This assumption is reasonable since it generalizes both the diagonal and the horizontal section, which are recovered by taking $p(x)=x$, and $p(x)=\alpha$ for some fixed $\alpha \in (0,1)$, respectively. Moreover, it provides boundary conditions from $f(0)=T(0,p(0))=0$ and $f(1)=T(1,p(1))=p(1)$, and allows us to leverage the monotonicity of both $f$ and $p$ in order to study under which conditions an explicit reconstruction of the additive generator of $T$ is possible.

In this direction, we first identify when the restriction to such a curve forces
the completion to be Archimedean. Specifically, we prove that if a pre-t-norm $T_A$ is
defined on a region containing the curve $\{(x,p(x)) : x \in [0,1]\}$ and satisfies
$T_A(x,p(x)) < \min\{x,p(x)\}$ for all $x \in (0,1)$, then $p(x) \in (0,1)$ for all
$x \in (0,1)$ and every continuous completion of $T_A$ is Archimedean.

\begin{proposition}\label{prop:archimedean} 
Let $p\colon [0,1] \to [0,1]$ be a continuous increasing function and let $T_A \colon A \to [0,1]$ be a continuous pre-t-norm defined on a region $A \subset [0,1]^2$ containing the curve $\{(x,p(x)) : x \in [0,1]\}$. If $T_A$ is completable to a continuous t-norm $T$ and $T_A(x,p(x)) < \min\{x,p(x)\}$ for all $x \in (0,1)$, then $p(x) \in (0,1)$ for all $x \in (0,1)$ and $T$ is Archimedean. Moreover, if $T_A(x,p(x))>0$ for all $x>0$, then $T$ is strict; otherwise, it is nilpotent.
\end{proposition}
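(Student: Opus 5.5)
We work throughout with the continuous completion $T$ and write $f(x)=T(x,p(x))=T_A(x,p(x))$, so that by hypothesis $f(x)<\min\{x,p(x)\}$ for all $x\in(0,1)$.

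\textbf{Step 1: $p(x)\in(0,1)$ on $(0,1)$.} Fix $x\in(0,1)$. If $p(x)=0$, then $f(x)=T(x,0)=0=\min\{x,p(x)\}$, which contradicts the strict inequality. If $p(x)=1$, then $f(x)=T(x,1)=x=\min\{x,1\}$, again a contradiction.

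\textbf{Step 2: $T$ is Archimedean.} Since $T$ is continuous, the ordinal sum representation \cite[Theorem 5.1]{Klement2000} applies: $T$ is Archimedean exactly when it has no idempotent elements in $(0,1)$. Suppose $e\in(0,1)$ is idempotent, so $T(e,e)=e$. In a continuous t-norm every idempotent $e$ satisfies $T(u,v)=\min\{u,v\}$ whenever $u\le e\le v$. This follows from monotonicity: $u=T(u,1)\ge T(u,v)\ge T(u,e)$. Moreover $T(u,e)=u$ for $u\le e$, since by continuity $u=T(e,w)$ for some $w$, and then $T(u,e)=T(w,T(e,e))=T(w,e)=u$.

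Now consider $g(x)=x-p(x)$ on $[0,1]$. This function is continuous, with $g(0)=-p(0)\le 0$ and $g(1)=1-p(1)\ge 0$. We look for $x\in(0,1)$ with $x$ and $p(x)$ lying on opposite sides of $e$ (inclusively).
\begin{itemize}
\item If $p(e)\ge e$, then $x=e$ itself works.
\item If $p(e)<e$, we use that $p$ is increasing, so $p(x)\le p(e)<e$ for $x\le e$. Pick any $x\in(p(e),e)$, or $x$ slightly above $e$. We need $p(x)\le e\le x$ with $x<1$. By continuity, $p(x)<e$ for all $x$ slightly greater than $e$, so such an $x\in(e,1)$ exists.
\end{itemize}
In either case $T(x,p(x))=\min\{x,p(x)\}$, contradicting the hypothesis. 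Hence $T$ has no nontrivial idempotents and is Archimedean.

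\textbf{Step 3: strict or nilpotent.} By Theorem~\ref{th:representation_archimedeantnorms}, $T$ has a continuous additive generator $t$. If $T$ were nilpotent, $t(0)<\infty$. For small $x>0$, continuity gives $t(x)+t(p(x))\to t(0)+t(p(0))\ge t(0)$, while $t(x)<t(0)$. The cleanest route is to use $p(x)\le p(1)$ and argue as follows. Let $s(x)=t(x)+t(p(x))$, which is continuous on $(0,1]$. As $x\to0^+$ we have $t(x)\to t(0)$, so $s(x)\to t(0)+t(p(0))\ge t(0)$. If $p(0)>0$, then $t(p(0))>0$ and $s(x)>t(0)$ near $0$, so $f(x)=0$ for some $x>0$. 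If $p(0)=0$, we still want a point with $f(x)=0$; then we instead note that nilpotency gives some $x>0$ with $T(x,x)=0$ (possible since $t(x)\to t(0)$, so $2t(x)\ge t(0)$ near $0$). Then $p(x)\le x$ along this region forces $f(x)=0$. In the case $p(x)>x$ near $0$, we take $y=x$ small with $t(x)+t(p(x))\ge t(0)$ whenever $t(p(x))\ge t(0)/2$, which holds since $p(x)\to0$.

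So nilpotency implies $f(x)=0$ for some $x>0$. Contrapositively, if $f>0$ on $(0,1]$, then $T$ is strict. Conversely, if $f(x)=0$ for some $x>0$, then $t(x)+t(p(x))\ge t(0)$ with $t(x),t(p(x))<\infty$ (note $p(x)>0$ by Step 1, or by $p(1)>0$ when $x=1$). This forces $t(0)<\infty$, so $T$ is nilpotent.

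The main obstacle is Step 2. It needs the precise behaviour of $T$ around an idempotent, namely $T=\min$ on the rectangles $[0,e]\times[e,1]$, and also the existence of $x\in(0,1)$ with $x$ and $p(x)$ separated by $e$. Endpoint cases require care there.
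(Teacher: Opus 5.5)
Your proof is correct and follows the paper's argument. Step 1 is identical, Step 2 is the paper's idempotent argument, and Step 3 is the additive-generator version of the paper's induced-negation argument, though two simplifications are available. In Step 2, $x=e$ already works when $p(e)<e$, since $p(e)\le e\le e$, so the detour through $x\in(e,1)$ is unnecessary; the stray suggestion $x\in(p(e),e)$ would not separate at all, since $p(x)\le p(e)<x<e$. In Step 3, because $p(0)<1$ you have $t(x)+t(p(x))\to t(0)+t(p(0))>t(0)$ as $x\to0^+$, which covers $p(0)=0$ as well (the limit is then $2t(0)$) and makes your sub-case analysis redundant.
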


\begin{proof}
First, suppose that $p(x)=1$ for some $x \in (0,1)$. Then $T(x,p(x))=T(x,1)=x$, which contradicts $T_A(x,p(x)) < \min\{x,p(x)\}=x$. The case $p(x)=0$ is analogous, since $T(x,p(x))=T(x,0)=0=p(x)$, again contradicting the strict inequality. Hence $p(x) \in (0,1)$ for all $x \in (0,1)$.

Now suppose that $T$ is not Archimedean. Then, by the characterization of continuous t-norms, there exists $\alpha \in (0,1)$ such that $T(\alpha,\alpha)=\alpha$ and $T(x,y) = \min\{x,y\}$ for all $(x,y) \in ([\alpha,1] \times [0,\alpha]) \cup ([0,\alpha] \times [\alpha,1])$. Since $p(\alpha) \in (0,1)$, we have that $(\alpha, p(\alpha))$ belongs to one of these regions, and thus $T(\alpha,p(\alpha)) = \min\{\alpha,p(\alpha)\}$,
which contradicts $T_A(\alpha,p(\alpha)) < \min\{\alpha,p(\alpha)\}$. Hence $T$ is Archimedean.

Finally, since $T$ is continuous and Archimedean, it is either strict or nilpotent.
If $T$ is nilpotent, its induced negation $N_T$ is a strong negation, hence continuous
with $N_T(0)=1$. As $p$ is increasing with $p(x)\in(0,1)$ for all $x\in(0,1)$, we have
$p(0)<1$, so by continuity there exists $x>0$ with $p(x)\leq N_T(x)$, and therefore
$T_A(x,p(x))=T(x,p(x))=0$. Consequently, if $T_A(x,p(x))>0$ for all $x>0$, then $T$
cannot be nilpotent and is thus strict; and if $T_A(x,p(x))=0$ for some $x>0$, then
$T$ is not positive and is thus nilpotent.
\end{proof}

We now generalize the ideas of the diagonal $\delta$ and the horizontal section $h_{\alpha}$ introduced in Theorems \ref{theorem:alsina:diagonal} and \ref{theorem:alsina:horizontal}
to a broader setting that we call a \emph{$p$-curve}. The conditions we impose on this function are precisely those required for it to represent the section $T(x,p(x))$ of a continuous and Archimedean t-norm, as shown in Proposition~\ref{prop:archimedean}.

\begin{definition}\label{def:pcurve}
Let $p\colon [0,1] \to [0,1]$ be a continuous increasing function with $p(x) \in (0,1)$ for all $x \in (0,1)$. We define a \emph{$p$-curve} as a function $f\colon[0,1] \to [0,p(1)]$ which satisfies the following conditions:
\begin{enumerate}[label=(\roman*)]
\item $f$ is continuous with $f(0)=0$ and $f(1)=p(1)$;
\item There exists $c \in [0,1)$ such that $f(x)=0$ for all $x \in [0,c]$ and $f$ is strictly increasing on $[c,1]$;
\item $f(x) < \min\{x,p(x)\}$ for all $x \in (0,1)$.
\end{enumerate}
In this case, we call $c = \sup\{x \in [0,1] \mid f(x)=0\}$ the \emph{zero threshold} of $f$.
\end{definition}

Indeed, the following proposition shows that the section $f(x)=T(x,p(x))$ of any continuous and Archimedean t-norm $T$ is always a $p$-curve.

\begin{proposition}\label{prop:necessity_pcurve}
    Let $p\colon[0,1]\to[0,1]$ be a continuous increasing function with
    $p(x)\in(0,1)$ for all $x\in(0,1)$. If $T$ is a continuous, Archimedean t-norm
    then $f(x)=T(x,p(x))$ is a $p$-curve with zero threshold $c$. Moreover,
    $c=0$ if and only if $T$ is strict.
\end{proposition}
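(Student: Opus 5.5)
We work with an additive generator $t$ of $T$, given by Theorem~\ref{th:representation_archimedeantnorms}, so that
\[
f(x)=t^{(-1)}\bigl(t(x)+t(p(x))\bigr).
\]
Conditions (i)--(iii) of Definition~\ref{def:pcurve} are then checked one at a time.

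For (i), continuity of $f$ follows because $T$ and $p$ are continuous. The boundary values are $f(0)=T(0,p(0))=0$ and $f(1)=T(1,p(1))=p(1)$. Monotonicity of $T$ gives $f(x)\le T(1,p(x))=p(x)\le p(1)$, so $f$ maps into $[0,p(1)]$.

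For (iii), take $x\in(0,1)$, so that $p(x)\in(0,1)$ by hypothesis. Then $t(x)>0$ and $t(p(x))>0$. Since $t$ is strictly decreasing with $t(1)=0$, we get $t(x)+t(p(x))>\max\{t(x),t(p(x))\}=t(\min\{x,p(x)\})$. Applying $t^{(-1)}$ gives $f(x)<\min\{x,p(x)\}$ provided $\min\{x,p(x)\}>0$. The only care needed is that $t^{(-1)}$ is strictly decreasing only on $[0,t(0)]$. If the sum exceeds $t(0)$, then $f(x)=0<\min\{x,p(x)\}$, so the inequality still holds.

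For (ii), set $g(x)=t(x)+t(p(x))$. This is a sum of a strictly decreasing and a decreasing function, so $g$ is strictly decreasing on $[0,1]$, with values in $[0,+\infty]$.
\begin{itemize}
\item Define $c=\sup\{x: g(x)\ge t(0)\}$. By monotonicity of $g$, $f(x)=t^{-1}(\min\{t(0),g(x)\})=0$ exactly for $x\le c$, where continuity of $g$ into the extended reals gives closedness at $c$.
\item On $(c,1]$ we have $g(x)<t(0)$, so $f=t^{-1}\circ g$ is a composition of strictly decreasing maps and hence strictly increasing. Continuity extends this to $[c,1]$.
\item $c<1$ because $g(1)=t(p(1))$, and $p(1)>0$: if $p(1)=0$ then $p\equiv 0$, contradicting $p(x)\in(0,1)$. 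Hence $g(1)=t(p(1))<t(0)$.
\end{itemize}
This also shows $c=\sup\{x: f(x)=0\}$, so $c$ is the zero threshold.

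For the equivalence, first suppose $T$ is strict. Then $t(0)=+\infty$, and for $x>0$ we have $p(x)>0$ when $x<1$ and $p(1)>0$. So $g(x)<\infty$, which forces $c=0$. Conversely, suppose $T$ is nilpotent. The argument in the proof of Proposition~\ref{prop:archimedean} gives some $x>0$ with $p(x)\le N_T(x)$, hence $f(x)=0$ and $c>0$.

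The main obstacle is purely bookkeeping. It consists of handling $t(0)=+\infty$ and the truncation in $t^{(-1)}$ correctly, and checking that the set $\{x: f(x)=0\}$ is exactly the interval $[0,c]$ with strict monotonicity after it. I would handle this through the single function $g$ as above.
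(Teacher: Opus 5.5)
Your proof is correct, and for conditions (ii) and (iii) it takes a different route from the paper's. The paper argues with properties of $T$ itself. It gets that $f$ is increasing from monotonicity of $T$ and $p$. Strict increase on $[c,1]$ comes from conditional cancellativity, namely $T(x,p(y))<T(y,p(y))$ whenever the right-hand side is positive. Condition (iii) is quoted from the known bound $T(x,y)<\min\{x,y\}$ on $(0,1)^2$ for continuous Archimedean t-norms. The strict direction uses positivity. You instead derive all of this from the generator representation through the single function $g(x)=t(x)+t(p(x))$, which is continuous and strictly decreasing into $[0,+\infty]$. This makes the argument self-contained given Theorem~\ref{th:representation_archimedeantnorms}: you re-derive the cancellation and the $\min$-bound instead of citing them. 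It also identifies the zero threshold explicitly as $c=\sup\{x : t(x)+t(p(x))\geq t(0)\}$, which the paper only states informally in the remark after the proposition. The price is the extended-real bookkeeping for $t(0)=+\infty$ and for the truncation in $t^{(-1)}$, which you handle correctly. For the nilpotent direction you fall back on the paper's induced-negation argument, but your own framework gives a shorter proof. If $t(0)<+\infty$, then $p(0)\leq p(1/2)<1$ gives $g(0)=t(0)+t(p(0))>t(0)$. By continuity $g>t(0)$ on a neighbourhood of $0$, so $c>0$ without invoking $N_T$.
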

\begin{proof}
Since $T$ is a t-norm, we have $f(0) = T(0, p(0)) = 0$ and $f(1) = T(1, p(1)) = p(1)$. Moreover, since $T$ is increasing in each component and $p$ is increasing, $f$ is increasing as well. Since $p(1)>0$, by continuity there exists $c \in [0,1)$ such that $f(x)=0$ for all $x \in [0,c]$ and $f(x)>0$ for all $x \in (c,1]$. Furthermore, since $p$ is increasing and $T$ is conditionally cancellative on $(0,1)^2$, for $c \leq x<y \leq 1$ we have
$f(x)=T(x,p(x)) \leq T(x,p(y)) < T(y,p(y))=f(y)$, where the strict inequality holds
because $f(y)>0$. We conclude that $f$ is strictly increasing on $[c,1]$. The condition $f(x) < \min\{x,p(x)\}$ for all $x \in (0,1)$ follows directly from the fact that $T$ is a continuous and Archimedean t-norm, which satisfies $T(x,y) < \min\{x,y\}$ for all $(x,y) \in (0,1)^2$. Finally, $c=0$ if and only if $f(x)>0$ for all $x\in(0,1]$, that is, if and only if
$T(x,p(x))>0$ for all $x\in(0,1]$. If $T$ is strict then it is positive, and since
$p(x)>0$ for all $x\in(0,1]$, this holds. Conversely, if $T$ is nilpotent then
$N_T$ is a strong negation and, as $p(0)<1$, by continuity there exists $x>0$ with
$p(x)\leq N_T(x)$, so $f(x)=0$ and $c>0$.
\end{proof}

\noindent In the particular cases of the diagonal and the horizontal section $h_{\alpha}$, the zero threshold $c$ is given explicitly by $c=t^{(-1)}(\frac{1}{2}t(0))$ and $c=t^{(-1)}(t(0)-t(\alpha))$, respectively, where $t$ is an additive generator of $T$. In the more general framework, $c$ still exists and is unique but can only be characterized implicitly as $c = \sup\{x \in [0,1] \mid f(x)=0\}$, which is the solution of $0 = t^{(-1)}(t(x)+t(p(x)))$.

With the notion of a $p$-curve at hand, we can now return to Eq.~(\ref{eq:feq:1}) and study the conditions under which it can be solved. If $f$ is a $p$-curve, then Eq.~(\ref{eq:feq:1}) reduces to the following functional equation
\begin{equation}\label{eq:completion_genversion}
t(f(x)) = t(x)+t(p(x)), \quad c \leq x \leq 1.
\end{equation}
As in the diagonal and horizontal section cases, we expect the reconstruction of the additive generator to rely on an iterative procedure. As a first step in this direction, we study the behavior of the iterates of $f$, in analogy with the sequences $\{x_n\}$ and $\{y_n\}$ appearing in Theorems~\ref{theorem:alsina:diagonal} and~\ref{theorem:alsina:horizontal}, respectively. Specifically, we prove that the powers of any value $a \in (0,p(1)] \cap (0,1)$ with respect to $f$ generate a partition of the unit interval, without imposing any further conditions on $f$ beyond those in Definition~\ref{def:pcurve}.

\begin{proposition}\label{prop:properties-f-with-pcurve}
    Let $f$ be a $p$-curve with zero threshold $c$ and let $a \in (0,p(1)] \cap (0,1)$. Define $f_c\colon[c,1] \to [0,p(1)]$ as the restriction $f_c = f\mid_{[c,1]}$. Then the following statements hold:
    \begin{enumerate}[label=(\roman*)]
    \item $f_c$ is a continuous strictly increasing function with $f_c(c)=0$ and $f_c(1)=p(1)$. In particular, $f_c$ is a homeomorphism onto its image, and its inverse $f_c^{-1}\colon[0,p(1)] \to [c,1]$ is continuous and strictly increasing.
    \item Denote by $f^{n}=\overbrace{f \circ \cdots \circ f}^{n}$, with
    $f^{0}=\mathrm{id}$, the iterates of $f$, which are well defined on $[0,1]$ since
    $f([0,1]) \subseteq [0,1]$. The sequence $x_n = f^{n}(a)$ is decreasing and converges to $0$, and it is strictly
decreasing as long as $x_n > 0$. Moreover, if
    $$
    n_f = \inf\{n \in \mathbb{N} \mid f^{n}(a) = 0\} \in \mathbb{N} \cup \{+\infty\},
    $$
    then the intervals $[f^n(a),f^{n-1}(a))$, $1 \leq n \leq n_f$, and $[a,1]$ are
    pairwise disjoint and
    $$
        \left(\bigcup_{n=1}^{n_f} [f^n(a),f^{n-1}(a))\right)\cup[a,1]=
        \begin{cases} [0,1], & \text{if } n_f<+\infty,\\ (0,1], & \text{if } n_f=+\infty.\end{cases}
    $$
    Furthermore, $n_f<+\infty$ if and only if $c>0$, in which case
    $f^{n_f-1}(a) \leq c$ and $f^{n}(a) > c$ for all $n \in \{0,\dots,n_f-2\}$.
    \item For each $1 \leq n \leq n_f$, the iterate
    $f_c^{-n}=\overbrace{f_c^{-1} \circ \cdots \circ f_c^{-1}}^n$, with
    $f_c^{0}=\mathrm{id}$, is well defined, continuous and strictly increasing on
    $[f^n(a),f^{n-1}(a))$, and, for $n < n_f$, satisfies $f_c^{-n}=(f^n)^{-1}$ there. Moreover,
    $$
    f_c^{-1}\bigl([f^n(a),f^{n-1}(a))\bigr) \subseteq [f^{n-1}(a),f^{n-2}(a)),
    \quad 2 \leq n \leq n_f,
    $$
    and $f_c^{-1}([f(a),a)) \subseteq [a,1]$, so that
    $f_c^{-n}([f^n(a),f^{n-1}(a))) \subseteq [a,1]$.
    \end{enumerate}
\end{proposition}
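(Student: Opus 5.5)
The plan is to prove the three items in order, using only Definition~\ref{def:pcurve}. The key inequality throughout is $f(x)<x$ for $x\in(0,1)$, which we extend to $x=1$ when needed.

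For (i), continuity and strict monotonicity of $f_c$ on $[c,1]$ are immediate from conditions (i)--(ii). The value $f_c(c)=0$ follows from continuity and $f\equiv 0$ on $[0,c]$, and $f_c(1)=p(1)$ holds by definition. A continuous strictly increasing map of a compact interval is a homeomorphism onto its image, which by the intermediate value theorem is $[0,p(1)]$. Hence $f_c^{-1}$ is continuous and strictly increasing.

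For (ii), we note first that $f$ is increasing on all of $[0,1]$. Since $a\in(0,1)$ and $a\le p(1)$, we get $f(a)<a$. Whenever $x_n>0$ we also have $x_n<1$, so $x_{n+1}=f(x_n)<x_n$. When $x_n=0$ the sequence stays at $0$. Thus $(x_n)$ is decreasing and bounded below, and converges to some $L\ge 0$. By continuity $f(L)=L$, and since $f(x)<x$ on $(0,1)$ this forces $L=0$. The partition statement follows because the half-open intervals $[x_n,x_{n-1})$ are consecutive with strictly decreasing endpoints, and they cover $[x_{n_f},a)$, which equals $[0,a)$ if $n_f<\infty$ and $(0,a)$ otherwise since $x_n\to 0$. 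Adjoining $[a,1]$ gives the claimed union.

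For the threshold claim, suppose $c>0$. Then $x_n\to 0$ gives $x_n\le c$ for some $n$, hence $x_{n+1}=0$ and $n_f<\infty$. Conversely, suppose $c=0$. Then $f(x)>0$ for $x>0$, so by induction no $x_n$ vanishes and $n_f=\infty$. Since $n_f$ is the first index with $x_{n_f}=0=f(x_{n_f-1})$, we get $x_{n_f-1}\le c$. The earlier terms satisfy $x_n>c$ for $n\le n_f-2$, because otherwise $x_{n+1}=0$ with $n+1<n_f$.

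For (iii), we induct on $n$. Take $x\in[x_n,x_{n-1})\subseteq[0,p(1)]$; this inclusion holds since $x_{n-1}\le a\le p(1)$. Then $f_c^{-1}(x)\in[c,1]$ is well defined, and since $f_c$ is strictly increasing we have $f_c^{-1}(x)\in[f_c^{-1}(x_n),f_c^{-1}(x_{n-1}))$.

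For $2\le n\le n_f$ we have $x_{n-1}>0$, hence $x_{n-2}>c$. So $x_{n-2}$ lies in the domain of $f_c$ and $f_c^{-1}(x_{n-1})=x_{n-2}$. Similarly, when $n<n_f$ we have $x_{n-1}>c$ and $f_c^{-1}(x_n)=x_{n-1}$. When $n=n_f$, we have $x_n=0$ and $f_c^{-1}(0)=c\ge x_{n_f-1}$. In both cases the image lies in $[x_{n-1},x_{n-2})$. For $n=1$, $f_c^{-1}([f(a),a))\subseteq[a,f_c^{-1}(a))\subseteq[a,1]$, using $a>c$ (since $f(a)>0$ or $n_f=1$ handled likewise).

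Composing these inclusions shows that $f_c^{-n}$ is well defined, continuous and strictly increasing on $[x_n,x_{n-1})$, with image in $[a,1]$. For $n<n_f$ every intermediate point exceeds $c$, where $f=f_c$, so $f_c^{-n}=(f^n)^{-1}$ on that interval.

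The main obstacle is the boundary bookkeeping at $n=n_f$. There $f^{n_f}$ is not injective and $f_c^{-1}(0)=c$ rather than $x_{n_f-1}$, which is exactly why the identity $f_c^{-n}=(f^n)^{-1}$ is claimed only for $n<n_f$.
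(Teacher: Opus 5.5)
Your proof is correct and follows essentially the same route as the paper: the limit is ruled out as a positive fixed point of $f$, the threshold claims come from $f(x)=0\iff x\le c$, and the interval inclusions for $f_c^{-1}$ are checked piece by piece. You are in fact slightly more careful than the paper, both in the boundary case $n=n_f$ and in justifying $f_c^{-1}(f(a))\ge a$ for $n=1$; the only slip is your opening remark that $f(x)<x$ extends to $x=1$, which fails when $p(1)=1$ (e.g.\ the diagonal), but you never actually use it, since $x_n\le a<1$.
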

\begin{proof}
\begin{enumerate}[label=(\roman*)]
\item Follows directly from Definition~\ref{def:pcurve}.

\item Let $x_n = f^n(a)$ for all $n \in \mathbb{N}$. Since $f(x) < x$ for all
$x \in (0,1]$ by (iii)-Definition~\ref{def:pcurve}, the sequence is strictly decreasing
as long as $x_n > 0$, and equals $0$ thereafter; in particular it is decreasing. If
$f^n(a) = 0$ for some $n \in \mathbb{N}$, then $x_k = 0$ for all $k \geq n$ and
trivially $x_n \to 0$. Otherwise, $f^n(a) > 0$ for all $n \in \mathbb{N}$, the sequence
is strictly decreasing and bounded below, so its limit $L \geq 0$ exists. If $L > 0$,
then by continuity $f(L) = L$, contradicting (iii)-Definition~\ref{def:pcurve}. Hence,
$L = 0$.

For the partition equality, note that the intervals $[f^n(a), f^{n-1}(a))$ are pairwise
disjoint and cover $[f^{n_f}(a), a) = [0, a)$ when $n_f < +\infty$, and $(0,a)$ in
the limit when $n_f = +\infty$. Adding $[a,1]$ gives $[0,1]$ and $(0,1]$,
respectively.

Finally, if $c=0$ then $f(x)>0$ for all $x>0$, so $f^n(a)>0$ for every $n$ and
$n_f=+\infty$. If $c>0$, then $x_n \to 0$ gives $x_n \leq c$ for some $n$, whence
$x_{n+1}=0$ and $n_f<+\infty$. In the latter case $f^{n_f}(a)=0$ means
$f^{n_f-1}(a) \leq c$, and $f^n(a)>c$ for all $n \in \{0,\dots,n_f-2\}$ by minimality
of $n_f$.

\item We first check that $f_c^{-1}$ can be applied on each interval, and then
we iterate. Since $f^{n-1}(a) \leq a \leq p(1)$, the interval $[f^n(a),f^{n-1}(a))$ is
contained in $[0,p(1)]=\Dom(f_c^{-1})$, so $f_c^{-1}$ can be applied.

For $n=1$, as $a \leq p(1)$ we have $f_c^{-1}(a) \leq 1$ and hence
$f_c^{-1}([f(a),a))  \subseteq [a,1]$.

For $2 \leq n \leq n_f$, the map $f_c$ sends $[f^{n-1}(a),f^{n-2}(a))$ onto $[f^{n}(a),f^{n-1}(a))$ when $n \leq n_f-1$, and sends $[c,f^{n_f-2}(a))$ onto $[0,f^{n_f-1}(a))$ when $n=n_f$. In both cases, $f_c^{-1}([f^n(a),f^{n-1}(a)))$ is contained in $[f^{n-1}(a),f^{n-2}(a))$. Iterating $n$ times we obtain
$f_c^{-n}([f^n(a),f^{n-1}(a))) \subseteq [a,1]$. Continuity, strict monotonicity and the identity $f_c^{-n}=(f^n)^{-1}$ are derived from (i).
\end{enumerate}
\end{proof}

What remains to be addressed is under which conditions a recursive argument can be used to completely reconstruct the additive generator from Eq.~(\ref{eq:completion_genversion}). Since any two of the three values involved determine the third, this equation can be read in several ways. Among them, computing $t(f(x))$ from $t(x)$ and $t(p(x))$ is the only one in which the value obtained always corresponds to a point below the two used, since $f(x)<\min\{x,p(x)\}$ for all $x \in (0,1)$, whereas for the remaining alternatives this depends on the position of $p$ with respect to the diagonal, and so does the part of the unit interval that the corresponding recursion is able to reach. We adopt the first reading throughout, and we will see that it already suffices to describe all the completions and combines well with the information given above a level curve. Since each of its steps gives a new value at a point below those already known, we have to fix an initial piece of the generator on an interval $[a,1]$ and propagate its values to $[0,a)$, as in Theorems~\ref{theorem:alsina:diagonal} and~\ref{theorem:alsina:horizontal}. On the other hand, nothing new is obtained where the three arguments already lie in $[a,1]$, so there the equation becomes a condition on the initial piece. This problem is studied in the subsequent sections, first in the particular case where $p$ is bounded away from zero, and then under a compatibility condition between $p$ and the iterates of $f$.

\section{Continuous completion when $p$ is bounded away from zero}\label{sec:constant}

In this section, we study the case where $p(0) > 0$. Since $p$ is increasing, we
have $p(x) \geq p(0) > 0$ for all $x \in [0,1]$, so we may take any
$a \in (0,p(0)]$ as the base of the recursive construction. Under this assumption,
Eq.~\eqref{eq:completion_genversion} has a particularly nice structure. Indeed, if the
additive generator $t$ is known on the interval $[a,1]$, then for any
$x \in [f(a),a)$ two of the three terms in Eq.~\eqref{eq:completion_genversion} are
already determined, since both $f_c^{-1}(x) \in [a,1]$ and $p(f_c^{-1}(x)) \geq a$ lie
in the known region. This allows us to build a recursive argument that extends the
generator step by step to the whole interval. Moreover, we prove that the
resulting expression describes not only some, but all the additive generators of the
t-norms whose $p$-curve section equals $f$.

\begin{theorem}\label{thm:completion:1}
Let $f$ be a $p$-curve with $p(0)>0$ and let $a \in (0,p(0)]$. Then, $T$ is a continuous t-norm with $T(x,p(x))=f(x)$ for all $x \in [0,1]$ if and only if $T$ is an Archimedean t-norm with additive generator
\begin{equation}\label{gen:completion:1}
t(x) = 
\begin{cases}
+\infty, & \text{if } x=0 \text{ and } n_f = +\infty,\\
\displaystyle t_a\!\bigl(f_c^{-n}(x)\bigr) 
+ \sum_{i=1}^{n} t_a\!\bigl(p(f_c^{-i}(x))\bigr),
& \text{if } f^{n}(a) \le x < f^{n-1}(a), \qquad 1 \le n \le n_f, \\[1em]
t_a(x),
& \text{if } a \le x \le 1,
\end{cases}
\end{equation}
where $t_a : [a,1] \to [0,1]$ is a continuous, strictly decreasing function with $t_a(1)=0$, $t_a(a)=1$ and
\begin{equation}\label{eq:consistency_bb}
t_a(f(x)) = t_a(x) + t_a(p(x)), \quad \text{for all } x \in [f_c^{-1}(a),1].
\end{equation}
\end{theorem}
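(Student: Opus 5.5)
We prove the two implications separately, using Proposition~\ref{prop:archimedean}, Proposition~\ref{prop:necessity_pcurve} and the partition of Proposition~\ref{prop:properties-f-with-pcurve}.

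\emph{Necessity.} Let $T$ be a continuous t-norm with $T(x,p(x))=f(x)$. Condition (iii) of Definition~\ref{def:pcurve} gives $f(x)<\min\{x,p(x)\}$ on $(0,1)$, so by Proposition~\ref{prop:archimedean} $T$ is Archimedean. By Theorem~\ref{th:representation_archimedeantnorms} it has a continuous additive generator, and we normalise it so that $t(a)=1$. Set $t_a=t|_{[a,1]}$; it is continuous and strictly decreasing with $t_a(1)=0$ and $t_a(a)=1$.

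On $[c,1]$ the value $f(x)$ is positive or equal to $0=f(c)$. There the pseudo-inverse can be removed and we obtain Eq.~\eqref{eq:completion_genversion}, $t(f(x))=t(x)+t(p(x))$. At $x=c$ with $c>0$ we have $t(0)=t(c)+t(p(c))$, because $T(c,p(c))=0$ sits exactly on the boundary of the zero region, so by continuity $t(c)+t(p(c))=t(0)$.

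For $x\in[f_c^{-1}(a),1]$ we have $f(x)\ge a$, and also $x\ge a$ and $p(x)\ge p(0)\ge a$. All three arguments therefore lie in $[a,1]$, which is the consistency condition \eqref{eq:consistency_bb}.

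Now let $x\in[f^n(a),f^{n-1}(a))$ and put $y_i=f_c^{-i}(x)$. By Proposition~\ref{prop:properties-f-with-pcurve}(iii), $y_n\in[a,1]$ and each $y_i\ge c$, and $p(y_i)\ge p(0)\ge a$. Applying Eq.~\eqref{eq:completion_genversion} at $y_i$ gives $t(y_{i-1})=t(y_i)+t_a(p(y_i))$. Telescoping from $i=1$ to $n$ yields formula \eqref{gen:completion:1}.

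Finally, when $n_f=+\infty$ we have $c=0$, so $T$ is strict by Proposition~\ref{prop:necessity_pcurve} and $t(0)=+\infty$. By the partition in Proposition~\ref{prop:properties-f-with-pcurve}(ii), these pieces cover all of $[0,1]$.

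\emph{Sufficiency.} Let $t$ be defined by \eqref{gen:completion:1} from a $t_a$ satisfying \eqref{eq:consistency_bb}. We must check that $t$ is a continuous, strictly decreasing map $[0,1]\to[0,+\infty]$ with $t(1)=0$, and that it satisfies Eq.~\eqref{eq:completion_genversion} on $[c,1]$ together with $t(0)\le t(x)+t(p(x))$ for $x\le c$. Theorem~\ref{th:representation_archimedeantnorms} then yields a continuous Archimedean $T$ with $T(x,p(x))=f(x)$.

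\emph{The functional equation.} For $x\ge f_c^{-1}(a)$ the equation is exactly \eqref{eq:consistency_bb}. For $x\in[c,f_c^{-1}(a))$ we have $f(x)<a$, so $f(x)$ lies in some block $[f^n(a),f^{n-1}(a))$. Then $f_c^{-1}(f(x))=x$, and the recursive formula at $f(x)$ reads $t(f(x))=t(x)+t_a(p(x))$, provided the formula for $t(x)$ on the neighbouring block is the one shifted by one index. This holds by construction, since $f_c^{-i}(f(x))=f_c^{-(i-1)}(x)$. For $x<c$, $f(x)=0$ and the inequality follows from monotonicity.

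\emph{Strict monotonicity.} Inside each block, $f_c^{-i}$ is increasing and $p$ is increasing, while $t_a$ is decreasing. Hence every term in \eqref{gen:completion:1} is decreasing, and $t_a\circ f_c^{-n}$ is strictly decreasing.

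\emph{Main obstacle: continuity at the block endpoints $f^{n-1}(a)$.} As $x\uparrow f^{n-1}(a)$, we have $f_c^{-n}(x)\to f_c^{-1}(a)$. At that point \eqref{eq:consistency_bb} gives $t_a(f_c^{-1}(a))+t_a(p(f_c^{-1}(a)))=t_a(a)=1$. Hence the level-$n$ expression tends to the level-$(n-1)$ expression evaluated at $f^{n-1}(a)$. This matching of consecutive blocks is the delicate point, and it is precisely what forces the consistency condition on the whole interval $[f_c^{-1}(a),1]$.

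\emph{Behaviour at $0$.} When $n_f=+\infty$, the $n$-th block contributes at least $n\,t_a(p(1))$ and also contains $t_a(f_c^{-n}(x))$; one checks from $t(f^n(a))=t(f^{n-1}(a))+t_a(p(f^{n-1}(a)))\ge t(f^{n-1}(a))+t_a(p(1))\cdot$ (a positive constant) that $t\to+\infty$ as $x\to0$. When $n_f<+\infty$, the last block ends at $0$ with a finite value. In that case the required condition $t(0)=t(c)+t(p(c))$ comes from the same recursion at $x=c$.
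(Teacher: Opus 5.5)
Your plan follows the paper's proof closely. For necessity you normalise $t(a)=1$ and unwind $t(f(y))=t(y)+t(p(y))$ along $y_i=f_c^{-i}(x)$; for sufficiency you check that $t$ is a generator and that the functional equation holds on $[c,1]$, with the inequality below $c$. Your necessity part is correct. At $x=0$ it is even more careful than the paper's, since you derive $t(0)=t(c)+t(p(c))$ by continuity and $t(0)=+\infty$ from strictness via Proposition~\ref{prop:necessity_pcurve}.

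There is, however, one step that fails as written: the divergence of $t$ at $0$ when $n_f=+\infty$. Your lower bound is stated in terms of $t_a(p(1))$, but $p(1)=1$ is allowed. Then $t_a(p(1))=t_a(1)=0$, so $n\,t_a(p(1))$ gives nothing and no ``positive constant'' is available. This happens in the strict case as well. For $T(x,y)=xy$, $p(x)=\frac{1+x}{2}$ and $f(x)=\frac{x(1+x)}{2}$, one gets a $p$-curve with $p(0)>0$, $c=0$, $n_f=+\infty$ and $p(1)=1$.

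The repair is to evaluate at the points $f^n(a)$, where $f_c^{-i}(f^n(a))=f^{n-i}(a)\le a$ for $1\le i\le n$. Then every increment satisfies $t_a\bigl(p(f^{n-i}(a))\bigr)\ge t_a(p(a))$. This bound is positive because $p(a)<1$, which holds since $a\in(0,1)$ and $p$ maps $(0,1)$ into $(0,1)$. This yields $t(f^n(a))\ge 1+n\,t_a(p(a))\to+\infty$. Monotonicity together with $f^n(a)\to 0$ then gives $\lim_{x\to0^+}t(x)=+\infty$.

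Two smaller remarks. First, you should state explicitly that strict decrease on $(0,1]$ follows from piecewise strict decrease together with the junction matching you prove. Second, your comment that the junction matching ``forces the consistency condition on the whole interval $[f_c^{-1}(a),1]$'' is inaccurate. The matching only uses Eq.~\eqref{eq:consistency_bb} at the single point $x=f_c^{-1}(a)$. The remaining points of that interval are what is needed to obtain $T(x,p(x))=f(x)$ for $x\in[f_c^{-1}(a),1]$, and you do use them correctly there.
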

\begin{proof}
First, let $T$ be a continuous t-norm with $T(x,p(x))=f(x)$ for all $x \in [0,1]$. By Proposition \ref{prop:archimedean} we know that $T$ is Archimedean, so there exists an additive generator $t$ such that $t^{(-1)}(t(x)+t(y)) = T(x,y)$. Without loss of generality, we may assume that $t(a)=1$, otherwise we replace $t$ with the additive generator $t(x)/t(a)$. We denote by $t_a$ the restriction of $t$ to $[a,1]$, which is continuous, strictly decreasing and satisfies $t_a(1)=0$ and $t_a(a)=1$.

Let us first check that $t_a$ satisfies Eq.~(\ref{eq:consistency_bb}). If $x \in [f_c^{-1}(a),1]$, then $f(x) \geq a$ and $p(x) \geq p(0) \geq a$, so the three values $t(x)$, $t(p(x))$ and $t(f(x))$ are given by $t_a$. Since $t$ generates $T$ and $f(x)=T(x,p(x))>0$, we obtain $t_a(f(x))=t_a(x)+t_a(p(x))$.

Let $x \in (0,a)$, then there exists $1 \leq n \leq n_f$ such that $x \in [f^n(a),f^{n-1}(a))$, we will prove that $t$ corresponds to the additive generator in Eq.~(\ref{gen:completion:1}) by induction on $n$.

\begin{itemize}
\item If $n=1$ then $x \in [f(a),a)$, so $f_c^{-1}(x) \in [a,1]$, and
\begin{equation*}
x = f_c \circ f_c^{-1}(x) = T(f_c^{-1}(x),p(f_c^{-1}(x))) = t^{(-1)}(t(f_c^{-1}(x)) + t(p(f_c^{-1}(x)))),
\end{equation*}
\begin{equation*}
t(x) = t_a(f_c^{-1}(x)) + t_a(p(f_c^{-1}(x))).
\end{equation*}

\item We assume that Eq.~(\ref{gen:completion:1}) holds for $n$ and we prove it for $n+1$. 
If $x \in [f^{n+1}(a), f^{n}(a))$ then $f_c^{-1}(x) \in [f^{n}(a), f^{n-1}(a))$, and
\begin{equation*}
x = f_c \circ f_c^{-1}(x) = T(f_c^{-1}(x),p(f_c^{-1}(x)))= t^{(-1)}(t(f_c^{-1}(x)) + t(p(f_c^{-1}(x)))),
\end{equation*}
\begin{eqnarray*}
    t(x) & = & t(f_c^{-1}(x)) + t_a(p(f_c^{-1}(x))) = t_a(f_c^{-n}(f_c^{-1}(x))) + \sum_{i=1}^{n} t_a(p(f_c^{-i}(f_c^{-1}(x)))) + t_a(p(f_c^{-1}(x))) \\
         & = & t_a(f_c^{-(n+1)}(x)) + \sum_{i=1}^{n+1} t_a(p(f_c^{-i}(x))).
\end{eqnarray*}
\end{itemize}
If $x=0$ then $t(0)=+\infty$ whenever $n_f=+\infty$ by definition, and otherwise 
\begin{equation*}
t(0) = t_a(f_c^{-n_f}(0)) + \sum_{i=1}^{n_f} t_a(p(f_c^{-i}(0)))< + \infty.
\end{equation*}

For the reverse implication, we first prove that the function defined in Eq.~(\ref{gen:completion:1}) is an additive generator of a continuous and Archimedean t-norm, i.e., we prove that $t$ is a continuous and strictly decreasing function with ${t(1)=0}$.

\begin{description}
    \item[Monotonicity.] Since $t_a$ is strictly decreasing and $f_c^{-1}$ and $p$ are increasing, $t$ is strictly decreasing on each interval $[f^{n}(a),f^{n-1}(a))$, $1 \leq n \leq n_f$, and on $[a,1]$. Moreover, at each junction the values match, since
    \begin{equation*}
    \lim_{x \to a^-} t(x) = t_a(f_c^{-1}(a)) + t_a(p(f_c^{-1}(a))) = t_a(a) = t(a),
    \end{equation*}
    where we used Eq.~(\ref{eq:consistency_bb}) at $x=f_c^{-1}(a)$, and analogously, for $1 \leq n < n_f$,
    \begin{equation*}
    \lim_{x \to f^{n}(a)^-} t(x) = t_a(f_c^{-1}(a)) + t_a(p(f_c^{-1}(a))) + \sum_{i=1}^{n} t_a(p(f^{n-i}(a))) = t_a(a) + \sum_{i=1}^{n} t_a(p(f^{n-i}(a))) = t(f^{n}(a)).
    \end{equation*}
    As the intervals are consecutive and $t$ is strictly decreasing on each of them, we deduce that $t$ is a strictly decreasing function on $(0,1]$. Finally, if $n_f=+\infty$ then $t(0)=+\infty>t(x)$ for all $x \in (0,1]$, and if $n_f<+\infty$ then $0$ belongs to the interval $[f^{n_f}(a),f^{n_f-1}(a))$, on which $t$ is strictly decreasing. Hence $t$ is strictly decreasing on $[0,1]$.
    \item[Continuity.] Since $t_a$, $f$ and $p$ are continuous functions, $t$ is continuous inside each interval of the definition. As the intervals are closed on the left, each junction point $a$ or $f^{n}(a)$ belongs to the piece lying to its right, so the corresponding right-hand limit is immediate, and the computations above show that the left-hand limits coincide with $t(a)$ and $t(f^{n}(a))$, respectively. Thus, it only remains to study the point $x=0$.
    \begin{itemize}
    \item If $n_f < +\infty$, then $t$ is continuous at $x=0$ since only the right-hand limit exists at that point, and we already justified that the function is continuous inside the interval.
    \item If $n_f = +\infty$, then for every $n \geq 1$ and every $x < f^{n}(a)$ we have
    \begin{equation*}
    t(x) \geq t(f^{n}(a)) = 1 + \sum_{i=1}^{n} t_a(p(f^{n-i}(a))) \geq 1 + n\, t_a(p(a)),
    \end{equation*}
    since $f^{n-i}(a) \leq a$ implies $p(f^{n-i}(a)) \leq p(a)$ and hence $t_a(p(f^{n-i}(a))) \geq t_a(p(a))>0$, where $t_a(p(a))>0$ because $p(a)<1$. Letting $n \to +\infty$ we obtain $\lim_{x \to 0^+} t(x) = +\infty = t(0)$.
    \end{itemize}
    \item[Boundary condition.] $t(1)=t_a(1)=0$ by construction.
\end{description}

Finally, let us prove that if $T$ is a t-norm with additive generator in Eq.~(\ref{gen:completion:1}) then $T(x,p(x)) =f(x)$ for all $x\in [0,1]$. We distinguish three cases according to the position of $x$ in the partition of Proposition~\ref{prop:properties-f-with-pcurve}.

\begin{itemize}
\item If $x \in [0,c]$, which is non-empty only when $c>0$ and hence $n_f<+\infty$, then $f(x)=0$ and $c \geq f^{n_f-1}(a)$. We distinguish two subcases.
\begin{itemize}
\item If $n_f=1$ then $c \in [a,1]$ and
\begin{equation*}
t(x) + t(p(x)) \geq t(c) + t(p(c)) = t_a(c)+t_a(p(c)) = t(0).
\end{equation*}
\item If $n_f \geq 2$ then $c < f^{n_f-2}(a)$ and
\begin{eqnarray*}
t(x)+t(p(x)) & \geq & t(c) + t(p(c)) = t_a(f_c^{-n_f+1}(c)) + \sum_{i=1}^{n_f-1} t_a(p(f_c^{-i}(c))) + t_a(p(c)) \\
& = & t_a(f_c^{-n_f}(0)) + \sum_{i=1}^{n_f} t_a(p(f_c^{-i}(0))) = t(0).
\end{eqnarray*}
\end{itemize}
In either case, we have $T(x,p(x)) = 0 = f(x)$.
\item If $x \in (c,a)$, then $f(x)>0$ and there exists a $1 \leq n \leq n_f$ such that $x \in [f^{n}(a),f^{n-1}(a))$. Note that $n \leq n_f-1$ whenever $n_f<+\infty$, since $x>c \geq f^{n_f-1}(a)$. Then $f(x) \in [f^{n+1}(a),f^{n}(a))$ and we have
\begin{eqnarray*}
t(f(x)) &=& t_a(f_c^{-n-1}(f(x))) + \sum_{i=1}^{n+1} t_a(p(f_c^{-i}(f(x)))) = t_a(f_c^{-n}(x)) + \sum_{i=1}^{n+1} t_a(p(f_c^{-i+1}(x))) \\
& = & t_a(f_c^{-n}(x)) + \sum_{i=1}^n t_a(p(f_c^{-i}(x))) + t_a(p(x)) = t(x) + t(p(x)).
\end{eqnarray*}
\item If $x \in [a,1]$, we distinguish two subcases. If $x \in [a,f_c^{-1}(a))$, then $f(x) \in [f(a),a)$ and, applying the case $n=1$ of Eq.~(\ref{gen:completion:1}),
\begin{equation*}
t(f(x)) = t_a(f_c^{-1}(f(x))) + t_a(p(f_c^{-1}(f(x)))) = t_a(x) + t_a(p(x)) = t(x)+t(p(x)).
\end{equation*}
If $x \in [f_c^{-1}(a),1]$, then $f(x) \geq a$ and $p(x) \geq a$, so the three values are given by $t_a$ and the identity $t(f(x))=t(x)+t(p(x))$ is exactly Eq.~(\ref{eq:consistency_bb}).
\end{itemize}
In the second and third cases $f(x)>0$ and $t(f(x))=t(x)+t(p(x))$, hence $T(x,p(x))=t^{(-1)}(t(x)+t(p(x)))=f(x)$.
\end{proof}

As pointed out in Section~\ref{sec:parametric}, $p$-curves are generalizations of both the diagonal and the $\alpha$-section concepts. In particular, taking $p$ to be a constant function $p(x)=\alpha$ for some fixed $\alpha \in (0,1)$, the $p$-curve reduces to an $\alpha$-section and $p(0)=\alpha>0$, so Theorem~\ref{thm:completion:1} applies with $a=\alpha$. In the following corollary, we show that in this case Theorem~\ref{thm:completion:1} reduces exactly to Theorem~\ref{theorem:alsina:horizontal}.
\begin{corollary}\label{cor:horizontal}
Let $\alpha \in (0,1)$ and let $h_\alpha$ be an $\alpha$-section. Then $T$ is a continuous t-norm with $T(x,\alpha)=h_\alpha(x)$ for all $x\in[0,1]$ if and only if $T$ is an Archimedean t-norm with additive generator satisfying $t(\alpha)=1$ and given by
\begin{equation}\label{eq:gen:horizontal}
t(x) = 
\begin{cases}
+\infty, & \text{if } x=0 \text{ and } n_f = +\infty,\\
\displaystyle t_\alpha\!\bigl(h_\alpha^{-n}(x)\bigr) + n,
& \text{if } h_\alpha^{n}(\alpha) \le x < h_\alpha^{n-1}(\alpha), \quad 1 \le n \le n_f,\\
t_\alpha(x),
& \text{if } \alpha \le x \le 1,
\end{cases}
\end{equation}
where $t_\alpha\colon[\alpha,1]\to[0,1]$ is a continuous strictly decreasing function with $t_\alpha(1)=0$ and $t_\alpha(\alpha)=1$.
\end{corollary}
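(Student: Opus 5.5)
The plan is to specialize Theorem~\ref{thm:completion:1} to the constant function $p(x)=\alpha$ and then simplify. First I check the hypotheses. The constant $p\equiv\alpha$ is continuous and increasing, and $p(x)=\alpha\in(0,1)$ for all $x$. Next I check that an $\alpha$-section $h_\alpha$ is a $p$-curve in the sense of Definition~\ref{def:pcurve}:
\begin{itemize}
\item continuity and $h_\alpha(1)=\alpha=p(1)$ are given;
\item $h_\alpha(0)=0$ follows from $h_\alpha(x)=0$ on $[0,c]$;
\item the zero threshold is the same $c$;
\item $h_\alpha(x)<x$ is assumed, and $h_\alpha(x)<\alpha$ for $x\in(0,1)$ follows from strict monotonicity on $[c,1]$ together with $h_\alpha(1)=\alpha$.
\end{itemize}
Continuity of $h_\alpha$ is not listed explicitly in the definition of an $\alpha$-section. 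I will take it as implicit, since the statement concerns continuous completions and any continuous $T$ has a continuous section $T(\cdot,\alpha)$. Since $p(0)=\alpha>0$, Theorem~\ref{thm:completion:1} applies with the admissible base $a=\alpha\in(0,p(0)]$.

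The second step is substitution into Eq.~(\ref{gen:completion:1}). Because $p(y)=\alpha$ for every $y$, each summand becomes $t_a(p(f_c^{-i}(x)))=t_\alpha(\alpha)=1$, so the sum is exactly $n$. This gives $t(x)=t_\alpha(h_\alpha^{-n}(x))+n$ on $[h_\alpha^n(\alpha),h_\alpha^{n-1}(\alpha))$. Here $h_\alpha^{-n}$ denotes the iterated inverse $f_c^{-n}$ of the restriction to $[c,1]$, as in Proposition~\ref{prop:properties-f-with-pcurve}. The branches at $x=0$ and on $[\alpha,1]$ carry over unchanged, which yields Eq.~(\ref{eq:gen:horizontal}).

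The main obstacle is the consistency condition Eq.~(\ref{eq:consistency_bb}), which must be shown to impose no restriction, since the corollary allows an arbitrary $t_\alpha$. The condition is required on $[f_c^{-1}(\alpha),1]$. Because $h_\alpha(1)=\alpha$ and $h_\alpha$ is strictly increasing on $[c,1]$, we get $f_c^{-1}(\alpha)=1$, so the interval reduces to the single point $\{1\}$. At that point the condition reads $t_\alpha(\alpha)=t_\alpha(1)+t_\alpha(\alpha)$, that is $1=0+1$, which holds automatically. Hence every continuous strictly decreasing $t_\alpha$ with $t_\alpha(1)=0$ and $t_\alpha(\alpha)=1$ is admissible.

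Both directions of the equivalence then follow directly from Theorem~\ref{thm:completion:1}. The normalization $t(\alpha)=1$ is the one used there, and $T(x,p(x))=T(x,\alpha)=h_\alpha(x)$.
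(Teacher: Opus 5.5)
Your proposal is correct and follows the paper's proof essentially verbatim. You specialize Theorem~\ref{thm:completion:1} to $p\equiv\alpha$, $f=h_\alpha$, $a=\alpha$, use $f_c^{-1}(\alpha)=1$ to collapse Eq.~(\ref{eq:consistency_bb}) to the trivial identity $t_\alpha(\alpha)=t_\alpha(1)+t_\alpha(\alpha)$, and reduce the sum to $n\,t_\alpha(\alpha)=n$; your explicit check that $h_\alpha$ is a $p$-curve, including the tacitly assumed continuity, is left implicit in the paper but is correct.
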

\begin{proof}
Take $p(x)=\alpha$ for all $x\in[0,1]$, $f=h_\alpha$ and $a=\alpha$ in Theorem~\ref{thm:completion:1}. Since $p$ is constant, $f_c^{-1}(\alpha)=h_\alpha^{-1}(\alpha)=1$, so the interval $[f_c^{-1}(a),1]$ reduces to the single point $1$ and Eq.~(\ref{eq:consistency_bb}) corresponds to
\begin{equation*}
t_\alpha(h_\alpha(1)) = t_\alpha(1)+t_\alpha(\alpha),
\end{equation*}
which holds since $h_\alpha(1)=\alpha$ and $t_\alpha(1)=0$. Hence, the only conditions on $t_\alpha$ are continuity, strict monotonicity and the two boundary values. Moreover, the sum in Eq.~(\ref{gen:completion:1}) reduces to
\begin{equation*}
\sum_{i=1}^{n} t_\alpha\bigl(p(h_\alpha^{-i}(x))\bigr) = n\, t_\alpha(\alpha) = n,
\end{equation*}
which gives Eq.~(\ref{eq:gen:horizontal}).
\end{proof}
Eq.~(\ref{eq:gen:horizontal}) matches the generator formula of Theorem~\ref{theorem:alsina:horizontal} with $\overline{t}=t_\alpha$, and the $\alpha$-section is the degenerate case in which Eq.~(\ref{eq:consistency_bb}) becomes vacuous, since $f_c^{-1}(\alpha)=1$ and hence $t_\alpha$ is entirely free on $[\alpha,1]$. This only happens when $p$ is constant. Otherwise, $f_c^{-1}(a)<1$ and Eq.~(\ref{eq:consistency_bb}) is a non-trivial restriction on $t_a$, since $[f_c^{-1}(a),1]$ is precisely the subinterval on which the three arguments $x$, $p(x)$ and $f(x)$ already lie in $[a,1]$, so there Eq.~(\ref{eq:completion_genversion}) yields no new value of the generator and becomes a requirement on $t_a$.

Further, notice that Theorem~\ref{thm:completion:1} covers many more cases beyond the $\alpha$-section. Indeed, while the $\alpha$-section is recovered by taking $p(x)=\alpha$, any continuous increasing function $p$ with $p(0) > 0$ fits our framework. To illustrate this, we consider the following example.

\begin{example}\label{example:completion:1}
Let $a \in \left(\frac{3-\sqrt{5}}{2}, 1\right)$, $p(x)=a+(1-a)x$ and 
$f(x)=\max\{0,(2-a)x+a-1\}$ for all $x \in [0,1]$. Then $p$ is continuous, 
strictly increasing with $p(x) \geq a$ for all $x \in [0,1]$, and $f$ is a 
$p$-curve with zero threshold $c=\frac{1-a}{2-a}$ since $f(x)=0$ for all 
$x \in [0,c]$, $f$ is strictly increasing on $[c,1]$, $f(1)=1=p(1)$ and 
$f(x) < \min \{x,p(x)\}$ for all $x \in (c,1)$. Moreover, $p(0)=a>c$ since 
$a>\frac{3-\sqrt{5}}{2}$, so Theorem~\ref{thm:completion:1} applies with 
$a=p(0)$. Since the restriction $f_c(x)=(2-a)x+a-1$ is linear, it has the 
following iterates
$$
f_c^n(x)=(2-a)^n(x-1)+1, \quad \text{for all } x \in [c,1] \text{ and } n \in \mathbb{Z}.
$$
In this case $f^n(a)=\max\{0,1-(2-a)^n(1-a)\}$, which is positive if and 
only if $n<\rho$, where $\rho=\frac{-\ln(1-a)}{\ln(2-a)}$, so 
$n_f = \left\lceil \rho \right\rceil$. Since $a > \frac{3-\sqrt{5}}{2}$ is 
equivalent to $(1-a)(2-a)<1$, we have $\rho>1$ and therefore $2 \leq n_f<+\infty$, 
in accordance with the fact that $c>0$ and, by 
Proposition~\ref{prop:archimedean}, every continuous completion is nilpotent.
Further, the generator is given by
\begin{equation}
t(x) = 
\begin{cases}
\displaystyle t_a\bigl((2-a)^{-n}(x-1)+1\bigr) 
+ \sum_{i=1}^{n} t_a\bigl(1+(1-a)(2-a)^{-i}(x-1)\bigr),
& \text{if } x \in [f^{n}(a), f^{\,n-1}(a)), \\
& \qquad\qquad 1 \le n \le n_f, \\[1em]
t_a(x),
& \text{if } x \in [a, 1],
\end{cases}
\end{equation}
where $t_a:[a,1] \to [0,1]$ is a continuous strictly decreasing function with 
$t_a(1)=0$, $t_a(a)=1$, and satisfying the consistency condition
$$
t_a\bigl((2-a)x+a-1\bigr) = t_a(x) + t_a\bigl(a+(1-a)x\bigr),
\qquad \text{for all } x \in \left[\tfrac{1}{2-a},1\right],
$$
since $f_c^{-1}(a)=\frac{1}{2-a}$.
Taking $t_a(x)=\frac{1-x}{1-a}$, which satisfies all the conditions, the sums 
reduce to
$$\frac{(2-a)^{-n}(1-x)}{1-a} + \sum_{i=1}^n (1-x)(2-a)^{-i} = 
\frac{(2-a)^{-n}(1-x)}{1-a}  + (1-x)\frac{1-(2-a)^{-n}}{1-a} = \frac{1-x}{1-a},$$
where we used the geometric series $\sum_{i=1}^n (2-a)^{-i} = 
\frac{1-(2-a)^{-n}}{1-a}$. This is consistent with the fact that both $f$ and 
$t_a$ were constructed precisely from the \L ukasiewicz t-norm, since 
$f(x)=\TLK(x,p(x))=\max\{x+p(x)-1,0\}$ and $t(x)=\frac{1-x}{1-a}$ is an additive 
generator of \TLK.

Nonetheless, there are other valid choices for $t_a$ that do not lead to the 
\L ukasiewicz t-norm. To show some of them, in this concrete example we can 
exploit the fact that both $p$ and $f_c$ are affine and fix the point $1$, so the 
change of variable $u=1-x$ turns them into multiplications by a constant; that is, 
$1-p(x)=(1-a)(1-x)$ and $1-f_c(x)=(2-a)(1-x)$ for all $x\in[c,1]$. Writing 
$g(u)=t_a(1-u)$ for all $u \in [0,1-a]$, and noting that $f=f_c$ on 
$\left[\frac{1}{2-a},1\right]$, the consistency condition becomes
$$
g\bigl((2-a)u\bigr)=g(u)+g\bigl((1-a)u\bigr),
\qquad \text{for all } u \in \left[0,\tfrac{1-a}{2-a}\right],
$$
which is a linear functional equation in a single variable. The choice 
$t_a(x)=\frac{1-x}{1-a}$ corresponds to the particular solution 
$g(u)=\frac{u}{1-a}$, which satisfies the equation because the two constants $1$ 
and $1-a$ add up to the third one, $2-a$. Looking now for solutions of the form 
$g(u)=\frac{u}{1-a}Q(u)$, the same identity is preserved as soon as 
$Q\colon(0,+\infty)\to(0,+\infty)$ is a continuous function unchanged by both 
multiplications, that is
$$
Q\bigl((2-a)u\bigr)=Q\bigl((1-a)u\bigr)=Q(u), \qquad \text{for all } u>0.
$$
Since multiplication becomes addition in logarithmic scale, we write 
$Q(u)=P(\ln u)$ with $P(s)=Q(e^s)$, so that the two equalities above turn into
$$
P\bigl(s+\ln(2-a)\bigr)=P\bigl(s+\ln(1-a)\bigr)=P(s), \qquad \text{for all } s \in \RR,
$$
that is, $P$ is a continuous function having $\ln(2-a)$ and $-\ln(1-a)$ as 
periods. As the set of periods of a continuous function is a closed subgroup of 
\RR, it is either the whole of \RR, which forces $P$ to be constant, or of the 
form $\omega\ZZ$ for some $\omega>0$. Hence a non-constant $P$ exists if, and 
only if, both periods are integer multiples of a common one, that is, if their 
quotient, which is precisely $\rho$, is rational. Writing $\rho=\frac{l}{k}$ in 
lowest terms and $\omega=\frac{\ln(2-a)}{k}$, we get $\ln(2-a)=k\omega$ and 
$-\ln(1-a)=l\omega$, so any continuous $P$ of period $\omega$ is admissible.
Since $\rho=\frac{-\ln(1-a)}{\ln(2-a)}$ is continuous in $a$ and unbounded, 
rational values occur for infinitely many $a$. For instance, taking $\rho=2$, the 
corresponding $a$ is the unique root in $\left(\frac{3-\sqrt{5}}{2},1\right)$ of 
$(1-a)(2-a)^{2}=1$, approximately $0.5345$, for which $\omega=\ln(2-a)$ and 
$n_f=2$.
Fixing such an $a$ and choosing $P$ to be a sinusoid of period $\omega$, we obtain
\begin{equation}\label{eq:ta:oscillating}
t_a(x)=\frac{1-x}{1-a}\left(1+\varepsilon
\sin\left(\frac{2\pi}{\omega}\ln\frac{1-x}{1-a}\right)\right),
\qquad \text{for all } x \in [a,1),
\end{equation}
with the understanding $t_a(1)=0$, which is the limit at $x=1$ because the sine is 
bounded. This function satisfies $t_a(a)=1$, since $\ln 1=0$, and is strictly 
decreasing whenever 
$|\varepsilon|<\bigl(1+(2\pi/\omega)^{2}\bigr)^{-1/2}$, because in the variable 
$s=\ln\frac{1-x}{1-a}$ it takes the form $e^{s}\bigl(1+\varepsilon\sin(2\pi s/\omega)\bigr)$, 
whose derivative is positive under this bound. Multiplicatively periodic factors 
of this kind are the usual source of non-uniqueness in linear functional 
equations in a single variable \cite{Kuczma1978}. Interestingly, for these 
choices the generator of the corresponding completion is still given by a closed 
formula. Indeed, in the variable $u=1-x$ the arguments appearing in 
Eq.~(\ref{gen:completion:1}) become $(2-a)^{-n}u$ and $(1-a)(2-a)^{-i}u$, at 
which $Q$ takes the value $Q(u)$ by construction, so only the linear factors 
contribute and
$$
(2-a)^{-n}+(1-a)\sum_{i=1}^{n}(2-a)^{-i}
=(2-a)^{-n}+1-(2-a)^{-n}=1.
$$
Consequently, the recursion does not introduce new pieces and $t$ is given on the 
whole interval $[0,1]$ by the same expression as $t_a$ in 
Eq.~(\ref{eq:ta:oscillating}), exactly as in the \L ukasiewicz case 
$\varepsilon=0$. In particular $t(0)=\frac{1}{1-a}$, since $-\ln(1-a)=l\omega$ 
makes the sine vanish, so all these generators share the value at $0$ obtained 
for $\varepsilon=0$. The corresponding t-norms are nilpotent and satisfy 
$T(x,p(x))=f(x)$ for all $x\in[0,1]$, yet for $\varepsilon\neq 0$ they are 
different from \TLK, since otherwise $t$ would be a positive multiple of 
$\frac{1-x}{1-a}$ and $Q$ would be constant. Figure~\ref{fig:example1} displays 
some level curves of these t-norms for different values of $\varepsilon$, where 
we can observe that for $\varepsilon \neq 0$ they coincide with \TLK on the curve 
$(x,p(x))$, whereas elsewhere they differ.

\begin{figure}
\centering
\includegraphics[scale=0.85]{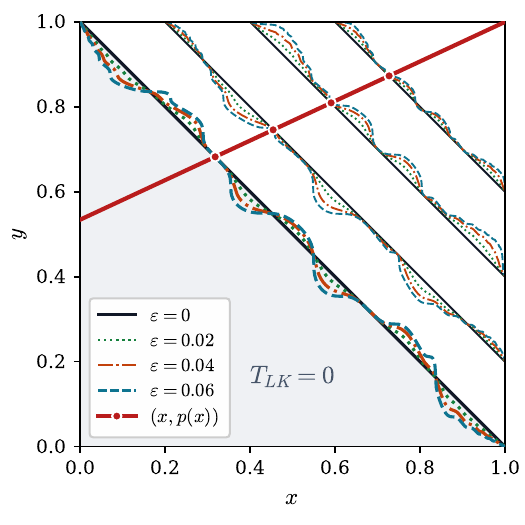}
\caption{Some level curves of the t-norms of Example~\ref{example:completion:1} for $\rho=2$ and different values of $\varepsilon$, the case $\varepsilon=0$ being \TLK. The shaded area is the zero region of \TLK, bounded above by its induced negation, and the remaining lines close to it are the induced negations of the other operators, that is, their level curves of value $0$. The straight red line is the curve $(x,p(x))$, on which all the operators coincide, whereas they differ elsewhere.}\label{fig:example1}
\end{figure}
\end{example}

\section{Continuous completion when $p$ is a suitable path}\label{sec:general}
In this section, we extend the analysis of Section~\ref{sec:constant} to a more general case, in which $p(0) = 0$ is possible. The recursive construction underlying Theorem~\ref{thm:completion:1} relied on the fact that $p(0) > 0$, which guaranteed that $p(x) \geq a$ for all $x \in [0,1]$ for any $a \in (0,p(0)]$. As a consequence, at each step of the recursion the term $p \circ f_c^{-1}(x)$ fell within the region where the generator was already known. When $p(0) = 0$, this property is lost: the values of $p \circ f_c^{-1}(x)$ may be arbitrarily close to zero, and the functional equation can no longer be resolved with our iterative procedure without further assumptions. To preserve the recursive scheme, we therefore need to impose a compatibility condition between $p$ and the iterates of $f$ that ensures $p \circ f_c^{-1}(x)$ always lies in the previously reconstructed region. To make this precise, since $f_c:[c,1] \to [0,p(1)]$ is a continuous strictly increasing function, we may rewrite Eq.~(\ref{eq:completion_genversion}) in the equivalent form
\begin{equation}\label{eq:functional_eq_inverse}
t(x) = t(f_c^{-1}(x)) + t(p \circ f_c^{-1}(x)), \quad 0 \leq x \leq p(1).
\end{equation}
Fix $a \in (0, p(1)] \cap (0,1)$ and suppose that a piece of the generator $t_a : [a,1] \to [0,+\infty)$ is known. For Eq.~(\ref{eq:functional_eq_inverse}) to propagate the values of $t_a$ to the whole interval $[0,a)$ via the partition induced by the iterates of $f$ (see Proposition~\ref{prop:properties-f-with-pcurve}), we need that whenever $x \in [f^n(a), f^{n-1}(a))$ with $1 \leq n \leq n_f$, both $f_c^{-1}(x)$ and $p \circ f_c^{-1}(x)$ lie in the region where $t$ is already determined, namely $[f^{n-1}(a),1]$. The first inclusion is automatic, since $f_c^{-1}(x) \in [f^{n-1}(a), f^{n-2}(a)) \subseteq [f^{n-1}(a),1]$ if $n \geq 2$ or $f_c^{-1}(x) \in [a,1]$ when $n=1$, by (iii)-Proposition~\ref{prop:properties-f-with-pcurve}.

\begin{definition}\label{def:suitable_path}
Let $f$ be a $p$-curve and let $a \in (0, p(1)] \cap (0,1)$. For every $x \in [0,p(1)]$ we define
$$
k_x = \inf \{k \in \{0,\dots,n_f\} \mid p ( f_c^{-1}(x)) \geq f^k(a)\},
$$
that is, $k_x$ is the index of the interval of the partition given in (ii)-Proposition~\ref{prop:properties-f-with-pcurve} containing the value $p(f_c^{-1}(x))$, which is $[f^{k_x}(a),f^{k_x-1}(a))$ when $k_x \geq 1$ and $[a,1]$ when $k_x=0$. We say that $p$ is a \emph{suitable path} if $p(f_c^{-1}(x))$ always lies in the region already covered by the previous steps of the recursion, that is
$$
p ( f_c^{-1}(x)) \geq a, \quad \text{for all } x \in [a,p(1)],
$$
$$
p ( f_c^{-1}(x)) \geq f^{n-1}(a), \quad \text{for all } x \in [f^n(a),f^{n-1}(a)) \text{ and all } n \in \{1,\dots,n_f\}.
$$
Equivalently, $k_x=0$ for all $x \in [a,p(1)]$ and $k_x \leq n-1$ for all $x \in [f^n(a),f^{n-1}(a))$ and all $n \in \{1,\dots,n_f\}$.
\end{definition}
When $p$ is a suitable path, the recursive reconstruction proceeds exactly as in Section~\ref{sec:constant}: on each interval $[f^n(a), f^{n-1}(a))$ both terms on the right-hand side of Eq.~(\ref{eq:functional_eq_inverse}) are already determined from the previous steps, so $t(x)$ is well-defined. However, the generator cannot be written in the same closed form as in Section~\ref{sec:constant}. The reason is that, for $x \in [f^n(a), f^{n-1}(a))$, the value $p \circ f_c^{-1}(x)$ may fall in any interval $[f^{k_x}(a), f^{k_x-1}(a))$ with $k_x \in \{1,\dots, n-1\}$, or in $[a,1]$ when $k_x=0$, so the expression for $t(p \circ f_c^{-1}(x))$ branches according to the value of $k_x$. Since each branch in turn depends on the previously constructed pieces, the generator must be defined recursively: we introduce a family $\{t_a^n\}$ where each $t_a^n$ is defined on $[f^n(a), f^{n-1}(a))$ in terms of the previously constructed pieces $t_a^m$ with $m \leq n-1$, and then assemble them into the candidate generator $t^*$.
\begin{definition}\label{def:operador_generador_sense_restriccions}
Let $f$ be a $p$-curve such that $p$ is a suitable path, let $a \in (0, p(1)] \cap (0,1)$, and let $t_a:[a,1]\to [0,1]$ be a continuous strictly decreasing function with $t_a(1)=0$ and $t_a(a)=1$. We recursively define the family of functions $\{t_a^n\}_{n\in \{0,\dots ,n_f\}}$, with $t_a^0 : [a,1] \to [0,1]$ and $t_a^n : [f^n(a), f^{n-1}(a)) \to [0,+\infty)$ for $n \geq 1$, by
\begin{align*}
    t_a^0(x)&=t_a(x),&x\in[a,1],\\
    t_a^n(x)&=t_a^{n-1}(f_c^{-1}(x))+t_a^{k_x}(p(f_c^{-1}(x))),&x\in [f^n(a),f^{n-1}(a)),\ 1 \leq n \leq n_f.
\end{align*}
The \emph{associated candidate generator} $t^* : [0,1] \to [0, +\infty]$ is defined by assembling these pieces,
\begin{equation}\label{eq:t_star}
    t^*(x) = 
    \begin{cases}
        +\infty, & \text{if } n_f=+\infty \text{ and } x=0, \\
        t_a^n(x), & \text{if } f^n(a) \le x < f^{n-1}(a),\ 1 \le n \le n_f,\\[4pt]
        t_a(x), & \text{if } a \le x \le 1.
    \end{cases}
\end{equation}
\end{definition}
Before characterizing the completions, we have to check that the associated candidate generator is indeed an additive generator. The next result collects the properties of the family $\{t_a^n\}$ and of $t^*$ that are needed for proving this. We prove that each $t_a^n$ is well-defined, so that $t^*$ is well-defined on $[0,1]$, and we give a semi-closed form for it in terms of $t_a$. Moreover, if $t_a$ satisfies a consistency condition at the point $f_c^{-1}(a)$, then $t^*$ is continuous and strictly decreasing with $t^*(1)=0$, that is, it is an additive generator.
\begin{lemma} \label{lema:propietats-previes}
Under the conditions of Definition~\ref{def:operador_generador_sense_restriccions} the following properties hold:
\begin{enumerate}[label=(\roman*)]
    \item The function $t_a^n$ is well-defined for all $n\in\{0,\dots ,n_f\}$ and consequently $t^*$ is well-defined on $[0,1]$.
    \item For each $1 \leq n \leq n_f$ and each $x \in [f^n(a), f^{n-1}(a))$,
    $$t_a^n(x) = t_a(f_c^{-n}(x)) + \sum_{i=1}^n t_a^{k_i}\big(p(f_c^{-i}(x))\big),$$
    where $k_i = k_{f_c^{-(i-1)}(x)}$ for $1 \leq i \leq n$.
\end{enumerate}
If, in addition, $t_a$ satisfies the consistency condition
\begin{equation}\label{eq:consistency_condition}
t_a(p(f_c^{-1}(a))) = 1 - t_a(f_c^{-1}(a)),
\end{equation}
the function $t^*$ given in Eq.~(\ref{eq:t_star}) satisfies the following properties:

\begin{enumerate}[label=(\roman*),start=3]
    \item $t^*$ is continuous on $(0,1]$.
    \item $t^*$ is strictly decreasing on $[0,1]$.
    \item $t^*$ is continuous at $x=0$.
    \item $t^*$ is an additive generator.
\end{enumerate}
\end{lemma}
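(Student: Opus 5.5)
The plan is to follow the structure of the proof of Theorem~\ref{thm:completion:1}, replacing each use of $p(x)\ge a$ by the suitable path property of Definition~\ref{def:suitable_path}.

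\textbf{Parts (i) and (ii).} For (i) I argue by strong induction on $n$. The piece $t_a^0=t_a$ is clearly defined. For $x\in[f^n(a),f^{n-1}(a))$, item (iii) of Proposition~\ref{prop:properties-f-with-pcurve} puts $f_c^{-1}(x)$ in the domain of $t_a^{n-1}$. The suitable path condition gives $k_x\le n-1$, so $p(f_c^{-1}(x))$ lies in the domain $[f^{k_x}(a),f^{k_x-1}(a))$ of $t_a^{k_x}$ (or in $[a,1]$ when $k_x=0$). Both pieces were built earlier, so $t_a^n$ is defined. Since the intervals partition $[0,1]$ (or $(0,1]$), $t^*$ is well defined. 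Part (ii) follows by unrolling the first term of the recursion $n$ times. At step $i$ the argument is $f_c^{-(i-1)}(x)$, which produces the summand $t_a^{k_i}(p(f_c^{-i}(x)))$ with $k_i=k_{f_c^{-(i-1)}(x)}$. This is the same telescoping induction as in Theorem~\ref{thm:completion:1}.

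\textbf{Parts (iii) and (iv).} Next I prove by induction on $n$ that each $t_a^n$ is continuous and strictly decreasing on its interval, and that $\lim_{x\to f^{n-1}(a)^-}t_a^n(x)=t^*(f^{n-1}(a))$.

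The main obstacle is that the index $k_x$ may jump inside an interval, so the summand $t_a^{k_x}(p(f_c^{-1}(x)))$ switches pieces. To handle this, I first show that the function $y\mapsto t^*(y)$ restricted to $[f^{n-1}(a),1]$ is continuous and strictly decreasing; this is the induction hypothesis, pieces glued with matching left limits. Then $t_a^n(x)=t^*(f_c^{-1}(x))+t^*(p(f_c^{-1}(x)))$, both arguments lie in $[f^{n-1}(a),1]$, and so the switching of $k_x$ is harmless. This expression is continuous as a composition. It is strictly decreasing, because the first term is strictly decreasing and the second is decreasing, $p$ being merely increasing.

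For the junction at $a$, the left limit is $t_a(f_c^{-1}(a))+t_a(p(f_c^{-1}(a)))$, which equals $1=t_a(a)$ exactly by Eq.~(\ref{eq:consistency_condition}); note $p(f_c^{-1}(a))\ge a$. For the junction at $f^n(a)$ with $n\ge1$, the left limit of $t_a^{n+1}$ is $t^*(f^{n-1}(a))+t^*(p(f^{n-1}(a)))$, computed by continuity of $t^*$ on $[f^{n}(a),1]$. The right value is $t_a^n(f^n(a))=t^*(f^{n-1}(a))+t^*(p(f^{n-1}(a)))$ by the recursion at $x=f^n(a)$, since $f_c^{-1}(f^n(a))=f^{n-1}(a)$. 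So the two agree. Gluing these gives (iii) and strict decrease on $(0,1]$. Strict decrease on all of $[0,1]$ follows either because $0$ lies in the last interval when $n_f<\infty$, or because $t^*(0)=+\infty$.

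\textbf{Parts (v) and (vi).} For (v), the case $n_f<\infty$ is immediate. When $n_f=\infty$, I bound $t^*(f^n(a))$ from below by iterating $t^*(f^n(a))=t^*(f^{n-1}(a))+t^*(p(f^{n-1}(a)))$. Since $p(f^{n-1}(a))\le p(a)<1$, each increment is at least $t^*(p(a))>0$. Hence $t^*(f^n(a))\ge 1+n\,t_a(p(a))\to\infty$ (here $p(a)\ge a$ by suitability at $x=f(a)$), which gives continuity at $0$. Finally, (vi) collects (iii)–(v) together with $t^*(1)=t_a(1)=0$.
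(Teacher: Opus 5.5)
Your proposal is correct. Parts (i), (ii), (v) and (vi) follow the paper's argument essentially step by step. The real difference is in (iii)--(iv).

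\emph{The paper's route.} It proves continuity first, piece by piece, through a joint induction on two statements: (A$_n$) ``$t_a^n$ is continuous'' and (B$_n$) ``$\lim_{x\to f^n(a)^-}t_a^{n+1}(x)=t_a^n(f^n(a))$''. It then treats explicitly the points where $k_x$ changes value, matching one-sided limits through (B$_{k_{x_0}}$). At the junctions it splits into two cases according to whether $p(f_c^{-1}(x))$ tends to a breakpoint $f^{k}(a)$ or not. Only afterwards, in (iv), does it rewrite the recursion as $t^*(x)=t^*(f_c^{-1}(x))+t^*(p(f_c^{-1}(x)))$, with both arguments in $[f^{n-1}(a),1]$, and use this only for monotonicity.

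\emph{Your route.} You use that rewriting from the start and induct on the single statement that $t^*$ is continuous and strictly decreasing on $[f^{n-1}(a),1]$. Consequently:
\begin{itemize}
\item continuity of $t_a^n$ is just continuity of a composition, and the changes of $k_x$ never need to be tracked;
\item the junction at $f^n(a)$ reduces to evaluating the recursion at $x=f^n(a)$, using $f_c^{-1}(f^n(a))=f^{n-1}(a)$, which is valid since $f^n(a)>0$ for $n<n_f$;
\item the consistency condition handles the remaining junction at $a$.
\end{itemize}

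\emph{What each buys.} Your version is shorter and needs no case analysis. It is also insensitive to $p$ being only non-strictly increasing. If $p\circ f_c^{-1}$ stays equal to a breakpoint $f^{k}(a)$ on a left neighborhood of a junction, the paper's assertion that $k_x=k+1$ there is not literally true. The limit is then trivially correct anyway, but your formulation never needs to know which piece $p(f_c^{-1}(x))$ falls in. The paper's route, in exchange, records continuity of each individual piece $t_a^n$ and the junction identities separately. Nothing beyond what your induction already provides is used later.
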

\begin{proof}
    \begin{enumerate}[label=(\roman*)]
        \item We argue by induction on $n$. The case $n=0$ is immediate, since $t_a^0 = t_a$ is given. For $n=1$, let $x \in [f(a), a)$. Then $f_c^{-1}(x) \in [a, 1]$ lies in the domain of $t_a^0$, so the first term $t_a^0(f_c^{-1}(x))$ is well-defined. For the second term, since $p$ is a suitable path we have $k_x = 0$, and by the definition of $k_x$ the value $p(f_c^{-1}(x))$ lies in $[a, 1]$, which is the domain of $t_a^0$. Therefore $t_a^0(p(f_c^{-1}(x)))$ is well-defined too, and so is $t_a^1(x)$.

Assume now that $n \geq 2$ and that $t_a^m$ is well-defined on its domain for all $0 \leq m \leq n-1$, and let $x \in [f^n(a), f^{n-1}(a))$. Then $f_c^{-1}(x) \in [f^{n-1}(a), f^{n-2}(a))$ lies in the domain of $t_a^{n-1}$, so the first term $t_a^{n-1}(f_c^{-1}(x))$ is well-defined by the induction hypothesis. For the second term, since $p$ is a suitable path we have $k_x \leq n-1$, and by the definition of $k_x$ the value $p(f_c^{-1}(x))$ lies in $[f^{k_x}(a), f^{k_x-1}(a))$, which is the domain of $t_a^{k_x}$ (or in $[a,1]$ if $k_x = 0$). Hence $t_a^{k_x}(p(f_c^{-1}(x)))$ is also well-defined by the induction hypothesis, and therefore $t_a^n(x)$ is well-defined.

Since the intervals $\{[f^n(a), f^{n-1}(a))\}_{n=1}^{n_f} \cup \{[a,1]\}$ form a partition of $(0,1]$, $t^*$ is well-defined on $(0,1]$. Finally, if $n_f<+\infty$ then $0$ belongs to $[f^{n_f}(a),f^{n_f-1}(a))$ and $t^*(0)=t_a^{n_f}(0)$, whereas if $n_f=+\infty$ then $t^*(0)=+\infty$ by definition, so $t^*$ is well-defined on $[0,1]$.
        \item We argue by induction on $n$. For $n = 1$, let $x \in [f(a), a)$. By Definition~\ref{def:operador_generador_sense_restriccions},
$$t_a^1(x) = t_a^0(f_c^{-1}(x)) + t_a^{k_x}(p(f_c^{-1}(x))) = t_a(f_c^{-1}(x)) + t_a^{k_1}(p(f_c^{-1}(x))).$$
Assume now that the identity holds for some $n \geq 1$, and let $x \in [f^{n+1}(a), f^n(a))$. Then $f_c^{-1}(x) \in [f^n(a), f^{n-1}(a))$, so by Definition~\ref{def:operador_generador_sense_restriccions} and the induction hypothesis applied to $f_c^{-1}(x)$,
\begin{align*}
t_a^{n+1}(x) &= t_a^n(f_c^{-1}(x)) + t_a^{k_x}(p(f_c^{-1}(x)))\\
&= t_a(f_c^{-(n+1)}(x)) + \sum_{i=1}^n t_a^{k_{i+1}}(p(f_c^{-(i+1)}(x))) + t_a^{k_1}(p(f_c^{-1}(x)))\\
&= t_a(f_c^{-(n+1)}(x)) + \sum_{i=1}^{n+1} t_a^{k_i}(p(f_c^{-i}(x))).
\end{align*}
        \item We prove by induction on $n \in \{0, \dots, n_f\}$ the following two statements:
\begin{itemize}
    \item[(A$_n$)] The function $t_a^n$ is continuous on its domain.
    \item[(B$_n$)] If $n < n_f$, the following identity holds:
    $$\lim_{x \to f^n(a)^-} t_a^{n+1}(x) = t_a^n(f^n(a)).$$
\end{itemize}

For the base case $n=0$, (A$_0$) is immediate since $t_a^0 = t_a$ is continuous by hypothesis. For (B$_0$), by the definition of $t_a^1$ and the continuity of $t_a$, $p$ and $f_c^{-1}$,
$$\lim_{x \to a^-} t_a^1(x) = \lim_{x \to a^-}\left[t_a(f_c^{-1}(x)) + t_a(p(f_c^{-1}(x)))\right] = t_a(f_c^{-1}(a)) + t_a(p(f_c^{-1}(a))) = 1,$$
where the last equality uses the condition in Eq.~(\ref{eq:consistency_condition}). Since $t_a^0(a) = t_a(a) = 1$, the identity (B$_0$) holds.

Now, assume (A$_m$) and (B$_m$) hold for all $0 \leq m \leq n$, and let us prove (A$_{n+1}$) and (B$_{n+1}$).

\emph{Proof of (A$_{n+1}$).} Fix $x \in [f^{n+1}(a), f^n(a))$. By Definition~\ref{def:operador_generador_sense_restriccions},
$$t_a^{n+1}(x) = t_a^n(f_c^{-1}(x)) + t_a^{k_x}(p(f_c^{-1}(x))).$$
The first summand is continuous in $x$ since $f_c^{-1}$ is continuous and $t_a^n$ is continuous by (A$_n$). For the second summand, note that $k_x$ may take several values within $\{0,\dots,n\}$ as $x$ varies in $[f^{n+1}(a), f^n(a))$. Nonetheless, since $p$, $f_c^{-1}$ and $t_a^{k_x}$ are continuous functions, the only points where the 
continuity of $t_a^{k_x}(p(f_c^{-1}(x)))$ is not immediate are those $x_0\in [f^{n+1}(a),f^n(a))$ where $p(f_c^{-1}(x_0)) = f^{k_{x_0}}(a)$, i.e., points where $k_x$ jumps from $k_{x_0}+1$ to $k_{x_0}$. At any such point, the left and right limits are
$$\lim_{x \to x_0^-} t_a^{k_x}(p(f_c^{-1}(x))) = \lim_{y \to f^{k_{x_0}}(a)^-} t_a^{k_{x_0}+1}(y),$$
$$\lim_{x \to x_0^+} t_a^{k_x}(p(f_c^{-1}(x))) = t_a^{k_{x_0}}(f^{k_{x_0}}(a)),$$
and these agree by (B$_{k_{x_0}}$) since $k_{x_0}\leq n$. Hence $t_a^{n+1}$ is continuous on $[f^{n+1}(a), f^n(a))$.

\emph{Proof of (B$_{n+1}$).} Assume $n+1 < n_f$. We compute the left-hand limit at $x = f^{n+1}(a)$ using Definition~\ref{def:operador_generador_sense_restriccions},
$$\lim_{x \to f^{n+1}(a)^-} t_a^{n+2}(x) = \lim_{x \to f^{n+1}(a)^-}\left[t_a^{n+1}(f_c^{-1}(x)) + t_a^{k_x}(p(f_c^{-1}(x)))\right].$$
Since $f_c^{-1}$ is continuous and strictly increasing, $x \to f^{n+1}(a)^-$ implies $f_c^{-1}(x) \to f^n(a)^-$, hence by (B$_n$),
$$\lim_{x \to f^{n+1}(a)^-} t_a^{n+1}(f_c^{-1}(x)) = t_a^n(f^n(a)).$$
For the second summand, let $z = p(f^n(a))$ and denote $k^* = k_{f^{n+1}(a)}$. Since $p$ is a continuous increasing function, $p(f_c^{-1}(x)) \to z^-$ as $x \to f^{n+1}(a)^-$. We distinguish two cases.
\begin{itemize}
    \item If $z > f^{k^*}(a)$, then $k_x = k^*$ for all $x$ in a left neighborhood of $f^{n+1}(a)$, and by (A$_{k^*}$),
    $$\lim_{x \to f^{n+1}(a)^-} t_a^{k_x}(p(f_c^{-1}(x))) = t_a^{k^*}(z) = t_a^{k^*}(p(f_c^{-1}(f^{n+1}(a)))).$$
    \item If $z = f^{k^*}(a)$, then $k_x = k^* + 1$ for all $x$ in a left neighborhood of $f^{n+1}(a)$, and by (B$_{k^*}$),
    $$\lim_{x \to f^{n+1}(a)^-} t_a^{k_x}(p(f_c^{-1}(x))) = \lim_{y \to f^{k^*}(a)^-} t_a^{k^*+1}(y) = t_a^{k^*}(f^{k^*}(a)) = t_a^{k^*}(p(f_c^{-1}(f^{n+1}(a)))).$$
\end{itemize}
In both cases, applying Definition~\ref{def:operador_generador_sense_restriccions} at $x = f^{n+1}(a)$,
$$\lim_{x \to f^{n+1}(a)^-} t_a^{n+2}(x) = t_a^n(f^n(a)) + t_a^{k^*}(p(f_c^{-1}(f^{n+1}(a)))) = t_a^{n+1}(f^{n+1}(a)),$$
which proves (B$_{n+1}$).

By induction, (A$_n$) and (B$_n$) hold for all $0 \leq n \leq n_f$. Since the intervals $\{[f^n(a), f^{n-1}(a))\}_{n=1}^{n_f} \cup \{[a,1]\}$ form a partition of $(0,1]$, and (B$_n$) ensures continuity at each junction $f^n(a)$, $t^*$ is continuous on $(0,1]$.
        \item We prove by induction on $n \in \{0,\dots,n_f\}$ that $t^*$ is strictly decreasing on $[f^{n}(a),1]$, which gives the result on $(0,1]$ since the intervals $[f^{n}(a),1]$ cover it. The case $n=0$ is immediate, since $t^*$ coincides with $t_a$ on $[a,1]$.

        Assume that $t^*$ is strictly decreasing on $[f^{n-1}(a),1]$ for some $1 \leq n \leq n_f$ and let $x,y \in [f^{n}(a),f^{n-1}(a))$ with $x<y$. By Definition~\ref{def:operador_generador_sense_restriccions} together with Eq.~(\ref{eq:t_star}), and since $k_x$ is by definition the index of the piece containing $p(f_c^{-1}(x))$,
        \begin{equation*}
        t^*(x)=t^*\bigl(f_c^{-1}(x)\bigr)+t^*\bigl(p(f_c^{-1}(x))\bigr).
        \end{equation*}
        Both $f_c^{-1}$ and $p ( f_c^{-1})$ are increasing, so $f_c^{-1}(x)<f_c^{-1}(y)$ and $p(f_c^{-1}(x)) \leq p(f_c^{-1}(y))$. Moreover, both arguments lie in $[f^{n-1}(a),1]$, the first one by (iii)-Proposition~\ref{prop:properties-f-with-pcurve} and the second one because $p$ is a suitable path, so the induction hypothesis applies and gives $t^*(f_c^{-1}(x))>t^*(f_c^{-1}(y))$ together with $t^*(p(f_c^{-1}(x))) \geq t^*(p(f_c^{-1}(y)))$. Adding both relations we obtain $t^*(x)>t^*(y)$, so $t^*$ is strictly decreasing on $[f^{n}(a),f^{n-1}(a))$. As $t^*$ is continuous on $(0,1]$ by (iii) and strictly decreasing on each of the two consecutive intervals $[f^{n}(a),f^{n-1}(a))$ and $[f^{n-1}(a),1]$, it is strictly decreasing on $[f^{n}(a),1]$.

        Finally, if $n_f=+\infty$ then $t^*(0)=+\infty>t^*(x)$ for all $x \in (0,1]$, and if $n_f<+\infty$ then $0$ belongs to $[f^{n_f}(a),f^{n_f-1}(a))$, which is covered by the argument above. Hence $t^*$ is strictly decreasing on $[0,1]$.
        \item We distinguish two cases.
            \begin{itemize}
                \item If $n_f < +\infty$, then $t^*$ is continuous at $x = 0$ since only the right-hand limit exists at that point and we already justified that the function is continuous inside the interval $[0, f^{n_f - 1}(a))$.
                
                \item If $n_f = +\infty$, we show that $\lim_{x \to 0^+} t^*(x) = +\infty$. By (ii), applied to $x = f^n(a)$, using that $f_c^{-n}(f^n(a)) = a$ and $f_c^{-i}(f^n(a)) = f^{n-i}(a)$, and since $f^n(a)$ belongs to the interval of index $n$,
                $$t^*(f^n(a)) = t_a^n(f^n(a)) = t_a(a) + \sum_{i=1}^n t_a^{k_i}(p(f^{n-i}(a))) = 1 + \sum_{i=1}^n t_a^{k_i}(p(f^{n-i}(a))).$$
                By monotonicity of $t^*$ and the fact that $f^{n-i}(a) \leq a$ for all $1 \leq i \leq n$, we have $p(f^{n-i}(a)) \leq p(a)$ and hence
                $$t_a^{k_i}(p(f^{n-i}(a))) = t^*(p(f^{n-i}(a))) \geq t^*(p(a)) > 0,$$
                where $t^*(p(a)) > 0$ since $p(a) < 1$ because $a \in (0,1)$ by the definition of $p$-curve. Hence $t^*(f^n(a)) \geq 1+n\,t^*(p(a))$, which tends to $+\infty$ as $n \to +\infty$. Since $f^n(a) \to 0$ by (ii)-Proposition~\ref{prop:properties-f-with-pcurve} and $t^*$ is decreasing, we conclude that $\lim_{x \to 0^+} t^*(x) = +\infty = t^*(0)$.
            \end{itemize}
        \item It follows from (iii), (iv) and (v) that $t^*$ is a continuous and strictly decreasing function. Furthermore, we know that $t^*(1)=t_a(1)=0$, completing this proof.
    \end{enumerate}
\end{proof}
Under the conditions of Definition~\ref{def:operador_generador_sense_restriccions} and Lemma~\ref{lema:propietats-previes}, given the function $t_a$, its associated candidate generator $t^*$ can be recursively constructed and enjoys the properties collected in Lemma~\ref{lema:propietats-previes}. Note that the recursion only reconstructs the generator below $a$, since on $[f_c^{-1}(a),1]$ the three arguments $x$, $p(x)$ and $f(x)$ already lie in $[a,1]$ and Eq.~(\ref{eq:completion_genversion}) becomes a requirement on $t_a$ itself, just as Eq.~(\ref{eq:consistency_bb}) in Theorem~\ref{thm:completion:1}. These conditions allow us to characterize all continuous t-norms whose $p$-curve section equals $f$.
\begin{theorem}\label{thm:completacio-general}
    Let $f$ be a $p$-curve such that $p$ is a suitable path and ${a \in (0, p(1)] \cap (0,1)}$. Then, $T$ is a continuous t-norm with $T(x,p(x))=f(x)$ for all $x\in [0,1]$ if, and only if, $T$ is an Archimedean t-norm with additive generator given by the associated candidate generator in Eq.~(\ref{eq:t_star}), where $t_a : [a,1] \to [0,1]$ is a continuous, strictly decreasing function with $t_a(1)=0$, $t_a(a)=1$
and
\begin{equation}\label{eq:consistency_general}
t_a(f(x)) = t_a(x) + t_a(p(x)), \quad \text{for all } x \in [f_c^{-1}(a),1].
\end{equation}
\end{theorem}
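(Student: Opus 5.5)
The plan is to mirror the proof of Theorem~\ref{thm:completion:1}, replacing the closed formula by the recursive family $\{t_a^n\}$ and using Lemma~\ref{lema:propietats-previes} for all analytic properties of $t^*$.

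\emph{Necessity.} Let $T$ be a continuous t-norm with $T(x,p(x))=f(x)$. Since $f$ is a $p$-curve, $f(x)<\min\{x,p(x)\}$ on $(0,1)$, so Proposition~\ref{prop:archimedean} shows that $T$ is Archimedean. Take its additive generator normalized by $t(a)=1$, and let $t_a=t|_{[a,1]}$; this is continuous, strictly decreasing, and satisfies $t_a(1)=0$ and $t_a(a)=1$.

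For $x\in[f_c^{-1}(a),1]$ we have $f(x)\ge a>0$. Moreover $p(x)\ge p(f_c^{-1}(a))\ge a$ by the suitable path condition at the point $a$ (or $p(x)\ge f(x)\ge a$ directly). Hence all three arguments lie in $[a,1]$, and since $f(x)>0$ we get $t(f(x))=t(x)+t(p(x))$, which is Eq.~(\ref{eq:consistency_general}).

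Next we show by induction on $n$ that $t=t_a^n$ on $[f^n(a),f^{n-1}(a))$. Take $x$ in this interval with $x>0$, and put $y=f_c^{-1}(x)$. Then $x=f(y)=T(y,p(y))>0$, so $t(x)=t(y)+t(p(y))$. By (iii)-Proposition~\ref{prop:properties-f-with-pcurve}, $y$ lies in the piece of index $n-1$. By definition of $k_x$ and the suitable path condition, $p(y)$ lies in the piece of index $k_x\le n-1$. The induction hypothesis then gives $t(x)=t_a^{n-1}(y)+t_a^{k_x}(p(y))=t_a^n(x)$.

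If $n_f<+\infty$, the point $x=0$ needs care, because $T(y,p(y))=0$ there only gives $t(y)+t(p(y))\ge t(0)$. We handle it by continuity of $t$ and of $t_a^{n_f}$ (Lemma~\ref{lema:propietats-previes}(iii)), letting $x\to0^+$. If $n_f=+\infty$, then $t(0)=+\infty=t^*(0)$ by Proposition~\ref{prop:necessity_pcurve}. Hence $t=t^*$.

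\emph{Sufficiency.} Given $t_a$ satisfying Eq.~(\ref{eq:consistency_general}), evaluating at $x=f_c^{-1}(a)$ yields $1=t_a(f_c^{-1}(a))+t_a(p(f_c^{-1}(a)))$, which is Eq.~(\ref{eq:consistency_condition}). Lemma~\ref{lema:propietats-previes}(vi) then makes $t^*$ an additive generator, so by Theorem~\ref{th:representation_archimedeantnorms} it defines a continuous Archimedean t-norm $T$. It remains to verify $t^{*(-1)}(t^*(x)+t^*(p(x)))=f(x)$ for every $x$, which we do in three cases.

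\begin{enumerate}[label=(\alph*)]
\item For $x\in[f_c^{-1}(a),1]$ this is exactly Eq.~(\ref{eq:consistency_general}).
\item For $x\in(c,f_c^{-1}(a))$, we have $f(x)\in(0,a)$, lying in some piece $[f^m(a),f^{m-1}(a))$ with $m\le n_f$. Applying the recursive definition at the point $f(x)$, and noting $f_c^{-1}(f(x))=x$, gives
$$t^*(f(x))=t_a^{m-1}(x)+t_a^{k_{f(x)}}(p(x))=t^*(x)+t^*(p(x)).$$
This holds because $x$ lies in the piece of index $m-1$ and $p(x)$ lies in the piece of index $k_{f(x)}$. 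This is the identity used in the proof of (iv) of the lemma.
\item For $x\in[0,c]$, which arises only when $n_f<\infty$, we need $t^*(x)+t^*(p(x))\ge t^*(0)$. By monotonicity it suffices to check this at $x=c$. There $f_c(c)=0$, and applying the recursive identity at the point $0$, with $f_c^{-1}(0)=c$, gives $t^*(0)=t^*(c)+t^*(p(c))$.
\end{enumerate}

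\emph{Main obstacle.} The delicate step is this pointwise identity $t^*(y)=t^*(f_c^{-1}(y))+t^*(p(f_c^{-1}(y)))$ for all $y\in[0,a)$. The plan is to read it directly off Definition~\ref{def:operador_generador_sense_restriccions}. This requires checking that $k_y$ is exactly the piece index of $p(f_c^{-1}(y))$, and that $f_c^{-1}(y)$ lies in the piece of index $n-1$, including at the boundary piece $n=n_f$, where $f_c^{-1}$ maps $[0,f^{n_f-1}(a))$ into $[c,f^{n_f-2}(a))$. Once this is established, the rest of the argument is bookkeeping.
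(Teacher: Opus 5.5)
Your proposal is correct and follows essentially the same route as the paper. Necessity goes through Proposition~\ref{prop:archimedean}, the normalization $t(a)=1$, the consistency check on $[f_c^{-1}(a),1]$ and strong induction over the pieces, and sufficiency through Lemma~\ref{lema:propietats-previes}(vi) followed by the same case analysis, with $(c,a)$ and $[a,f_c^{-1}(a))$ merged and $n_f=1$, $n_f\geq 2$ treated uniformly at $c$. Your continuity argument for $t(0)=t_a^{n_f}(0)$ when $n_f<+\infty$ is more careful than the paper, which only asserts this equality, though continuity of $t_a^{n_f}$ at $0$ is item (v) of the lemma (or (A$_{n_f}$) in its proof) rather than (iii).
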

Note that Eq.~(\ref{eq:consistency_general}) is well posed, since for $x \in [f_c^{-1}(a),1]$ we have $f(x) \in [a,p(1)]$ and, $p$ being a suitable path, $k_{f(x)}=0$, which gives $p(x)=p(f_c^{-1}(f(x))) \geq a$, so the three arguments lie in $[a,1]$. Its particular case $x=f_c^{-1}(a)$ is precisely Eq.~(\ref{eq:consistency_condition}), which is the instance required in Lemma~\ref{lema:propietats-previes}, whereas the remaining ones are needed to recover the values of the operator on the curve for $x \in [f_c^{-1}(a),1]$, exactly as Eq.~(\ref{eq:consistency_bb}) does in Theorem~\ref{thm:completion:1}.

\begin{proof}
    First, let $T$ be a continuous t-norm with $T(x,p(x))=f(x)$ for all $x \in [0,1]$. By Proposition \ref{prop:archimedean} we know that $T$ is Archimedean, so there exists an additive generator $t$ such that $t^{(-1)}(t(x)+t(y)) = T(x,y)$. Without loss of generality, we may assume that $t(a)=1$, otherwise, we replace $t$ with the additive generator $t(x)/t(a)$. We denote by $t_a$ the restriction of $t$ to $[a,1]$, which is continuous, strictly decreasing and satisfies $t_a(1)=0$ and $t_a(a)=1$.

    Let us first check that $t_a$ satisfies Eq.~(\ref{eq:consistency_general}). If $x \in [f_c^{-1}(a),1]$, then $f(x) \geq a$ and, as observed above, $p(x) \geq a$, so the three values $t(x)$, $t(p(x))$ and $t(f(x))$ are given by $t_a$. Since $t$ generates $T$ and $f(x)=T(x,p(x)) \geq a>0$, we obtain $t_a(f(x))=t_a(x)+t_a(p(x))$.
    
    Let $x\in(0,a)$, then it follows from Proposition \ref{prop:properties-f-with-pcurve} that there exists $1\leq n\leq n_f$ such that $x\in [f^n(a),f^{n-1}(a))$, we will prove that $t$ corresponds to the additive generator in Eq.~(\ref{eq:t_star}) by induction over $n$.
    \begin{itemize}
        \item If $n=1$ then $x\in[f(a),a)$ and $f_c^{-1}(x),p ( f_c^{-1}(x))\in [a,1]$, since $k_x=0$ because $p$ is a suitable path, and then
        \begin{align*}
            x=f_c\circ f_c^{-1}(x)&=T(f_c^{-1}(x),p(f_c^{-1}(x)))=t^{(-1)}(t(f_c^{-1}(x))+t(p(f_c^{-1}(x)))),\\
            t(x)&=t_a(f_c^{-1}(x))+t_a(p(f_c^{-1}(x)))=t_a^1(x),
        \end{align*}
        where the second identity follows from the first one because $x>0$.
        \item We assume that Eq.~(\ref{eq:t_star}) holds for $n$ and we prove it for $n+1$. If $x\in [f^{n+1}(a),f^n(a))$ then $f_c^{-1}(x)\in [f^n(a),f^{n-1}(a))$, and
            \begin{align*}
                x=f_c\circ f_c^{-1}(x)&=T(f_c^{-1}(x),p(f_c^{-1}(x)))=t^{(-1)}(t(f_c^{-1}(x))+t(p(f_c^{-1}(x)))),\\
                t(x)&=t(f_c^{-1}(x))+t(p(f_c^{-1}(x)))=t_a^n(f_c^{-1}(x))+t_a^{k_x}(p(f_c^{-1}(x)))=t_a^{n+1}(x).
            \end{align*}
    \end{itemize}
    If $x=0$ then $t(0)=+\infty$ whenever $n_f=+\infty$ by definition, and otherwise 
    \begin{equation*}
        t(0)=t_a^{n_f}(0)<+\infty,
    \end{equation*}
    as it is defined as a finite composition and sum of bounded functions. For the reverse implication we first prove that the associated candidate generator defined in Eq.~(\ref{eq:t_star}) is an additive generator of a continuous, Archimedean t-norm, i.e., we need to prove that $t$ is a continuous strictly decreasing function with $t(1)=0$. This is precisely statement (vi) of Lemma~\ref{lema:propietats-previes}, whose hypotheses are fulfilled, since Eq.~(\ref{eq:consistency_condition}) is the particular case $x=f_c^{-1}(a)$ of Eq.~(\ref{eq:consistency_general}).

Finally, let us prove that if $T$ is a t-norm with the additive generator in Eq.~(\ref{eq:t_star}), then $T(x,p(x))=f(x)$ for all $x\in [0,1]$. We distinguish three cases, according to the position of $x$ in the partition of Proposition~\ref{prop:properties-f-with-pcurve}.

\begin{itemize}
\item If $x \in [0,c]$, which is non-empty only when $c>0$ and hence $n_f<+\infty$, then $f(x)=0$ and $c\geq f^{n_f-1}(a)$. Two more cases appear.
\begin{itemize}
    \item If $n_f=1$, then $c, p(c)\in [a,1]$, since $k_0=0$ because $p$ is a suitable path, and
    \begin{equation*}
    t(x)+t(p(x))\geq t(c)+t(p(c))=t_a(c)+t_a(p(c))=t(0).
    \end{equation*}
    \item If $n_f\geq 2$, then $c<f^{n_f-2}(a)$ and, writing $k_0$ for the value of $k_x$ at $x=0$,
    \begin{align*}
        t(x)+t(p(x))&\geq t(c)+t(p(c)) = t_a^{n_f-1}(c)+t_a^{k_0}(p(c)) =t_a^{n_f-1}(f_c^{-1}(0))+t_a^{k_0}(p(f_c^{-1}(0)))\\
        &=t_a^{n_f}(0)=t(0).
    \end{align*}
\end{itemize}
In either case, we have  $T(x,p(x))=0=f(x)$.
\item If $x \in (c,a)$, then $f_c(x)=f(x)>0$ and there exists a $1 \leq n \leq n_f$ such that $x \in [f^{n}(a),f^{n-1}(a))$. Note that $n \leq n_f-1$ whenever $n_f<+\infty$, since $x>c \geq f^{n_f-1}(a)$. Then $f(x) \in [f^{n+1}(a),f^{n}(a))$ and we have
\begin{eqnarray*}
t(f_c(x)) &=& t_a^{n+1}(f_c(x)) = t_a^n(f_c^{-1}(f_c(x)))+t_a^{k_{f_c(x)}}(p(f_c^{-1}(f_c(x)))) = t_a^n(x)+t_a^{k_{f_c(x)}}(p(x)) \\
& = & t(x) + t(p(x)),
\end{eqnarray*}
where $k_{f_c(x)}$ denotes the value of $k_y$ at $y=f_c(x)$, so that $t_a^{k_{f_c(x)}}(p(x))=t^*(p(x))=t(p(x))$ by Eq.~(\ref{eq:t_star}).
\item If $x \in [a,1]$, we distinguish two subcases. If $x \in [a,f_c^{-1}(a))$, then $f(x) \in [f(a),a)$ and, since $p$ is a suitable path, $k_{f(x)}=0$, so that $p(x)=p(f_c^{-1}(f(x))) \geq a$ and, applying the case $n=1$ of Eq.~(\ref{eq:t_star}),
\begin{equation*}
t(f(x)) = t_a(f_c^{-1}(f(x))) + t_a(p(f_c^{-1}(f(x)))) = t_a(x) + t_a(p(x)) = t(x)+t(p(x)).
\end{equation*}
If $x \in [f_c^{-1}(a),1]$, then $f(x) \geq a$ and, by the same argument, $p(x) \geq a$, so the three values are given by $t_a$ and the identity $t(f(x))=t(x)+t(p(x))$ is exactly Eq.~(\ref{eq:consistency_general}).
\end{itemize}
In the second and third cases $f(x)>0$ and $t(f(x))=t(x)+t(p(x))$, hence $T(x,p(x))=t^{(-1)}(t(x)+t(p(x)))=f(x)$.
\end{proof}

Note that Theorem~\ref{thm:completacio-general} generalizes Theorem~\ref{thm:completion:1}. Indeed, if $p(0)>0$ and $a \in (0,p(0)]$, then $p(x) \geq p(0) \geq a$ for all $x \in [0,1]$, so $k_x=0$ for every $x$ in the domain of $k$ and $p$ is trivially a suitable path. In this case all the pieces $t_a^n$ are built from $t_a^0=t_a$ alone, the recursion of Definition~\ref{def:operador_generador_sense_restriccions} collapses into a closed formula, and the consistency condition is imposed on the same interval as in Section~\ref{sec:constant}, recovering the expression obtained there.
\begin{corollary}
Let $f$ be a $p$-curve with $p(0)>0$ and let $a \in (0,p(0)]$. Then $p$ is a suitable path, Eq.~(\ref{eq:consistency_general}) coincides with Eq.~(\ref{eq:consistency_bb}), and for each $1 \leq n \leq n_f$,
\begin{equation*}
t_a^n(x)=t_a\bigl(f_c^{-n}(x)\bigr)+\sum_{i=1}^{n} t_a\bigl(p(f_c^{-i}(x))\bigr),
\qquad f^{n}(a) \le x < f^{n-1}(a).
\end{equation*}
Consequently, the associated candidate generator of Eq.~(\ref{eq:t_star}) is the additive generator given in Eq.~(\ref{gen:completion:1}) and Theorem~\ref{thm:completacio-general} reduces to Theorem~\ref{thm:completion:1}.
\end{corollary}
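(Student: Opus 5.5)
The plan is to verify the four claims in order, each following directly from the hypothesis $p(x)\ge p(0)\ge a$ for all $x\in[0,1]$.

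\emph{Suitable path.} For any $x\in[0,p(1)]$ we have $p(f_c^{-1}(x))\ge p(0)\ge a=f^0(a)$, so the infimum defining $k_x$ in Definition~\ref{def:suitable_path} is attained at $k=0$. Hence $k_x=0$ for every $x$. Both requirements ($k_x=0$ on $[a,p(1)]$ and $k_x\le n-1$ on $[f^n(a),f^{n-1}(a))$ for $n\ge1$) then hold trivially, so $p$ is a suitable path.

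\emph{Consistency conditions.} Eq.~(\ref{eq:consistency_general}) and Eq.~(\ref{eq:consistency_bb}) are literally the same equation, $t_a(f(x))=t_a(x)+t_a(p(x))$, imposed on the same interval $[f_c^{-1}(a),1]$. The only thing to note is that all three arguments lie in $[a,1]$, as required for $t_a$ to be evaluated. This holds since $f(x)\ge a$ there and $p(x)\ge a$ always.

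\emph{Closed formula for $t_a^n$.} I would invoke Lemma~\ref{lema:propietats-previes}(ii), which gives
\[
t_a^n(x)=t_a\bigl(f_c^{-n}(x)\bigr)+\sum_{i=1}^n t_a^{k_i}\bigl(p(f_c^{-i}(x))\bigr), \qquad k_i=k_{f_c^{-(i-1)}(x)}.
\]
We need each $k_i=0$, i.e.\ each point $f_c^{-(i-1)}(x)$ lies in $[0,p(1)]$, the domain of $k$.
\begin{itemize}
\item For $i=1$ this point is $x<a\le p(1)$.
\item For $i\ge2$, Proposition~\ref{prop:properties-f-with-pcurve}(iii) places $f_c^{-(i-1)}(x)$ in $[f^{n-i+1}(a),f^{n-i}(a))$ when $i\le n$. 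This is a subset of $[0,a)\subseteq[0,p(1)]$, because $i-1\le n-1$ makes the interval lie below $a$.
\end{itemize}
So $k_i=0$ for all $i$, and $t_a^{k_i}=t_a^0=t_a$. The stated formula follows, and it is exactly the middle branch of Eq.~(\ref{gen:completion:1}).

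\emph{Agreement of the generators.} Comparing Eq.~(\ref{eq:t_star}) with Eq.~(\ref{gen:completion:1}) branch by branch shows they agree on $[a,1]$, on each interval $[f^n(a),f^{n-1}(a))$, and at $0$ when $n_f=+\infty$. Since the hypotheses and the conditions on $t_a$ coincide, Theorem~\ref{thm:completacio-general} reduces to Theorem~\ref{thm:completion:1}.

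The only delicate point is the domain check for the $k_i$, and it is immediate from Proposition~\ref{prop:properties-f-with-pcurve}(iii).
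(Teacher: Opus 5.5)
Your proposal is correct and follows the paper's argument essentially step for step: $p(x)\ge p(0)\ge a$ forces $k_x=0$ everywhere, so $p$ is a suitable path and the two consistency conditions are the same equation on $[f_c^{-1}(a),1]$, and Lemma~\ref{lema:propietats-previes}(ii) with all $k_i=0$ then yields the closed formula of Eq.~(\ref{gen:completion:1}). The only addition is your explicit check that each point $f_c^{-(i-1)}(x)$ lies in $[0,p(1)]$, which the paper leaves implicit.
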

\begin{proof}
Since $p$ is increasing, $p(x) \geq p(0) \geq a$ for all $x \in [0,1]$, so $p(f_c^{-1}(x)) \geq a = f^0(a)$ and hence $k_x=0$ for all $x \in [0,p(1)]$, that is, $p$ is a suitable path and we are under the conditions of Theorem~\ref{thm:completacio-general}. In particular, $f_c^{-1}(x)$ and $p(f_c^{-1}(x))$ lie in $[a,1]$ for all $x \in [a,p(1)]$, so Eq.~(\ref{eq:consistency_general}) is imposed on the same interval $[f_c^{-1}(a),1]$ as Eq.~(\ref{eq:consistency_bb}) and both conditions coincide. Finally, all the indices $k_i$ appearing in (ii)-Lemma~\ref{lema:propietats-previes} are zero, so $t_a^{k_i}=t_a$ and
\begin{equation*}
t_a^n(x)=t_a\bigl(f_c^{-n}(x)\bigr)+\sum_{i=1}^{n} t_a\bigl(p(f_c^{-i}(x))\bigr),
\qquad f^{n}(a) \le x < f^{n-1}(a),
\end{equation*}
for each $1 \leq n \leq n_f$, which is precisely Eq.~(\ref{gen:completion:1}).
\end{proof}

Let us highlight that the suitable path condition is the weakest one under which our recursive scheme can be carried out, since it demands exactly that each value $p(f_c^{-1}(x))$ falls in the region already reconstructed, and nothing more. This generality clearly has a price: the index $k_x$ is only known to satisfy $k_x \leq n-1$, so the second summand in Definition~\ref{def:operador_generador_sense_restriccions} may come from any of the previous pieces and $t_a^n$ is given only implicitly, through the recursion itself. If we impose additional conditions on $p$, then the generator can be described more explicitly. The previous corollary is the extreme case $k_x=0$, in which the second summand always comes from $t_a$ and the recursion does not branch at all. We now consider the situation in which the curve lies above the diagonal, that is
\begin{equation}\label{eq:cond-corollary}
    p(x)\geq x, \quad \text{for all } x \in [0,1
\end{equation}
Note that Eq.~(\ref{eq:cond-corollary}) forces $p(1)=1$, and that it implies that $p$ is a suitable path. Indeed, if $x \in [f^n(a),f^{n-1}(a))$ with $1 \leq n \leq n_f$, then $p(f_c^{-1}(x)) \geq f_c^{-1}(x) \geq f^{n-1}(a)$, which is $k_x \leq n-1$, and the same argument for $x \in [a,1]$ gives $k_x=0$. The relevant consequence, however, is that both $f_c^{-1}$ and $p \circ f_c^{-1}$ map each point to a larger one. Each application of the recursion replaces a value of the generator by two values at higher points, so all the terms obtained after $n$ applications lie in $[a,1]$, where $t_a$ is defined. Unlike in the general case, no branch has to be stopped earlier than the others, and the recursion unfolds into a sum of $2^n$ terms.

To make this precise, observe first that Definition~\ref{def:operador_generador_sense_restriccions} gives
\begin{equation}\label{eq:tstar-rec}
t^*(x)=t^*(f_c^{-1}(x))+t^*(p ( f_c^{-1}(x))), \quad \text{for all } x \in [0,1],
\end{equation}
where for $x<a$ this is the recursion itself, whereas for $x \in [a,1]$ it is Eq.~(\ref{eq:consistency_general}) written at the point $f_c^{-1}(x)$. Applying Eq.~(\ref{eq:tstar-rec}) $n$ times, at each step the argument is composed either with $f_c^{-1}$ or with $p \circ f_c^{-1}$, so we obtain one summand for each word of length $n$ in these two functions. Grouping the words by the number $k$ of times that $p \circ f_c^{-1}$ appears, we define the following family of sets
\begin{equation*}
    S_{n,k} = \left\{w=(w_1,\dots,w_n)\ \mid\ w_i\in \{f_c^{-1}, p\circ f_c^{-1}\}, \#\{i:w_i=p\circ f_c^{-1}\}=k\right\},
\end{equation*}
for $n \in \NN$ and $0 \leq k \leq n$, whose cardinality is $\binom{n}{k}$, where we write $g_w=w_1 \circ \cdots \circ w_n$ for the composition associated with the word $w$.
\begin{corollary}\label{cor:diag}
   Let $f$ be a $p$-curve such that $p(x) \geq x$ for all $x \in [0,1]$, and let $a \in (0,1)$. Then $p$ is a suitable path and $T$ is a continuous t-norm with $T(x,p(x))=f(x)$ for all $x\in [0,1]$ if, and only if, $T$ is an Archimedean t-norm with additive generator given by the associated candidate generator in Eq.~(\ref{eq:t_star}) with
    \begin{equation}\label{eq:diag-gen}
    t_a^n(x)=\displaystyle\sum_{k=0}^n \ \sum_{w \in S_{n,k}} t_a (g_w(x)),
    \quad \text{if } f^{n}(a) \le x < f^{\,n-1}(a), \qquad 1 \le n \le n_f,
    \end{equation}
    where $t_a : [a,1] \to [0,1]$ is a continuous, strictly decreasing function with $t_a(1)=0$, $t_a(a)=1$
and satisfying Eq.~(\ref{eq:consistency_general}).   
\end{corollary}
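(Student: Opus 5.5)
The corollary splits into two independent claims, and I will treat them separately. First, $p$ is a suitable path; this was already observed right after Eq.~(\ref{eq:cond-corollary}). Note also that $p(1)=1$, so every $a\in(0,1)$ satisfies $a \in (0,p(1)]\cap(0,1)$. For $x\in[a,1]$ we have $p(f_c^{-1}(x))\geq f_c^{-1}(x)\geq x \geq a$, so $k_x=0$. For $x\in[f^n(a),f^{n-1}(a))$, part (iii) of Proposition~\ref{prop:properties-f-with-pcurve} gives $f_c^{-1}(x)\geq f^{n-1}(a)$, hence $p(f_c^{-1}(x))\geq f^{n-1}(a)$ and $k_x\leq n-1$. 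Theorem~\ref{thm:completacio-general} then applies and yields the full equivalence, with generator $t^*$. It therefore only remains to prove that the pieces $t_a^n$ of Definition~\ref{def:operador_generador_sense_restriccions} have the closed form (\ref{eq:diag-gen}).

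I would prove this by induction on $n$, but in a slightly stronger form that avoids tracking the indices $k_x$. The claim is: for every $m\geq 0$ and every $y\in[0,1]$ with $y\geq f^{m}(a)$ (resp.\ $y\in(0,1]$), we have
\begin{equation*}
t^*(y)=\sum_{k=0}^{m}\sum_{w\in S_{m,k}} t_a(g_w(y)),
\end{equation*}
and moreover every $g_w(y)$ lies in $[a,1]$. The key observation is that both maps $f_c^{-1}$ and $p\circ f_c^{-1}$ send $[f^{j}(a),f^{j-1}(a))$ into $[f^{j-1}(a),1]$ for $j\geq 1$, and send $[a,1]$ into $[a,1]$, since $f_c^{-1}(y)\geq y$ and $p\geq \mathrm{id}$. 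By induction, for $y\in[f^{n}(a),f^{n-1}(a))$ and any word $w$ of length $n$, the point $g_w(y)$ lies in $[a,1]$, where $t^*=t_a$. The case $m=0$ is trivial.

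For the inductive step I use Eq.~(\ref{eq:tstar-rec}), $t^*(y)=t^*(f_c^{-1}(y))+t^*(p(f_c^{-1}(y)))$. Applying the induction hypothesis to both arguments expresses $t^*(y)$ as a sum over words $(w_1,w')$ with $w_1\in\{f_c^{-1},p\circ f_c^{-1}\}$. Here the ordering convention has to be handled: $g_{w}=w_1\circ\cdots\circ w_n$ applies $w_n$ first, whereas the recursion applies the new letter first. Since the sum runs over all words of length $n$, this is only a relabelling (reversing words is a bijection of each $S_{n,k}$). So I would simply prepend or append consistently and note that $\bigcup_k S_{n,k}$ is the set of all words. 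Taking $y=x\in[f^n(a),f^{n-1}(a))$ and $m=n$ gives exactly (\ref{eq:diag-gen}), because $t^*(x)=t_a^n(x)$ there.

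The main obstacle is justifying Eq.~(\ref{eq:tstar-rec}) uniformly. It must hold in two regimes. For $y<a$ it is the defining recursion: $k_y$ indexes the piece containing $p(f_c^{-1}(y))$, so $t_a^{k_y}(p(f_c^{-1}(y)))=t^*(p(f_c^{-1}(y)))$. For $y\in[a,1]$ it is Eq.~(\ref{eq:consistency_general}) at the point $f_c^{-1}(y)$, which is available only when $y\leq p(1)=1$. The delicate point is that after fewer than $m$ steps a branch may already have entered $[a,1]$, and the unfolding must continue there. This is legitimate precisely because $t^*=t_a$ satisfies Eq.~(\ref{eq:consistency_general}), so the recursion is valid on $[a,1]$ as well. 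This is why I state the induction for arbitrary $y$ rather than only on the strip of index $n$.
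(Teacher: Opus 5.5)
Your proposal is correct and follows essentially the same route as the paper: you check the suitable-path property, invoke Theorem~\ref{thm:completacio-general}, unfold Eq.~(\ref{eq:tstar-rec}) (valid on $[a,1]$ thanks to Eq.~(\ref{eq:consistency_general})) into a sum over all words of length $n$, and verify that every $g_w(x)$ lies in $[a,1]$. The paper gets that last fact from the bound $g_w(x)\geq f_c^{-n}(x)\geq f_c^{-n}(f^n(a))\geq a$, whereas you do it strip by strip. The ordering issue you flag is harmless, since applying the induction hypothesis at $w_1(y)$ gives $g_{w'}\circ w_1$, which simply appends the new letter at the end of the word.
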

\begin{proof}
    We have already seen above that $p$ is a suitable path, so we are under the conditions of Theorem~\ref{thm:completacio-general}. Applying Eq.~(\ref{eq:tstar-rec}) $n$ times to a point $x \in [f^n(a),f^{n-1}(a))$ we obtain
    \begin{equation*}
        t^*(x)=\sum_{w \in S_{n,0}} t^*(g_w(x))+\sum_{w \in S_{n,1}} t^*(g_w(x))+\cdots+\sum_{w \in S_{n,n}} t^*(g_w(x))=\sum_{k=0}^{n} \sum_{w \in S_{n,k}} t^*(g_w(x)).
    \end{equation*}
    It remains to check that every argument lies in $[a,1]$, so that $t^*$ can be replaced by $t_a$. Each $w_i$ satisfies $w_i(y) \geq f_c^{-1}(y)$ for all $y \in [0,1]$, and both $f_c^{-1}$ and $p \circ f_c^{-1}$ are increasing, so $g_w(x) \geq f_c^{-n}(x) \geq f_c^{-n}(f^n(a)) \geq a$, whereas $g_w(x) \leq 1$ is immediate. This gives Eq.~(\ref{eq:diag-gen}). Note that the two extreme sets are singletons, the composition associated with the only word of $S_{n,0}$ is $f_c^{-n}$ and the one associated with the only word of $S_{n,n}$ is the composition of $p\circ f_c^{-1}$ $n$ times.
\end{proof}
If, in addition to Eq.~(\ref{eq:cond-corollary}), the functions $p$ and $f_c^{-1}$ commute, then the compositions appearing in Eq.~(\ref{eq:diag-gen}) can be simplified. Indeed, given $n\in\mathbb{N}$ and $0\leq k\leq n$, the composition $g_w$ associated with a word $w \in S_{n,k}$ consists of $n$ functions among which $p \circ f_c^{-1}$ appears $k$ times, so moving all the copies of $p$ to the right we obtain $g_w=f_c^{-n}\circ p^k$. Hence the $\binom{n}{k}$ words of $S_{n,k}$ all produce the same composition and the inner sum in Eq.~(\ref{eq:diag-gen}) reduces to a single term with multiplicity $\binom{n}{k}$, which turns that expression into a fully closed one.
\begin{corollary}\label{cor:commute}
   Let $f$ be a $p$-curve such that $p(x) \geq x$ for all $x \in [0,1]$, and let $a \in (0,1)$. If $p$ and $f_c^{-1}$ commute, then $T$ is a continuous t-norm with $T(x,p(x))=f(x)$ for all $x\in [0,1]$ if, and only if, $T$ is an Archimedean t-norm with additive generator given by the associated candidate generator in Eq.~(\ref{eq:t_star}) with
    \begin{equation}\label{eq:cor:diag-eq-simplificada}
    t_a^n(x)= 
         \displaystyle\sum_{k=0}^n \ \binom{n}{k}t_a(f_c^{-n}( p^k(x))),
    \quad \text{if } f^{n}(a) \le x < f^{\,n-1}(a), \qquad 1 \le n \le n_f,
    \end{equation} 
    where $t_a : [a,1] \to [0,1]$ is a continuous, strictly decreasing function with $t_a(1)=0$, $t_a(a)=1$
and satisfying Eq.~(\ref{eq:consistency_general}).   
\end{corollary}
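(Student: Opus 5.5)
The plan is to reduce Corollary~\ref{cor:commute} to Corollary~\ref{cor:diag}. The hypotheses here are those of Corollary~\ref{cor:diag} together with commutation, so Corollary~\ref{cor:diag} applies directly. It shows that $p$ is a suitable path, and that $T$ is a continuous t-norm with $T(x,p(x))=f(x)$ if and only if $T$ is Archimedean with generator $t^*$ built from pieces $t_a^n$ given by Eq.~(\ref{eq:diag-gen}), with $t_a$ subject to the same conditions. It therefore suffices to show that, when $p\circ f_c^{-1}=f_c^{-1}\circ p$, each inner sum $\sum_{w\in S_{n,k}} t_a(g_w(x))$ equals $\binom{n}{k}\,t_a(f_c^{-n}(p^k(x)))$ for $x\in[f^n(a),f^{n-1}(a))$.

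The key step is the identity $g_w=f_c^{-n}\circ p^k$ for every $w\in S_{n,k}$, which I would prove by induction on $n$. Each letter $w_i$ is either $f_c^{-1}$ or $p\circ f_c^{-1}$, so $g_w$ is a composition of $n$ copies of $f_c^{-1}$ interleaved with $k$ copies of $p$. The commutation relation lets me move each $p$ one position to the right past an $f_c^{-1}$, and repeating this pushes every $p$ to the rightmost positions. Formally, write $g_w=w_1\circ g_{w'}$ with $w'\in S_{n-1,k}$ or $w'\in S_{n-1,k-1}$. In the first case $g_w=f_c^{-1}\circ f_c^{-(n-1)}\circ p^k$. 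In the second, $g_w=p\circ f_c^{-1}\circ f_c^{-(n-1)}\circ p^{k-1}=f_c^{-n}\circ p^k$, after commuting $p$ past $n$ copies of $f_c^{-1}$. Since $|S_{n,k}|=\binom{n}{k}$, the inner sum collapses to the required single term with multiplicity $\binom{n}{k}$.

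The main obstacle is domains. The function $f_c^{-1}$ is defined only on $[0,p(1)]=[0,1]$; note that $p(1)=1$ because $p(x)\ge x$. Commutation must therefore be read as $p(f_c^{-1}(y))=f_c^{-1}(p(y))$ for all $y\in[0,1]$, and I need every intermediate argument to stay in $[0,1]$. This holds because $p$ maps $[0,1]$ into $[0,1]$ and $f_c^{-1}$ maps $[0,1]$ into $[c,1]\subseteq[0,1]$, so every partial composition is defined and the rearrangement is legitimate. It remains to check that $f_c^{-n}(p^k(x))\in[a,1]$, so that $t_a$ may be evaluated there. This follows from the argument already given in Corollary~\ref{cor:diag}, since $g_w(x)\ge f_c^{-n}(x)\ge a$ and the two expressions are the same function. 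Substituting into Eq.~(\ref{eq:diag-gen}) then yields Eq.~(\ref{eq:cor:diag-eq-simplificada}), and the equivalence follows from Corollary~\ref{cor:diag}.
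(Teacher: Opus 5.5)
Your proposal is correct and follows the paper's own proof: it applies Corollary~\ref{cor:diag} and collapses each inner sum using the identity $g_w=f_c^{-n}\circ p^k$ for every $w\in S_{n,k}$, which the paper justifies by moving all copies of $p$ to the right. Your induction on the first letter of $w$ and your domain check (that $p(1)=1$, so $f_c^{-1}$ is defined on all of $[0,1]$) simply make that rearrangement precise.
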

\begin{proof}
    We can directly apply Corollary \ref{cor:diag} together with the identity $g_w=f_c^{-n}\circ p^k$, valid for every $w \in S_{n,k}$, obtaining
    \begin{equation*}
        t_a^n(x)=\displaystyle\sum_{k=0}^n \ \sum_{w \in S_{n,k}} t_a (g_w(x))=\displaystyle\sum_{k=0}^n \ \binom{n}{k} t_a( f_c^{-n}( p^k(x))),\quad \text{if } f^{n}(a) \le x < f^{\,n-1}(a), \qquad 1 \le n \le n_f.
    \end{equation*}
\end{proof}

As a last corollary, we recover the diagonal, which is the second classical case mentioned in Section~\ref{sec:parametric}. Recall that Theorem~\ref{thm:completion:1} already recovered the $\alpha$-section of Theorem~\ref{theorem:alsina:horizontal} through Corollary~\ref{cor:horizontal}, whereas the diagonal was left aside, since taking $p(x)=x$ gives $p(0)=0$ and the recursion of Section~\ref{sec:constant} does not apply. This case is now covered, because $p(x)=x$ satisfies $p(x) \geq x$ and commutes with $f_c^{-1}$, so Corollary~\ref{cor:commute} applies and Theorem~\ref{theorem:alsina:diagonal} is obtained as a particular case of Theorem~\ref{thm:completacio-general}.
\begin{corollary}\label{cor:diagonal}
    Let $f$ be a $p$-curve such that $p(x)=x$ for all $x\in [0,1]$ and let $a \in (0,1)$. Then $T$ is a continuous t-norm with $T(x,p(x))=f(x)$ for all $x\in [0,1]$ if, and only if, $T$ is an Archimedean t-norm with additive generator given by the associated candidate generator in Eq.~(\ref{eq:t_star}) with
    \begin{equation}\label{eq:cor:p(x)=x}
    t_a^n(x)= 
          2^n t_a(f_c^{-n}(x)),
    \quad \text{if } f^{n}(a) \le x < f^{\,n-1}(a), \qquad 1 \le n \le n_f,
    \end{equation}
    where $t_a : [a,1] \to [0,1]$ is a continuous, strictly decreasing function with $t_a(1)=0$, $t_a(a)=1$ and satisfying Eq.~(\ref{eq:consistency_general}), which in this case is reduced to $t_a(f(x))=2\,t_a(x)$ for all $x \in [f_c^{-1}(a),1]$.
\end{corollary}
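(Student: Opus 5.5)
The plan is to derive this directly from Corollary~\ref{cor:commute}, taking $p=\mathrm{id}$. The first step is to check its hypotheses. The function $p(x)=x$ is continuous and increasing, and $p(x)\in(0,1)$ for $x\in(0,1)$, so the notion of a $p$-curve makes sense. Moreover $p(x)\geq x$ holds trivially, with equality, and $p\circ f_c^{-1}=f_c^{-1}=f_c^{-1}\circ p$, so $p$ and $f_c^{-1}$ commute. Since $p(1)=1$, every $a\in(0,1)$ satisfies $a\in(0,p(1)]\cap(0,1)$, as Theorem~\ref{thm:completacio-general} requires. Corollary~\ref{cor:commute} then gives the characterization: $T$ is a continuous t-norm with $T(x,x)=f(x)$ if and only if $T$ is Archimedean with generator $t^*$ from Eq.~(\ref{eq:t_star}), where
\begin{equation*}
t_a^n(x)=\sum_{k=0}^n\binom{n}{k}t_a\bigl(f_c^{-n}(p^k(x))\bigr).
\end{equation*}

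The second step is the simplification. Since $p^k=\mathrm{id}$ for all $k$, each summand equals $t_a(f_c^{-n}(x))$. The binomial identity $\sum_{k=0}^n\binom{n}{k}=2^n$ then yields Eq.~(\ref{eq:cor:p(x)=x}). Alternatively, one can argue directly from Corollary~\ref{cor:diag}: every word $w\in S_{n,k}$ has $g_w=f_c^{-n}$, and $\sum_k|S_{n,k}|=2^n$.

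The last step is to specialize the consistency condition Eq.~(\ref{eq:consistency_general}). With $p(x)=x$ it reads $t_a(f(x))=t_a(x)+t_a(x)=2t_a(x)$ on $[f_c^{-1}(a),1]$. The remark after Theorem~\ref{thm:completacio-general} shows this is well posed: for such $x$ we have $x\geq f_c^{-1}(a)\geq a$, so all three arguments lie in $[a,1]$.

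I do not expect a real obstacle here. The only point needing care is confirming that the iterates $p^k$ in Corollary~\ref{cor:commute} are the plain compositional powers, so that $p^k=\mathrm{id}$. The remaining steps are direct substitutions.
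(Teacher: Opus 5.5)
Your proposal is correct and follows the paper's proof: apply Corollary~\ref{cor:commute} with $p=\mathrm{id}$, use $p^k=\mathrm{id}$, and collapse the binomial sum to $2^n t_a(f_c^{-n}(x))$. Your explicit checks of the hypotheses ($p(x)\geq x$, the commutation, $a\in(0,p(1)]\cap(0,1)$) and of the reduced consistency condition $t_a(f(x))=2\,t_a(x)$ only spell out what the paper leaves implicit.
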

\begin{proof}
    In this case, we can apply Corollary \ref{cor:commute}, so we obtain 
    \begin{equation*}
        t_a^n(x)= 
         \displaystyle\sum_{k=0}^n \ \binom{n}{k}t_a(f_c^{-n}( p^k(x)))=
         t_a(f_c^{-n}(x))\displaystyle\sum_{k=0}^n \ \binom{n}{k}=2^nt_a(f_c^{-n}(x)),
    \end{equation*}
    if $ f^{n}(a) \le x < f^{\,n-1}(a),\ 1 \le n \le n_f$, since $p^k=\mathrm{id}$ for all $k$.
\end{proof}
Eq.~(\ref{eq:cor:p(x)=x}) has the same form as Eq.~(\ref{eq:schroder}), the only apparent difference being the interval on which the arbitrary piece of the generator is chosen, namely $[x_1,x_0]$ in Theorem~\ref{theorem:alsina:diagonal} and $[a,1]$ here. The two descriptions agree because $t_a$ is not free on all of $[a,1]$. Indeed, for $p(x)=x$ the consistency condition Eq.~(\ref{eq:consistency_general}) becomes $t_a(f(x))=2\,t_a(x)$ for all $x \in [f_c^{-1}(a),1]$, which can be solved for $t_a(x)$ and, proceeding as in the proof of Theorem~\ref{theorem:alsina:diagonal}, iterated along the increasing sequence $f_c^{-n}(a)$. In this way $t_a$ is determined on the whole of $[a,1)$ by its restriction to $[a,f_c^{-1}(a)]$, which is arbitrary apart from being continuous and strictly decreasing with $t_a(a)=1$ and $t_a(f_c^{-1}(a))=\frac{1}{2}$, and $t_a(1)=0$. Hence the arbitrary part is a single interval in both descriptions, and taking $x_1=a$, so that $x_0=f_c^{-1}(a)$, and $\overline{t}=2t_a$, we recover Eq.~(\ref{eq:schroder}).

Let us highlight that this is possible because the two arguments $x$ and $p(x)$ in Eq.~(\ref{eq:consistency_general}) coincide when $p$ is the identity, so the condition involves a single unknown value and can be solved upward. For any other $p$ the three arguments are distinct, the value obtained depends on the position of $p$ with respect to the diagonal, and the same argument does not apply. Whether the freedom left by Theorem~\ref{thm:completacio-general} admits a description on a single interval for curves other than the diagonal is therefore an open question, which we leave as future work.

We close this section with a concrete example to illustrate the results obtained, in which $p$ and $f$ are taken to be power functions.

\begin{example}\label{example:aczel-alsina}
Let $0<\alpha\le 1$, $r>0$, $p(x)=x^{\alpha}$ and $f(x)=x^{\beta}$ for all
$x\in[0,1]$, where $\beta=(1+\alpha^{r})^{1/r}$. Note that $\beta>1$ since
$\alpha>0$, and $\beta>\alpha$ since $\beta^{r}=1+\alpha^{r}>\alpha^{r}$. Then $p$
is continuous, strictly increasing with $p(0)=0$, $p(1)=1$ and $p(x)\in(0,1)$
for all $x\in(0,1)$, and $f$ is a $p$-curve with zero threshold $c=0$. Indeed,
$f(0)=0$, $f(1)=1=p(1)$, $f$ is strictly increasing, and for all $x\in(0,1)$
both $x^{\beta}<x$ (since $\beta>1$) and $x^{\beta}<x^{\alpha}$ (since
$\beta>\alpha$), so $f(x)<\min\{x,p(x)\}$. As $c=0$ we have $n_f=+\infty$ and
every continuous completion is strict.

The restriction $f_c=f$ is invertible with $f_c^{-1}(x)=x^{1/\beta}$ and
$p( f_c^{-1}(x))=x^{\alpha/\beta}$, and the iterates are
$$
f^{n}(x)=x^{\beta^{n}},\qquad
f_c^{-n}(x)=x^{1/\beta^{n}},\qquad
p^{n}(x)=x^{\alpha^{n}},
$$
for all $x\in[0,1]$ and $n\in\mathbb{N}$. Fixing $a\in(0,1)$, the iterates of $a$ under $f$ are
$f^{n}(a)=a^{\beta^{n}}$, so the partition of $(0,1]$ from
Proposition~\ref{prop:properties-f-with-pcurve} is
$$
(0,1]=\bigcup_{n=1}^{\infty}\bigl[a^{\beta^{n}},\,a^{\beta^{n-1}}\bigr)\;\cup\;[a,1].
$$
Since $\alpha \leq 1$ we have $p(x)=x^{\alpha}\geq x$ for all $x \in [0,1]$, so $p$ is a suitable path. We compute $k_x$ explicitly, which shows that the bound $k_x \leq n-1$ is in general not attained. Since $\log_a$ reverses inequalities as $a<1$,
$$
k_x=\inf\Bigl\{k\ge 0:\ x^{\alpha/\beta}\ge a^{\beta^{k}}\Bigr\}
   =\inf\Bigl\{k\ge 0:\ \beta^{k}\ge \tfrac{\alpha}{\beta}\,\log_a x\Bigr\}.
$$
Applying $\log_{\beta}$, we obtain
$\beta^{k}\ge\tfrac{\alpha}{\beta}\log_a x$ if, and only if,
$k\ge\log_{\beta}\!\bigl(\tfrac{\alpha}{\beta}\log_a x\bigr)$, so
$$
k_x=\max\Bigl\{0,\ \bigl\lceil
\log_{\beta}\!\bigl(\tfrac{\alpha}{\beta}\,\log_a x\bigr)\bigr\rceil\Bigr\}.
$$
For $x\in[f^{n}(a),f^{n-1}(a))$ we have
$\log_a x\in\bigl(\beta^{n-1},\beta^{n}\bigr]$, hence
$$
\log_{\beta}\!\bigl(\tfrac{\alpha}{\beta}\,\log_a x\bigr)
\le\log_{\beta}\!\bigl(\tfrac{\alpha}{\beta}\beta^{n}\bigr)
=(n-1)+\log_{\beta}\alpha\le n-1,
$$
since $\log_{\beta}\alpha\le 0$ as $\alpha\le 1<\beta$. For $\alpha=1$ we have
$\log_{\beta}\alpha=0$ and $k_x=n-1$ exactly, recovering the diagonal $p(x)=x$.

Moreover, $p$ and $f_c^{-1}$ commute,
$$
p( f_c^{-1}(x))=\bigl(x^{1/\beta}\bigr)^{\alpha}
=\bigl(x^{\alpha}\bigr)^{1/\beta}=f_c^{-1}( p(x)),
\qquad \text{for all } x\in[0,1],
$$
so Corollary~\ref{cor:commute} applies and the generator is given by
\begin{equation}
t(x)=
\begin{cases}
+\infty, & \text{if } x=0,\\[1em]
\displaystyle\sum_{k=0}^{n}\binom{n}{k}\,t_a\!\bigl(x^{\alpha^{k}/\beta^{n}}\bigr),
& \text{if } x\in[f^{n}(a),f^{n-1}(a)),\ n\ge 1,\\[1em]
t_a(x), & \text{if } x\in[a,1],
\end{cases}
\end{equation}
where $t_a:[a,1]\to[0,1]$ is a continuous strictly decreasing function with
$t_a(1)=0$, $t_a(a)=1$, and satisfying the consistency condition
\begin{equation}\label{eq:consistency_AA}
t_a\!\bigl(x^{\beta}\bigr)=t_a(x)+t_a\!\bigl(x^{\alpha}\bigr),
\qquad \text{for all } x\in\bigl[a^{1/\beta},1\bigr],
\end{equation}
since $f_c^{-1}(a)=a^{1/\beta}$.

Taking $t_a(x)=\bigl(\log_a x\bigr)^{r}$, which is continuous, strictly
decreasing on $[a,1]$ with $t_a(1)=0$, $t_a(a)=1$, and satisfies
Eq.~(\ref{eq:consistency_AA}) since, by the homogeneity
$t_a(x^{c})=c^{r}t_a(x)$ for all $c>0$,
$$
t_a(x)+t_a\!\bigl(x^{\alpha}\bigr)=(1+\alpha^{r})\,t_a(x)=\beta^{r}t_a(x)
=t_a\!\bigl(x^{\beta}\bigr),
$$
the sum reduces by the same homogeneity together with the
binomial theorem,
$$
\sum_{k=0}^{n}\binom{n}{k}\,t_a\!\bigl(x^{\alpha^{k}/\beta^{n}}\bigr)
=\frac{t_a(x)}{\beta^{nr}}\sum_{k=0}^{n}\binom{n}{k}\alpha^{kr}
=\frac{t_a(x)}{\beta^{nr}}(1+\alpha^{r})^{n}
=t_a(x),
$$
where the last equality uses $\beta^{r}=1+\alpha^{r}$. Hence $t(x)=(\log_a x)^{r}$
for all $x\in(0,1]$ and the reconstructed t-norm is the Acz\'el--Alsina
t-norm of parameter $r$ described in Example~\ref{ex:tnorms},
$$
T(x,y)=t^{-1}\bigl(t(x)+t(y)\bigr)
=a^{\bigl((\log_a x)^{r}+(\log_a y)^{r}\bigr)^{1/r}},
\qquad \text{for all } x,y\in[0,1],
$$
which is consistent with the fact that both $f$ and $t_a$ were constructed
precisely from it, since $T^{AA}_r(x,x^{\alpha})=x^{(1+\alpha^{r})^{1/r}}=x^{\beta}$.
For $r=1$ we obtain $\beta=1+\alpha$ and we recover the product t-norm \TP.

Nonetheless, there are other valid choices of $t_a$. Assume $\alpha<1$ and
observe that, in the variable $u=\log_a x$, the maps $x \mapsto x^{\alpha}$ and
$x \mapsto x^{\beta}$ are given by $u \mapsto \alpha u$ and $u \mapsto \beta u$,
since $\log_a(x^{c})=c\log_a x$ for all $c>0$. Writing $g(u)=t_a(a^{u})$ for all
$u \in [0,1]$, Eq.~(\ref{eq:consistency_AA}) becomes
$$
g(\beta u)=g(u)+g(\alpha u), \qquad \text{for all } u\in\bigl[0,\tfrac{1}{\beta}\bigr],
$$
which is a linear functional equation in a single variable. The choice
$t_a(x)=(\log_a x)^{r}$ corresponds to the particular solution $g(u)=u^{r}$, which
satisfies the equation because $\beta^{r}=1+\alpha^{r}$. Looking now for solutions
of the form $g(u)=u^{r}Q(u)$, the same identity is preserved when
$Q\colon(0,+\infty)\to(0,+\infty)$ is a continuous function which remains unchanged by both
multiplications, that is
$$
Q(\beta u)=Q(\alpha u)=Q(u), \qquad \text{for all } u>0.
$$
As in Example~\ref{example:completion:1}, writing $Q(u)=P(\ln u)$ turns these
equalities into the requirement that $P$ be a continuous function having
$\ln\beta$ and $-\ln\alpha$ as periods, so a non-constant $P$ exists if, and
only if, their quotient $\rho=\frac{-\ln\alpha}{\ln\beta}$ is rational. Writing
$\rho=\frac{l}{m}$ in lowest terms and $\omega=\frac{\ln\beta}{m}$, we get
$\ln\beta=m\omega$ and $-\ln\alpha=l\omega$, so any continuous $P$ of period
$\omega$ is admissible. Since $\rho=\frac{-r\ln\alpha}{\ln(1+\alpha^{r})}$
is continuous in $\alpha$ and unbounded, rational values occur for infinitely many
$\alpha$. For instance, $\rho=1$ amounts to $\alpha\beta=1$, that is
$\alpha^{2r}+\alpha^{r}=1$, whose solution is
$\alpha=\bigl(\frac{\sqrt{5}-1}{2}\bigr)^{1/r}$, with $\omega=\ln\beta$.
Fixing such an $\alpha$ and choosing $P$ to be a sinusoid of period $\omega$, we
obtain
\begin{equation}\label{eq:ta:oscillating:AA}
t_a(x)=\bigl(\log_a x\bigr)^{r}
\left(1+\varepsilon\sin\left(\frac{2\pi}{\omega}\ln\log_a x\right)\right),
\qquad \text{for all } x \in [a,1),
\end{equation}
with the understanding $t_a(1)=0$, which is the limit at $x=1$ because the sine is bounded.
This function satisfies $t_a(a)=1$, since $\ln 1=0$, and is strictly decreasing
whenever $|\varepsilon|<\bigl(1+(2\pi/(r\omega))^{2}\bigr)^{-1/2}$, because in the
variable $s=\ln\log_a x$ it takes the form
$e^{rs}\bigl(1+\varepsilon\sin(2\pi s/\omega)\bigr)$, whose derivative is positive
under this bound. As in Example~\ref{example:completion:1}, for these choices the
generator of the corresponding completion is still given by a closed formula.
Indeed, in the variable $u=\log_a x$ the arguments appearing in the binomial sum
satisfy $\ln\bigl(\alpha^{k}u/\beta^{n}\bigr)=\ln u-(kl+nm)\omega$, so $Q$ takes the value
$Q(u)$ at all of them, only the factors $\bigl(\alpha^{k}u/\beta^{n}\bigr)^{r}$
contribute and the sum collapses as before. Consequently, $t$ is given on the whole
of $(0,1]$ by the same expression as $t_a$ in Eq.~(\ref{eq:ta:oscillating:AA}),
with $t(0)=+\infty$. The corresponding t-norms are strict and satisfy
$T(x,x^{\alpha})=x^{\beta}$ for all $x\in[0,1]$, yet for $\varepsilon\neq 0$ they
differ from $T^{AA}_{r}$, since otherwise $t$ would be a positive multiple of
$(\log_a x)^{r}$ and $Q$ would be constant. Figure~\ref{fig:example2} displays some level curves of these t-norms for different values of $\varepsilon$, where it can be seen that for $\varepsilon \neq 0$ they coincide with $T^{AA}_{r}$ on the curve $(x,p(x))$, whereas elsewhere they differ.

\begin{figure}
\centering
\includegraphics[scale=0.85]{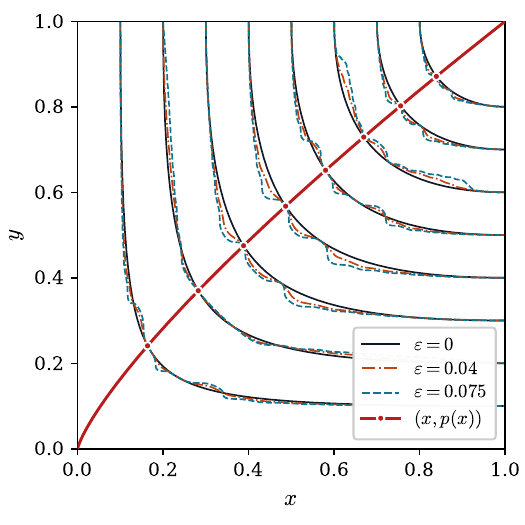}
\caption{Some level curves of the t-norms of Example~\ref{example:aczel-alsina} for $r=2$, $\rho=1$ and different values of $\varepsilon$, the case $\varepsilon=0$ being the Acz\'el--Alsina t-norm $T^{AA}_{2}$. All these operators are strict, so they only vanish on the boundary of the unit square. The red curve is $(x,p(x))$ with $p(x)=x^{\alpha}$, on which all of them coincide, whereas they differ elsewhere, and the marked points are its intersections with the level curves displayed.}\label{fig:example2}
\end{figure}
\end{example}

\section{Unique completion from a $p$-curve and a level curve}\label{sec:unique}

We have already discussed that the completions obtained in Sections~\ref{sec:constant} and~\ref{sec:general} are not unique, since they depend on the choice of $t_a$ on $[a,1]$. However, this is precisely the piece of the generator that a pre-t-norm known above a level curve of value $a$ determines, by Theorem~\ref{thm:generatorabovelevelcurve}, and which in turn leaves the operator undetermined below that curve. The two regions are therefore complementary, and we devote this section to combining them. Our first result shows that they admit a common completion exactly when $t_a$ satisfies Eq.~(\ref{eq:consistency_general}), in which case the completion is unique.

\begin{proposition}\label{prop:compatibility}
Let $a \in (0,1)$, let $l\colon[a,1]\to[a,1]$ be a continuous strictly decreasing function with $l^{-1}=l$, $l(a)=1$ and $l(1)=a$, and let $\TA\colon A \to [0,1]$ be a continuous cancellative pre-t-norm defined on $A=\{(x,y)\in[a,1]^2 \mid y \geq l(x)\}$ with $\TA(x,l(x))=a$ for all $x \in [a,1]$. Let $t_a$ be the additive generator of $\TA$ with $t_a(a)=1$ given by Theorem~\ref{thm:generatorabovelevelcurve}, and let $f$ be a $p$-curve with $a \in (0,p(1)]$ such that $p$ is a suitable path. Then the following statements are equivalent:
\begin{enumerate}[label=(\roman*)]
\item There exists a continuous t-norm $T$ such that $T\mid_A=\TA$ and $T(x,p(x))=f(x)$ for all $x \in [0,1]$.
\item The function $t_a$ satisfies Eq.~(\ref{eq:consistency_general}).
\end{enumerate}
In this case $T$ is Archimedean, its additive generator is the associated candidate generator of Eq.~(\ref{eq:t_star}) built from $t_a$, and it is the unique continuous t-norm satisfying (i).
\end{proposition}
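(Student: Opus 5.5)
The plan is to reduce both implications to Theorem~\ref{thm:completacio-general} and to use the uniqueness part of Theorem~\ref{thm:generatorabovelevelcurve}, which pins down the restriction of any compatible generator to $[a,1]$.

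For (i)$\Rightarrow$(ii), let $T$ be a continuous t-norm with $T\mid_A=\TA$ and $T(x,p(x))=f(x)$. Since $f$ is a $p$-curve, Proposition~\ref{prop:archimedean} shows that $T$ is Archimedean. Let $t$ be its additive generator, normalized so that $t(a)=1$. I then check that $t\mid_{[a,1]}$ generates $\TA$ on $A$. For $(x,y)\in A$ we have $\TA(x,y)\geq \TA(x,l(x))=a>0$ by monotonicity, so $t(x)+t(y)\leq t(a)=1<t(0)$. Hence $T(x,y)=t^{-1}(t(x)+t(y))$ with every argument in $[a,1]$. Moreover, $t\mid_{[a,1]}$ is continuous and strictly decreasing, maps into $[0,1]$, and satisfies $t(1)=0$, $t(a)=1$. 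The uniqueness in Theorem~\ref{thm:generatorabovelevelcurve} therefore gives $t\mid_{[a,1]}=t_a$. Theorem~\ref{thm:completacio-general}, applied to $T$, states that the restriction of its normalized generator satisfies Eq.~(\ref{eq:consistency_general}), so (ii) holds.

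For (ii)$\Rightarrow$(i), the function $t_a$ from Theorem~\ref{thm:generatorabovelevelcurve} is continuous and strictly decreasing with $t_a(1)=0$ and $t_a(a)=1$, and by assumption it satisfies Eq.~(\ref{eq:consistency_general}). The sufficiency part of Theorem~\ref{thm:completacio-general} then makes the associated candidate generator $t^*$ an additive generator, and the t-norm $T$ it generates satisfies $T(x,p(x))=f(x)$. Since $t^*\mid_{[a,1]}=t_a$, the same computation as above gives, for $(x,y)\in A$, $T(x,y)=t_a^{-1}(t_a(x)+t_a(y))=\TA(x,y)$. Here we use that $t_a(x)+t_a(y)\le 1=t^*(a)$, because $\TA(x,y)\ge a$ and $t_a$ generates $\TA$. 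So $T\mid_A=\TA$.

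For the final claims, any $T$ satisfying (i) is Archimedean by the first part. Its normalized generator restricts to $t_a$ on $[a,1]$, and Theorem~\ref{thm:completacio-general} identifies that generator with the associated candidate generator $t^*$ built from that restriction. This recursive construction depends only on $t_a$, $f$ and $p$, so $t^*$ is uniquely determined, and therefore so is $T$.

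The main obstacle is making sure that the values of $\TA$ on $A$ never fall below $a$, so that the pseudo-inverse reduces to the genuine inverse of $t_a$ and the uniqueness statement of Theorem~\ref{thm:generatorabovelevelcurve} applies to the restricted generator. This follows from monotonicity of $\TA$ together with $\TA(x,l(x))=a$. A secondary point is that Theorem~\ref{thm:completacio-general} requires $a\in(0,p(1)]\cap(0,1)$, which is given.
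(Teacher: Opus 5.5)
Your proof is correct. The implication (ii)$\Rightarrow$(i) and the uniqueness argument coincide with the paper's, but you prove (i)$\Rightarrow$(ii) by a different route.

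The paper verifies Eq.~(\ref{eq:consistency_general}) directly from $T_A$. For $x\in[f_c^{-1}(a),1]$ it gets $p(x)\ge a$ from the suitable path condition and $f(x)\ge a$, deduces that $(x,p(x))\in A$, and then reads the identity off $f(x)=T_A(x,p(x))=t_a^{-1}(t_a(x)+t_a(p(x)))$. You instead use the uniqueness part of Theorem~\ref{thm:generatorabovelevelcurve} to identify the restriction to $[a,1]$ of the normalized generator of $T$ with $t_a$. You then import the consistency condition from the necessity half of Theorem~\ref{thm:completacio-general}.

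Your route has two advantages. First, it never has to place curve points in $A$. The paper's step ``$f(x)\ge a$, hence $p(x)\ge l(x)$'' needs a small extra argument at $x=f_c^{-1}(a)$, for instance conditional cancellativity of $T$, to exclude $p(x)<l(x)$ with $T(x,p(x))=T(x,l(x))=a$. Second, you spell out why the restricted generator generates $T_A$: $T_A\ge a$ on $A$, so the pseudo-inverse is a genuine inverse into $[a,1]$. The paper's uniqueness paragraph uses this fact without comment. You also reuse this single identification for both (i)$\Rightarrow$(ii) and uniqueness.

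The paper's route, in exchange, is more elementary for that implication. It makes transparent that (ii) says exactly that $T_A$ already reproduces $f$ at the curve points $(x,p(x))$ with $x\in[f_c^{-1}(a),1]$, which lie in $A$. It does this without invoking the reconstruction machinery of Theorem~\ref{thm:completacio-general}.
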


\begin{proof}
Assume (i) and let $x \in [f_c^{-1}(a),1]$. Then $x \geq a$ and, since $p$ is a suitable path, $k_{f(x)}=0$, so $p(x)=p(f_c^{-1}(f(x))) \geq a$. Moreover $T(x,p(x))=f(x) \geq f(f_c^{-1}(a))=a$, hence $p(x) \geq l(x)$ and $(x,p(x)) \in A$. Since $t_a$ generates $\TA$ on $A$ and $t_a(x)+t_a(p(x)) \leq 1$, we obtain
$$t_a(f(x))=t_a\bigl(\TA(x,p(x))\bigr)=t_a(x)+t_a(p(x)),$$
which is (ii).

Assume now (ii). By Theorem~\ref{thm:completacio-general} there exists a continuous Archimedean t-norm $T$ whose additive generator $t$ is the associated candidate generator built from $t_a$ and which satisfies $T(x,p(x))=f(x)$ for all $x \in [0,1]$. If $(x,y) \in A$ then $t_a(x)+t_a(y) \leq 1 < t(0)$, so
$$T(x,y)=t^{(-1)}\bigl(t_a(x)+t_a(y)\bigr)=t_a^{-1}\bigl(t_a(x)+t_a(y)\bigr)=\TA(x,y),$$
that is, $T\mid_A=\TA$ and (i) holds.

Finally, let $T$ be any continuous t-norm satisfying (i). By Proposition~\ref{prop:archimedean} it is Archimedean, so by Theorem~\ref{th:representation_archimedeantnorms} it admits a continuous additive generator $t$, unique up to a positive multiplicative constant, which we normalize so that $t(a)=1$. Since $T(x,p(x))=f(x)$ for all $x \in [0,1]$, Theorem~\ref{thm:completacio-general} gives that $t$ is the associated candidate generator built from its own restriction to $[a,1]$. Since $T\mid_A=\TA$, this restriction is an additive generator of $\TA$ with value $1$ at $a$, so it coincides with $t_a$ by Theorem~\ref{thm:generatorabovelevelcurve}. Hence $t$, and therefore $T$, is uniquely determined.
\end{proof}

The above proposition uses the region above a level curve as a way of determining the function $t_a$ left arbitrary in Theorem~\ref{thm:completacio-general}, so one may wonder whether every such function can be obtained in this way. The following result shows that this is indeed the case, since from any continuous, strictly decreasing $t_a\colon[a,1]\to[0,1]$ with $t_a(a)=1$ and $t_a(1)=0$ one can construct a level curve and a pre-t-norm above it whose additive generator is precisely $t_a$.

\begin{proposition}\label{prop:converse}
Let $a \in (0,1)$ and let $t_a\colon[a,1]\to[0,1]$ be a continuous, strictly decreasing function with $t_a(a)=1$ and $t_a(1)=0$. Define
\begin{equation*}
l(x)=t_a^{-1}\bigl(1-t_a(x)\bigr), \qquad \text{for all } x \in [a,1],
\end{equation*}
and let $A=\{(x,y)\in[a,1]^2 \mid y \geq l(x)\}$. Then $l$ is a continuous, strictly decreasing function with $l^{-1}=l$, $l(a)=1$ and $l(1)=a$, and
\begin{equation*}
\TA(x,y)=t_a^{-1}\bigl(t_a(x)+t_a(y)\bigr), \qquad \text{for all } (x,y) \in A,
\end{equation*}
is a continuous, cancellative pre-t-norm with $\TA(x,l(x))=a$ for all $x \in [a,1]$ whose additive generator with $t_a(a)=1$ is $t_a$.
\end{proposition}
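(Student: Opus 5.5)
The plan is to verify the properties of $l$ first, then those of $\TA$, and to finish by invoking the uniqueness part of Theorem~\ref{thm:generatorabovelevelcurve}.

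\emph{The function $l$.} Since $t_a$ is a continuous strictly decreasing bijection $[a,1]\to[0,1]$, its inverse $t_a^{-1}\colon[0,1]\to[a,1]$ is continuous and strictly decreasing. The map $x\mapsto 1-t_a(x)$ is continuous, strictly increasing and takes values in $[0,1]$, so $l$ is well defined, continuous and strictly decreasing. We have $l(a)=t_a^{-1}(0)=1$ and $l(1)=t_a^{-1}(1)=a$. Involutivity follows from $t_a(l(x))=1-t_a(x)$, which gives
\[
l(l(x))=t_a^{-1}\bigl(1-(1-t_a(x))\bigr)=x .
\]
We also record the description $A=\{(x,y)\in[a,1]^2 \mid t_a(x)+t_a(y)\le 1\}$. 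Indeed, since $t_a$ is decreasing, $y\ge l(x)$ holds exactly when $t_a(y)\le t_a(l(x))=1-t_a(x)$.

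\emph{The operator $\TA$.} By the description of $A$, the sum $t_a(x)+t_a(y)$ lies in $[0,1]$ on $A$, so $\TA$ is well defined, continuous, and takes values in $[a,1]$. On the level curve we get $\TA(x,l(x))=t_a^{-1}(1)=a$. Next we check the axioms of Definition~\ref{def:TA} one by one.
\begin{itemize}
\item Commutativity holds because $A$ is symmetric, which follows from the description of $A$.
\item Monotonicity in each variable, and indeed cancellativity, follow because $t_a$ and $t_a^{-1}$ are strictly decreasing. Hence $\TA(x,y)=\TA(x,z)$ forces $t_a(y)=t_a(z)$, that is $y=z$.
\item The boundary condition follows from $t_a(1)=0$.
\item For associativity, whenever all four pairs lie in $A$, both sides equal $t_a^{-1}\bigl(t_a(x)+t_a(y)+t_a(z)\bigr)$. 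Here we use $t_a(\TA(u,v))=t_a(u)+t_a(v)$, which is legitimate because the sum lies in $[0,1]$.
\end{itemize}

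\emph{The generator.} Theorem~\ref{thm:generatorabovelevelcurve} now applies to $\TA$ and yields a unique continuous strictly decreasing generator with value $1$ at $a$. Since $t_a$ itself has these properties and represents $\TA$ on $A$, uniqueness identifies that generator with $t_a$.

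There is no real obstacle. The only point requiring care is the associativity bookkeeping, namely ensuring that the inner sums stay within $[0,1]$ so that $t_a\circ t_a^{-1}$ is the identity; this is guaranteed by the membership conditions in $A$.
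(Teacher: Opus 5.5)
Your proposal is correct and follows essentially the same route as the paper: you obtain the properties of $l$ from $t_a(l(x))=1-t_a(x)$, rewrite $A$ as $\{(x,y)\in[a,1]^2 \mid t_a(x)+t_a(y)\le 1\}$, check the pre-t-norm axioms, continuity, cancellativity and $\TA(x,l(x))=a$ directly from the representation, and conclude with the uniqueness part of Theorem~\ref{thm:generatorabovelevelcurve}. Your associativity check is slightly more explicit than the paper's, which only appeals to the symmetry of the condition $t_a(x)+t_a(y)+t_a(z)\le 1$ in the three variables.
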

\begin{proof}
Since $t_a$ is a continuous, strictly decreasing bijection from $[a,1]$ onto $[0,1]$, the value $1-t_a(x)$ belongs to $[0,1]$ for all $x \in [a,1]$ and $l$ is well defined. It is continuous and strictly decreasing, because it is the composition of $t_a$, the function $u \mapsto 1-u$ and $t_a^{-1}$, and $l(a)=t_a^{-1}(0)=1$, $l(1)=t_a^{-1}(1)=a$. Moreover $t_a(l(l(x)))=1-t_a(l(x))=t_a(x)$, so $l(l(x))=x$ and $l^{-1}=l$.

Since $t_a$ is decreasing, for $(x,y) \in [a,1]^2$ the inequality $y \geq l(x)$ holds if, and only if, $t_a(y) \leq t_a(l(x))=1-t_a(x)$, that is, if and only if $t_a(x)+t_a(y) \leq 1$. Hence the sum lies in the image of $t_a$ for every $(x,y) \in A$ and $\TA$ is well defined. Commutativity, associativity and monotonicity follow from the representation, the first two because $t_a(x)+t_a(y)+t_a(z) \leq 1$ is symmetric in the three variables, and $\TA(x,1)=t_a^{-1}(t_a(x))=x$ since $t_a(1)=0$. It is cancellative because $t_a$ is injective, continuous because $t_a$ and $t_a^{-1}$ are continuous, and $\TA(x,l(x))=t_a^{-1}(1)=a$ for all $x \in [a,1]$ by the definition of $l$. Finally, $t_a$ is an additive generator of $\TA$ with $t_a(a)=1$, and it is the only one by Theorem~\ref{thm:generatorabovelevelcurve}.
\end{proof}

Note that the previous construction starts only from $t_a$, so the pre-t-norm obtained assigns to the points $(x,p(x))$ of the curve lying in $A$ the values $t_a^{-1}(t_a(x)+t_a(p(x)))$, which coincide with $f(x)$ exactly when $t_a$ satisfies Eq.~(\ref{eq:consistency_general}). Combining both propositions we obtain the following characterization.

\begin{corollary}
Under the hypotheses of Proposition~\ref{prop:converse}, let $f$ be a $p$-curve with $a \in (0,p(1)]$ such that $p$ is a suitable path. Then there exists a continuous t-norm $T$ with $T\mid_A=\TA$ and $T(x,p(x))=f(x)$ for all $x \in [0,1]$ if, and only if, $t_a$ satisfies Eq.~(\ref{eq:consistency_general}), in which case $T$ is unique.
\end{corollary}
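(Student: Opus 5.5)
The corollary follows by chaining Proposition~\ref{prop:converse} with Proposition~\ref{prop:compatibility}. We start from the hypotheses of Proposition~\ref{prop:converse}, that is, $a\in(0,1)$ and a continuous, strictly decreasing $t_a\colon[a,1]\to[0,1]$ with $t_a(a)=1$ and $t_a(1)=0$. Proposition~\ref{prop:converse} then provides a function $l$ and a pre-t-norm $\TA$ on $A=\{(x,y)\in[a,1]^2\mid y\geq l(x)\}$ with the following properties:
\begin{itemize}
\item $l$ is continuous, strictly decreasing, involutive, with $l(a)=1$ and $l(1)=a$;
\item $\TA$ is continuous and cancellative, with $\TA(x,l(x))=a$ for all $x\in[a,1]$;
\item the additive generator of $\TA$ normalized at $a$, given by Theorem~\ref{thm:generatorabovelevelcurve}, is exactly $t_a$.
\end{itemize}

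These are precisely the hypotheses on $l$ and $\TA$ in Proposition~\ref{prop:compatibility}. The remaining hypotheses, that $f$ is a $p$-curve with $a\in(0,p(1)]$ and that $p$ is a suitable path, are assumed in the corollary. There is one bookkeeping point to settle: the function called $t_a$ in Proposition~\ref{prop:compatibility} is defined as the generator obtained from Theorem~\ref{thm:generatorabovelevelcurve}. By the uniqueness part of that theorem, recalled at the end of the proof of Proposition~\ref{prop:converse}, it coincides with the given $t_a$, so the consistency condition Eq.~(\ref{eq:consistency_general}) refers to the same function in both statements.

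Proposition~\ref{prop:compatibility} then gives the equivalence directly. A continuous t-norm $T$ with $T\mid_A=\TA$ and $T(x,p(x))=f(x)$ for all $x\in[0,1]$ exists if and only if $t_a$ satisfies Eq.~(\ref{eq:consistency_general}). The final clause of that proposition gives uniqueness: in that case $T$ is Archimedean, its generator is the associated candidate generator of Eq.~(\ref{eq:t_star}) built from $t_a$, and $T$ is the unique such t-norm.

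There is no real obstacle. The only step needing care is the identification of the two uses of $t_a$ described above.
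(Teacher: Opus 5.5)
Your proposal is correct and follows the same route as the paper, which presents the corollary as the direct combination of Proposition~\ref{prop:converse} (supplying $l$ and $\TA$ that satisfy the hypotheses of Proposition~\ref{prop:compatibility}, with generator $t_a$) and Proposition~\ref{prop:compatibility} (giving the equivalence and uniqueness). Your remark that the uniqueness part of Theorem~\ref{thm:generatorabovelevelcurve} identifies the two occurrences of $t_a$ is exactly the bookkeeping the paper leaves implicit.
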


Therefore, choosing the function $t_a$ is the same as fixing the operator above a level curve of value $a$, and the two descriptions of the family of completions are equivalent, each choice of one of them determining the other.

We finish this section by revisiting the two examples of Sections~\ref{sec:constant} and~\ref{sec:general}, where the values on the curve left a whole family of completions. Adding the values above a level curve of value $a$ selects exactly one member of each family.

\begin{example}\label{example:unique:lk}
Let $a$, $p$ and $f$ be as in Example~\ref{example:completion:1} and let $l(x)=1+a-x$ for all $x \in [a,1]$, which is the level curve of value $a$ of \TLK. Taking $\TA=\TLK\mid_A$, Theorem~\ref{thm:generatorabovelevelcurve} gives $t_a(x)=\frac{1-x}{1-a}$ for all $x \in [a,1]$, which satisfies Eq.~(\ref{eq:consistency_general}) as we checked there, so Proposition~\ref{prop:compatibility} gives \TLK{} as the unique continuous completion. In particular, the operators of Figure~\ref{fig:example1} with $\varepsilon \neq 0$ are discarded.
\end{example}

\begin{example}\label{example:unique:aa}
Let $a$, $r$, $p$ and $f$ be as in Example~\ref{example:aczel-alsina} and let $l(x)=a^{\left(1-(\log_a x)^{r}\right)^{1/r}}$ for all $x \in [a,1]$, which is the level curve of value $a$ of $T^{AA}_{r}$. Taking $\TA=T^{AA}_{r}\mid_A$, Theorem~\ref{thm:generatorabovelevelcurve} gives $t_a(x)=(\log_a x)^{r}$ for all $x \in [a,1]$, which satisfies Eq.~(\ref{eq:consistency_general}) as we checked there, so Proposition~\ref{prop:compatibility} gives $T^{AA}_{r}$ as the unique continuous completion, and the operators of Figure~\ref{fig:example2} with $\varepsilon \neq 0$ are discarded.
\end{example}

\section{Conclusions}\label{sec:conclusions}
In this paper we have studied the continuous completion problem for Archimedean t-norms known along a parametric curve. We have first formulated the problem for an arbitrary curve $\gamma$ in terms of the functional equation satisfied by an additive generator, and we have then focused on the case $\gamma(\theta)=(\theta,p(\theta))$ with $p$ continuous and increasing, a setting that generalizes the classical diagonal and horizontal section cases. Within it, we have considered first the case where $p$ is bounded away from zero, which is easier to solve, provides a closed expression for the additive generator and already generalizes the horizontal section case. We have then extended the analysis to a more general setting by imposing the suitable path condition, which is what allows us to still use a recursive argument. In both cases we have characterized all the continuous t-norms with a given $p$-curve section, the second result covering the first one as well as the two classical cases.

Our results show that the recursive argument on which they rely is a simple idea, but it turns out to be considerably more flexible and powerful than the classical results may suggest, since it can be applied to a large family of curves and not only to the diagonal and the horizontal section. Further, although in the more general case the expression of the additive generator is given in terms of a recursion, we have discussed several particular cases in which a closed formula can be achieved, namely when the curve lies above the diagonal, and more explicitly when in addition $p$ and the inverse of the corresponding section commute.

As in the classical cases, the completions are not unique, and we have provided concrete examples in which infinitely many continuous t-norms share the same section along a curve, which reveals an interesting behavior of these operators. Furthermore, we have shown that our results combine naturally with the reconstruction of a pre-t-norm known above a level curve, and we have proved that the piece of the generator left arbitrary by our theorems is exactly the information that such a region provides. Consequently, both pieces of information together determine a unique completion.

There are, however, some limitations of our approach that suggest natural directions for future work. The main one is the functional equation that the arbitrary piece of the generator has to satisfy, which must be solved separately in each case, as we have done in the examples. We have discussed this in comparison with the diagonal, where the equation can be solved and the arbitrary part reduces to a single subinterval, so a possible direction for future work is to look for more general results without this limitation, which will require studying which conditions must be imposed on $p$, and in particular its position with respect to the diagonal. A second limitation is the class of curves considered, since we have restricted ourselves to those of the form $(x,p(x))$ with $p$ continuous and increasing, so another direction is to relax these conditions or to consider more general curves.

Finally, many regions of the unit square have already been studied in the completion problem, and our results show that combining two of them can determine the operator even when neither does so on its own. This suggests looking more systematically at how the information about a continuous Archimedean t-norm is distributed over the unit square, identifying which regions determine which part of the additive generator, and how the corresponding results can be combined.

\section*{Declaration of competing interest}

The authors declare that they have no known competing financial
interests or personal relationships that could have appeared to
influence the work reported in this paper.

\section*{Acknowledgments}

This work was partially supported by the R+D+i Project PID2020-113870GB-I00-“Desarrollo de herramientas de Soft Computing para la Ayuda al Diagnóstico Clínico y a la Gestión de Emergencias (HESOCODICE)”, funded by MICIU /AEI/10.13039/501100011033/. Raquel Fernandez-Peralta is funded by the EU NextGenerationEU through the Recovery and Resilience Plan for Slovakia under the project No. 09I03-03-V04- 00557.

\bibliographystyle{cas-model2-names}
\bibliography{cas-refs.bib}

\end{document}